\newtheorem{theorem}{Theorem}
\newtheorem{corollary}{Corollary}
\newtheorem{lemma}{Lemma}
\newtheorem{remark}{Remark}
\newtheorem{definition}{Definition}
\newtheorem{problem}{Problem}
\newtheorem{property}{Property}
\title{A General Control Framework for Boolean Networks}
\author{Shiyong~Zhu,~\IEEEmembership{Student Member,~IEEE}, Jianquan Lu$^\ast$,~\IEEEmembership{Senior Member,~IEEE}, Shun-ichi Azuma,~\IEEEmembership{Senior Member,~IEEE}, and Wei Xing Zheng,~\IEEEmembership{Fellow,~IEEE}
\thanks{This work was supported by the National Natural Science Foundation of China under Grant No. 61973078, and ``333 Engineering" Foundation of Jiangsu Province of China under Grant BRA2019260.}
\thanks{$^\dag$Corresponding author: Jianquan Lu.}
\thanks{Shiyong Zhu and Jianquan Lu are with the Department of Systems Science, School of Mathematics, Southeast University, Nanjing 210096, China (email: zhusy0904@gmail.com; jqluma@seu.edu.cn).}
\thanks{Shun-ichi Azuma is with the Graduate School of Engineering, Nagoya University, Furo-cho, Chikusa-ku, Nagoya, 464-8603, Japan (e-mail: shunichi.azuma@mae.nagoya-u.ac.jp).}
\thanks{Wei Xing Zheng is with the School of Computer, Data and Mathematical Sciences, Western Sydney University, Sydney, NSW 2751, Australia (e-mail: w.zheng@westernsydney.edu.au).}
}
\begin{document}
\maketitle
\thispagestyle{empty}
\pagestyle{empty}
\begin{abstract}
  This paper focuses on proposing a general control framework for large-scale Boolean networks (\texttt{BNs}). To release the dependency of node dynamics in traditional framework, the concept of structural controllability for \texttt{BNs} is formalized. A necessary and sufficient criterion is derived for the structural controllability of \texttt{BNs}; it can be verified with $\Theta(n^2)$ time, where $n$ is the number of network nodes. An interesting conclusion is shown as that a \texttt{BN} is structurally controllable if and only if it is structurally fixed-time controllable. Afterwards, the minimum node control problem with respect to structural controllability is proved to be NP-hard for structural \texttt{BNs}. In virtue of the structurally controllable criterion, three difficult control issues can be efficiently addressed and accompanied with some advantages. An open problem-network aggregation for the controllability of \texttt{BNs}-is addressed by a computationally efficient aggregation strategy, which provides an approach to control the minimal number of nodes. In terms of the design of pinning controllers to generate a controllable \texttt{BN}, by utilizing the structurally controllable criterion, the selection procedure for the pinning node set is developed for the first time instead of just checking the controllability under the given pinning control form; the pinning controller is of distributed form, and the time complexity is $\Theta(n2^{3d^{\ast}}+2(n+m)^2)$, where $m$ and $d^\ast$ are respectively the number of generators and the maximum vertex in-degree. With regard to the control design for stabilization in probability of probabilistic \texttt{BNs} (\texttt{PBNs}), an important theorem is proved to reveal the equivalence between several types of stability. The existing difficulties on the stabilization in probability are then solved to some extent via the structurally controllable criterion. Finally, numerical examples are employed to demonstrate several applications of theoretical results.
\end{abstract}

\begin{IEEEkeywords}
Boolean networks, structural controllability, minimum controlled nodes, network aggregation, pinning control, stabilization in probability, NP-hardness.
\end{IEEEkeywords}

\section{Introduction}
Boolean network (\texttt{BN}) is a classical type of discrete-time dynamical models with binary state variables. The research history of \texttt{BNs} traces back to the model of gene regulatory networks, which was originally proposed by Kauffman \cite{kauffman1969jtb437}. Since \texttt{BNs} can describe the qualitative behaviors of many engineering mechanisms, this type of models has been extensively applied to many areas including systems biology \cite{davidson2002science1669,gaozg2017tcns770}, game theory \cite{chengdz2018auto51}, multi-agent systems \cite{fagiolini2013auto2339} and so on. It has prompted scholars to start intensive research with using such canonical discrete-time and two-valued dynamical systems. Let state variables ${\bm x}_i$, $i \in \{1,2,\cdots,n\}$, and control inputs ${\bm u}_j$, $j \in \{1,2,\cdots,m\}$, be either 1 or 0. Then, \texttt{BN} (\ref{e-LN}) and Boolean control network (\texttt{BCN}) (\ref{e-LCN}) can be respectively given as
\begin{equation}\label{e-LN}
\begin{aligned}
{\bm x}_k(t+1)&={\bm f}_k([{\bm x}_i(t)]_{i\in {\bf X}_k}),~k\in\{1,2,\cdots,n\},\\
\end{aligned}
\end{equation}
\begin{equation}\label{e-LCN}
\begin{aligned}
{\bm x}_k(t+1)&=\tilde{{\bm f}}_k([{\bm x}_i(t),{\bm u}_j(t)]_{i\in \tilde{{\bf X}}_k,j\in \tilde{{\bf U}}_k}),~k\in\{1,2,\cdots,n\},
\end{aligned}
\end{equation}
where the minimally represented functions ${\bm f}_k$ and $\tilde{{\bm f}}_k$ respectively capture the dynamics of the $k$-th node of \texttt{BN} (\ref{e-LN}) and \texttt{BCN} (\ref{e-LCN}). While the set
${\bf X}_k$ denotes the index set of functional state variables of ${\bm f}_k$, the sets $\tilde{{\bf X}}_k$ (respectively, $\tilde{{\bf U}}_k$) collects the indices of the functional state variables (respectively, input variables) of $\tilde{{\bm f}}_k$.

Another dramatic breakthrough for the study of \texttt{BNs} is the proposal of algebraic state space representation (\texttt{ASSR}) for \texttt{BNs}, based on the semi-tensor product (\texttt{STP}) of matrices ``$\ltimes$''
\cite{chengdz2011springer}. In this setting, the \texttt{ASSR} of \texttt{BN} (\ref{e-LN}) and \texttt{BCN} (\ref{e-LCN})
are respectively established as
\begin{equation}\label{equ-assr-bn}
x(t+1)=Lx(t),
\end{equation}
\begin{equation}\label{equ-assr-bcn}
x(t+1)=\tilde{L}u(t)x(t),
\end{equation}
where $x(t):=\ltimes_{i=1}^{n}x_i(t)$ and $u(t):=\ltimes_{j=1}^{m}u_j(t)$ with $x_i(t)$ and $u_j(t)$ respectively being the canonical vectors of ${\bm x}_i(t)$ and ${\bm u}_j(t)$.

After that, dynamics analysis and control design for \texttt{BNs} have been widely considered. To just list a few, controllability
\cite{chengdz2009tac1659,margaliot2012aut1218,zhaoy2010scl767,lujq2016ieeetac1658}, observability \cite{chengdz2009tac1659,Valcher2012TAC1390,zhangkz2020tac}, stabilization \cite{guoyq2015auto106,lir2013ieeetac1853}, optimal control \cite{valcher2013tac1258,wuyh2019auto378}, as well as the decoupling problems \cite{Yuyongyuan2019TAC,zhujd2020TAC}.

\subsection{Motivations and Related Works}
As a basic but crucial property for control systems, controllability plays an important role in the study of many control issues, such as the stabilization of unstable systems by feedback control and the optimal control
\cite{sontag2013mathematical}. In the area of \texttt{BNs}, the concept of controllability was defined by Akutsu {\em et al.} for the first time, and the verification for controllability was proved to be NP-hard \cite{akutsu2007jtb670}. In \cite{chengdz2009tac1659}, Cheng and Qi defined the input-state transfer matrix to necessarily and sufficiently characterize the controllability of \texttt{BCNs}; but the dimension of this transfer matrix exponentially increases with the growth of the vertex number, the input number as well as the evolution time. In order to break through this dimension-increasing limitation, two conditions were respectively established by Zhao {\em et al.} through the input-state incidence matrix \cite{zhaoy2010scl767} and by Laschov {\em et al.} via the Perron-Frobenius theory \cite{margaliot2012aut1218}. However, as mentioned in \cite{zhaoy2010scl767}, the \texttt{ASSR} approach was only suitable for checking the controllability of \texttt{BNs} with the vertex number being less than $25$ or so. Even if several attempts to reduce the complexity of controllability were made in succession \cite{liangjl2017tac6012,zhuqx2018tac}, the time complexity still remained at $\Theta(2^{2n})$ at least.

On the one hand, the size of matrix $L$ (respectively, $\tilde{L}$) in \texttt{BN} (\ref{equ-assr-bn}) (respectively, \texttt{BCN} (\ref{equ-assr-bcn})) is $2^n\times 2^n$ (respectively, $2^n\times2^{n+m}$). Since many realistic systems always possess a large mass of nodes, ({\bf P1}) {\bf such exponentially-increasing complexity with respect to (w.r.t.) the growth of nodes is unacceptable for large-scale \texttt{BNs}}. On the other hand, supported by experiments, Azuma {\em et al.} formulated that the identification of network structure is easier to access than that of node dynamics \cite{Azuma2019TCNS464,Azuma2015TCNS179}. Therefore, in some practical senses, the full information of node dynamics cannot be obtained. However, ({\bf P2}) {\bf the minute information of node dynamics is required in all the traditional results on the controllability analysis based on the \texttt{STP} of matrices.} It is thus very necessary to develop an efficient framework for the structural controllability of \texttt{BCNs}.

For a standard linear time-invariant (\texttt{LTI}) system
\begin{equation}\label{equ-LTI}
\frac{\mathrm{d}{\bf x}(t)}{\mathrm{d}t}=A{\bf x}(t)+B{\bf u}(t),
\end{equation}
where vector ${\bf x}(t)=[x_1(t),x_2(t),\cdots,x_n(t)]^\top$ represents the state of a system of $n$ nodes at time $t$, its structural controllability has been well defined
in \cite{linct1974tac-structural} as that \texttt{LTI} system (\ref{equ-LTI}) is called structurally controllable if, there exists a completely controllable pair $(A_0,B_0)$, which processes the same structure as $(A,B)$, i.e.,
$(A_0,B_0)\sim (A,B)$. This definition is partly motivated by the proposition that one can always find a completely controllable pair $(A_0,B_0)$ satisfying $(A_0,B_0)\sim (A,B)$, $\parallel A'-A \parallel\leq \varepsilon$ and
$\parallel B'-B \parallel\leq \varepsilon$ for any $\varepsilon>0$. Supported by this proposition, the inaccuracy of parameters in (\ref{equ-LTI}) can be ignored to some extent. However, this motivation is not true for \texttt{BCNs}. Thus, we need to define the structural controllability of \texttt{BNs} in a similar manner as strongly structural controllability
\cite{mayeda1979siam-strongcontrollability} by resorting of the structural equivalence of \texttt{BCNs} \cite{Azuma2019TCNS464,Azuma2015TCNS179}.

Of course, it leads to another meaningful problem: if \texttt{BCN} (\ref{e-LCN}) is not-structurally-controllable, how to control some nodes so as to make this network structurally controllable? As the injection of control nodes is costly and timely, the minimization of the number of control inputs is of theoretical and practical significance. Thus, we would like to study the following problem.
\begin{problem}\label{problem-minimalnodecontrol}
Given \texttt{BN} (\ref{e-LN}), determine the minimal set $\Lambda^\ast \subseteq \{1,2,\cdots,n\}$ such that the following \texttt{BCN} is structurally controllable:
\begin{equation}\label{equation-LCN}
\left\{\begin{aligned}
&{\bm x}_k(t+1)={\bm f}_k([{\bm x}_i(t)]_{i\in {\bf X}_{k}}), k\not\in \Lambda^\ast,\\
&{\bm x}_k(t+1)={\bm u}_k(t), k\in \Lambda^\ast.
\end{aligned}
\right.
\end{equation}
\end{problem}
A similar attempt for \texttt{LTI} system (\ref{equ-LTI}) was made in \cite{liuyy2011nature167}, and the answer is that the minimal number of inputs is equal to the number of unmatched nodes while maximally matching the graph w.r.t. pair $(A,B)$. Therefore, we also solve the question: ({\bf P3}) {\bf find the effective strategy to solve Problem \ref{problem-minimalnodecontrol}.} Until now, some efforts on problems ({\bf P2}) and Problem \ref{problem-minimalnodecontrol} have been made. For instance, in \cite{Margaliot2019TAC2727}, the controllability of conjunctive \texttt{BNs} (\texttt{CBNs}) was verified in polynomial time; and Problem \ref{problem-minimalnodecontrol} was proved to be NP-hard for \texttt{CBNs}. Nonetheless, logical couplings in \texttt{CBNs} are unique; thus Problem \ref{problem-minimalnodecontrol} is more complex for general \texttt{BCNs}. Besides, an efficient search algorithm for problem ({\bf P3}) is still lacking even if for \texttt{CBNs}.

In the field of \texttt{BNs}, one biology-inspired control strategy is pinning control; it is motivated by the phenomenon that the overall behavior of a large biological system can be derived by controlling a fraction of
nodes. For instance, one can provoke the entire body of a worm via $49$ neurons on average (approximately 16.5\%) \cite{cho2011scientific}. The concept of pinning controllability for \texttt{BNs} was firstly proposed by Lu {\em et al.} \cite{lujq2016ieeetac1658} by injecting the external inputs only on a small fraction of network nodes:
\begin{equation}\label{e-pinning-LN}
\left\{\begin{aligned}
{\bm x}_k(t+1)&={\bm u}_k \oplus_k {\bm f}_k([{\bm x}_i(t)]_{i\in {\bf X}_k}),~k\in \Gamma,\\
{\bm x}_k(t+1)&={\bm f}_k([{\bm x}_i(t)]_{i\in {\bf X}_k}),~k\not\in \Gamma,
\end{aligned}\right.
\end{equation}
where $\Gamma$ is the index set of pinning nodes, ${\bm u}_k$ is the external control input, and $\oplus_j$ is the binary logical operator.

In this setup, the pinning controllability of autonomous \texttt{BNs} was further investigated in \cite{chenhw2016scis}. Additionally, the pinning controllability of \texttt{BNs} was also studied via injection modes \cite{liuzq2020tcns}. However, except for ({\bf P1}), another limitation still exists in this topic, as that ({\bf P4}) {\bf the pinning nodes must be assigned in advance}. That is, all the results in \cite{lujq2016ieeetac1658,chenhw2016scis,liuzq2020tcns} only provide a procedure to check whether the pre-assigned pinning control can make \texttt{BCNs} (\ref{equation-pinning-bn}) controllable rather than to develop a design procedure. Although pinning control is pretty efficient, particularly for large-scale \texttt{BNs}, problems ({\bf P1}) and ({\bf P4}) remain its drawback. Again, it should be noted that ``({\bf P5}) {\bf finding a computationally efficient sufficient condition on the aggregated subnetworks, for the purpose of checking controllability of the entire \texttt{BCN} (\ref{e-LCN}),} is still an open problem \cite{zhaoy2015tnnls}."

Furthermore, since the intrinsic fluctuations and extrinsic perturbations typically affect the regulatory interactions among DNA, RNA and proteins, the model of probabilistic \texttt{BNs} (\texttt{PBNs}) was formalized by Shmulevich {\em et al.} \cite{Shmulevich2002PBN} as
\begin{equation}\label{equ-pbn}
{\bm x}_k(t+1)={\bm f}^{\sigma(t)}_k([{\bm x}_i(t)]_{i\in {\bf X}^{\sigma(t)}_k}),~k\in\{1,2,\cdots,n\},
\end{equation}
where stochastic sequence $\sigma(t)\in\{1,2,\cdots,s\}$ is the independent and identically distributed process w.r.t. probability vector ${\bm p}=[p^1,p^2,\cdots,p^{s}]$. Once the switching mode $\sigma(t)$ is given, the meanings of other notations in (\ref{equ-pbn}) are consistent with those in (\ref{e-LN}). In \texttt{PBNs}, stabilizability in probability, which was recently proposed in \cite{huangc2020ins205} and \cite{huangc2020tnnls}, is of general concern. Due to the lack of recursion of reachable sets, it results in that problems ({\bf P1}) and ({\bf P4}) also exist in this topic. Thus, it is pretty difficult to design an efficient control strategy.

\subsection{Contributions}
In this paper, we establish a novel and general framework for the control of \texttt{BNs} to overcome the above problems ({\bf P1})-({\bf P6}). The main contributions of this paper are listed as follows:
\begin{itemize}
  \item[I.] By resorting to the structural equivalence, structural controllability is defined as a novel concept in the field of \texttt{BNs} to overcome the problems ({\bf P1}) and ({\bf P2}). A polynomial-time necessary
  and sufficient condition is developed to judge the structural controllability based on some concepts firstly proposed in \cite{margaliot2018auto56}. Additionally, Problem \ref{problem-minimalnodecontrol} is proved to
  be NP-hard.
  \item[II.] In terms of open problem ({\bf P5}), a feasible network aggregation strategy is derived for controllability analysis of large-scale \texttt{BNs} on the basis of structural controllability theorem, to fill up the gap on this topic. We then use the network aggregation method and the \texttt{STP} approach to develop a computational algorithm for problem ({\bf P3}), which is still the lack for \texttt{CBNs} \cite{margaliot2018auto56}.
  \item[III.] Via the structural controllability criterion, the distributed pinning controller is developed to make an arbitrary \texttt{BCN} controllable. By comparison to the traditional methods in
  \cite{lujq2016ieeetac1658,chenhw2016scis,liuzq2020tcns,liff2020tcns1523}, the time complexity is reduced from $\Theta(2^{2n})$ to $\Theta(n2^{3d^\ast}+2(n+m)^2)$, where $n$, $m$ and $d^\ast$ are respectively the node number, the input number and the largest vertex in-degree. In practice, biological networks are sparsely connected \cite{jeong2000nature,jeong2001nature}, thus the largest vertex in-degree $d^\ast$ would not be pretty large and our approach is able to handle the problem ({\bf P1}) in \cite{lujq2016ieeetac1658,chenhw2016scis,liuzq2020tcns} to some extent. Compared with \cite{lujq2016ieeetac1658,chenhw2016scis,liuzq2020tcns}, the pinning nodes are designable by a polynomial-time algorithm, so the problem ({\bf P4}) is well overcome.
  \item[IV.] With regard to question ({\bf P6}) for the stabilization in probability of \texttt{PBNs}, different from the results in \cite{huangc2020ins205} and \cite{huangc2020tnnls}, the pinning nodes here are
  also accordingly chosen rather than pre-assigned. Besides, an interesting theorem is proposed to verify the equivalence of different types of stability. Via the structurally controllable condition, the controller can be easily designed to make \texttt{PBN} (\ref{equ-pbn}) stable in probability.
\end{itemize}

\subsection{Organization}
The outline of this paper is as follows. Section \ref{sec-preliminaries} shows some preliminaries, including systems description, \texttt{STP} of matrices, as well as some elementary definitions. In Section
\ref{sec-structuralcontrollability}, the results on the structural controllability are presented with contribution I. The following three sections, i.e., Sections \ref{sec-network aggregation},\ref{sec-pin-controllability},\ref{sec-stability in probability}, investigate the network aggregation for the controllability of \texttt{BCNs}, the pinning controllability of \texttt{BNs}, and pinning stabilization in probability of \texttt{PBNs} with contributions II, III, IV, respectively. Finally, simulations are provided in Section \ref{sec-simulations}, while Section \ref{sec-conclusion} concludes this paper. Fig. \ref{fig-outline} provides the outline of this paper.
\begin{figure}[h!]
\centering
\includegraphics[width=0.48\textwidth=0.48]{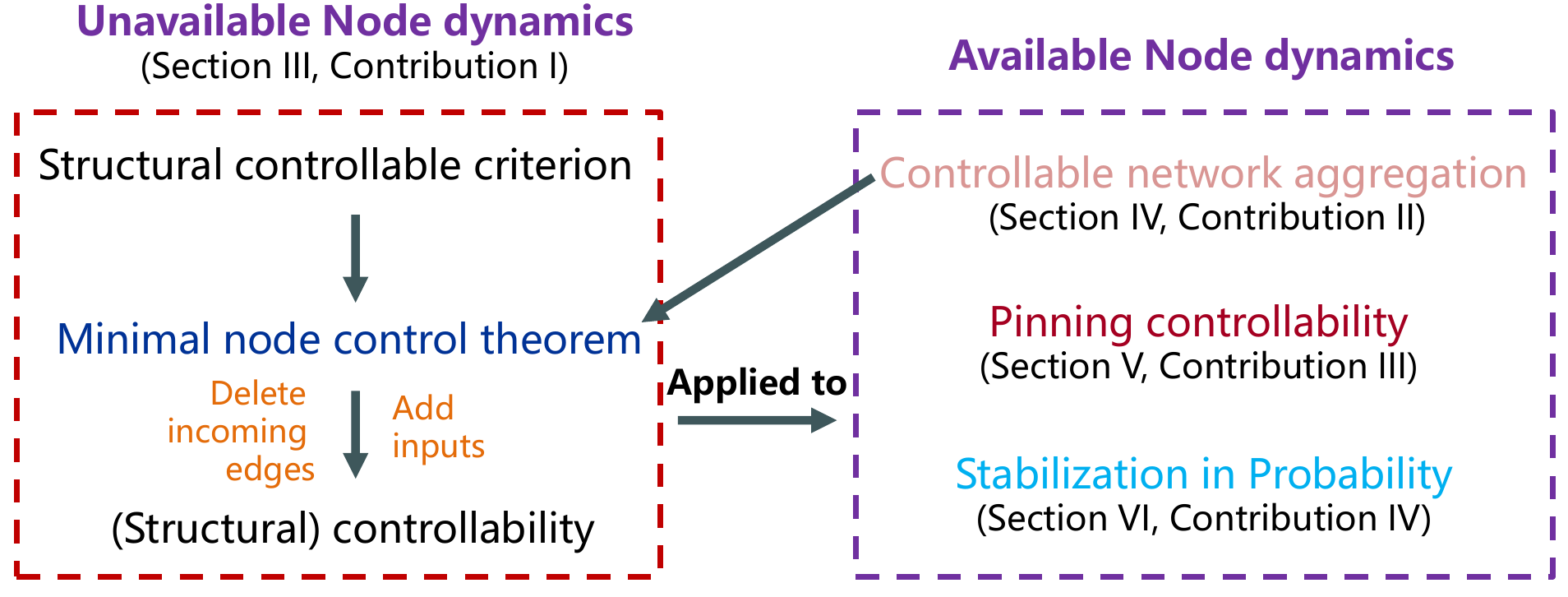}
\caption{The outline of this paper.}\label{fig-outline}
\end{figure}

\subsection{Notation and Terminology}
Throughout this paper, $\mathbb{R}^{m \times n}$ is the set of all $m \times n$-dimensional real matrices. Given integers $p$ and $q$ with $p<q$, the integer set $\{p,p+1,\cdots,q\}$ is denoted by $[p,q]$. Specially, let $\mathscr{D}:=\{1,0\}$. $I_{n}$ is used to represent the $n \times n$-dimensional identify matrix. By defining $\delta_n^i$ as the $i$-th canonical vector of size $n$, we can derive $\Delta_n=\{\delta_n^i\mid i\in[1,n]\}$. A matrix $B \in \mathbb{R}_{m \times n}$ is called a logical matrix if its $i$-th column, denoted by $\text{Col}_i(B)$, satisfies $\text{Col}_i(B) \in \Delta_{m}$ for any $i\in[1,n]$. Given set $S$, $\mid S \mid$ stands for its cardinal number. Given a matrix $B\in\mathbb{R}_{m\times n}$, if $n$ is divisible by $a$, we use $\text{Blk}_i(B)$ to represent the $i$-th $m\times a$-dimensional sub-matrix. For an event $E$, $\textbf{P}\{E\}$ means the probability of event $E$ occurring. The $n$-dimensional vector with all elements being one is represented by ${\bf 1}_n$. Given an $n$-dimensional column vector $v$ and a set $S\subseteq[1,n]$, the column vector $v_S$ is derived by deleting the rows with indices $i\in [1,n]\backslash S$. Particularly, $v_i:=v_{\{i\}}$.

\section{Preliminaries}\label{sec-preliminaries}
\subsection{System Description}
In this paper, we employ the related description for \texttt{BN} (\ref{e-LN}) and \texttt{BCN} (\ref{e-LCN}) which can also be referred to \cite{Azuma2015TCNS179,Azuma2019TCNS464}. Given \texttt{BN} (\ref{e-LN}), it can be expressed by $B({\bf G},{\bf F})$, where ${\bf G}$ and ${\bf F}$ respectively denote its network structure and component dynamics. More precisely,
\begin{enumerate}
  \item network structure ${\bf G}=({\bf V},{\bf E})$ is essentially a digraph, where ${\bf V}:=\{v_1,v_2,\cdots,v_n\}$ stands for the vertex set with the $i$-th node corresponding to vertex $v_i$. Ordered pair $(v_i,v_j)$ represents the oriented edge incoming $v_j$ from $v_i$, thus all the oriented edges are collected by ${\bf E}:=\{(v_i,v_j) \mid j\in[1,n],i\in{\bf X}_j\}$; and
  \item component dynamics ${\bf F}$ is used to represent a collection ${\bf F}:=({\bm f}_1,{\bm f}_2,\cdots,{\bm f}_n)$, where ${\bm f}_i$ is the dynamics of node $i$.
\end{enumerate}

As for \texttt{BCNs}, we only investigate the somewhat specific ones in the following form (\ref{equation-transformed-LCN}), and the general case (\ref{e-LCN}) can be viewed as a simple extension of it.
\begin{equation}\label{equation-transformed-LCN}
\left\{\begin{aligned}
{\bm x}_k(t+1)&=\tilde{{\bm f}}_k([{\bm x}_i(t)]_{i\in \tilde{{\bf X}}_k}),~k\in[m+1,n],\\
{\bm x}_j(t+1)&={\bm u}_j(t),~j \in [1,m].
\end{aligned}\right.
\end{equation}
Thus, unless otherwise stated, the \texttt{BCNs} that we refer to is of form (\ref{equation-transformed-LCN}). Considering \texttt{BCN} (\ref{equation-transformed-LCN}), it can be denoted by a triple $B(\tilde{{\bf G}},\tilde{{\bf F}},{\bf U})$, where ${\bf U}$ is the set of control inputs ${\bm u}_j$, $j\in[1,m]$. To make it easier to distinguish, digraphs ${\bf G}$ and $\tilde{{\bf G}}$ respectively represent the network structure of \texttt{BN} (\ref{e-LN}) and \texttt{BCN} (\ref{equation-transformed-LCN}). From then on, the triple $B(\tilde{{\bf G}},\tilde{{\bf F}},{\bf U})$ stands for \texttt{BCN} (\ref{equation-transformed-LCN}) except where otherwise stated.
\begin{remark}
It is worth noticing that network structure ${\bf G}$ is more concise than that in \cite{Azuma2015TCNS179,Azuma2019TCNS464}, where the activating in-neighbors and inhibiting in-neighbors are undistinguished by different edge-labeling functions in this paper. Thus, the analysis for \texttt{BNs} here requires less information of network structure than those in \cite{Azuma2015TCNS179,Azuma2019TCNS464}.
\end{remark}

\subsection{STP of Matrices}
Although matrix $L$ in (\ref{equ-assr-bn}) and matrix $\tilde{L}$ in (\ref{equ-assr-bcn}) have exponentially increasing dimensions, \texttt{STP} of matrices is indeed a powerful tool to transform the logical function into the corresponding multi-linear representation when considering the single logical function.
\begin{definition}[see \cite{chengdz2011springer}]\label{def-STP}
Given matrices $P\in\mathbb{R}_{a\times b}$ and $Q\in\mathbb{R}_{c\times d}$, the \texttt{STP} of matrices $P$ and $Q$ is defined as
\begin{equation*}
P \ltimes Q := (P\otimes I_{l/b}) (Q\otimes I_{l/c}),
\end{equation*}
where ``$\otimes$'' is the tensor (or, Kronecker) product of matrices and $l:=\text{lcm}\{b,c\}$ is the least common multiple of integers $b$ and $c$.
\end{definition}

Since the \texttt{STP} of matrices ``$\ltimes$'' is a generalization of the conventional matrix product, it provides a way to swap two matrices with any dimensions.
\begin{property}[see \cite{chengdz2011springer}]
For one thing, let vectors $U\in\Delta_{u}$ and $V\in\Delta_{v}$. Then there holds that $U \ltimes V = W_{[v,u]} \ltimes V \ltimes U$, where $W_{[v,u]}$ is called the swap matrix defined by $W_{[v,u]}=[I_v \otimes \delta_u^1, I_v \otimes \delta_u^2, \cdots, I_v \otimes \delta_u^u]$.

For the other thing, given a $p \times q$-dimensional matrix $B\in\mathscr{L}_{p \times q}$, there holds that $ U \ltimes B = (I_u \otimes B) \ltimes U$.
\end{property}

\begin{property}[see \cite{chengdz2011springer}]
If vector $U\in\Delta_{u}$, then $U \ltimes U = \Phi_u \ltimes U$, where $u$-valued power-reducing matrix $\Phi_u$ is defined as $\Phi_u=[\delta_u^1 \otimes \delta_u^1, \delta_u^2 \otimes \delta_u^2, \cdots, \delta_u^u \otimes \delta_u^u]$.
\end{property}

\begin{property}[see \cite{chengdz2011springer}]
If vectors $U,V\in\Delta_{2}$, then $V = \Psi \ltimes U \ltimes V$, where matrix $\Psi:={\bf 1}_2^\top \otimes I_2$ is called the left dummy matrix.
\end{property}

Define the canonical form of logical variable ${\bm x}_i\in\mathscr{D}$ as $x_i\mapsto\delta_2^{2-{\bm x}_i}$. By Lemma \ref{lemma-structurematrix}, the multi-linear form of arbitrary logical function ${\bm f}({\bm x}_1,{\bm x}_2,\cdots,{\bm x}_n): \mathscr{D}^n \rightarrow \mathscr{D}$ can be presented.
\begin{lemma}[See \cite{chengdz2011springer}]\label{lemma-structurematrix}
Given any function ${\bm f}({\bm x}_1,{\bm x}_2,\cdots,{\bm x}_n): \mathscr{D}^n\rightarrow\mathscr{D}$, there exists a unique logical matrix $L_{\bm f}\in\mathscr{L}_{2\times 2^n}$, called the structure matrix of function ${\bm f}$, such that ${\bm f}({\bm x}_1,{\bm x}_2,\cdots,{\bm x}_n)\mapsto L_{\bm f}\ltimes_{i=1}^{n}x_i$, where $\ltimes_{i=1}^{n}x_i:=x_1\ltimes x_2\ltimes \cdots\ltimes x_n$.
\end{lemma}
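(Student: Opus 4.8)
The plan is to construct $L_{\bm f}$ explicitly, column by column, after first showing that the STP of the canonical state vectors enumerates all of $\Delta_{2^n}$ bijectively. To set this up I would record the elementary fact that for column vectors the STP of Definition \ref{def-STP} collapses to the ordinary Kronecker product, so that $\delta_2^{a}\ltimes\delta_2^{b}=\delta_4^{\,2(a-1)+b}$. Iterating this rule, for any tuple $({\bm x}_1,\dots,{\bm x}_n)\in\mathscr{D}^n$ with canonical images $x_i=\delta_2^{\,2-{\bm x}_i}$ one obtains
$$
\ltimes_{i=1}^{n}x_i=\delta_{2^n}^{\,j},\qquad j=1+\sum_{i=1}^{n}(1-{\bm x}_i)\,2^{\,n-i},
$$
which is just the standard binary enumeration of the bit-string. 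As $({\bm x}_1,\dots,{\bm x}_n)$ ranges over the $2^n$ Boolean tuples, the index $j$ ranges over $[1,2^n]$ exactly once; hence $({\bm x}_1,\dots,{\bm x}_n)\mapsto\ltimes_{i=1}^{n}x_i$ is a bijection from $\mathscr{D}^n$ onto $\Delta_{2^n}$.

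Granting this bijection, I would define the candidate matrix through its columns: for each $j\in[1,2^n]$ let $({\bm x}_1^{(j)},\dots,{\bm x}_n^{(j)})$ be the unique tuple whose STP image is $\delta_{2^n}^{\,j}$, and set $\text{Col}_j(L_{\bm f}):=\delta_2^{\,2-{\bm f}({\bm x}_1^{(j)},\dots,{\bm x}_n^{(j)})}$. Because ${\bm f}$ takes values in $\mathscr{D}=\{1,0\}$, each such column lies in $\Delta_2$, so $L_{\bm f}\in\mathscr{L}_{2\times 2^n}$ is a logical matrix by definition. Correctness is then immediate: since $L_{\bm f}$ is $2\times 2^n$ and $\ltimes_{i=1}^{n}x_i$ is a $2^n$-dimensional vector, the STP reduces to ordinary matrix multiplication, and multiplying by $\delta_{2^n}^{\,j}$ simply extracts the $j$-th column. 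Thus $L_{\bm f}\ltimes_{i=1}^{n}x_i=\text{Col}_j(L_{\bm f})=\delta_2^{\,2-{\bm f}(\cdots)}$, which is precisely the canonical image of the scalar output ${\bm f}({\bm x}_1,\dots,{\bm x}_n)$.

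For uniqueness I would argue that any $L\in\mathscr{L}_{2\times 2^n}$ satisfying the stated identity must agree with $L_{\bm f}$ on every column: evaluating the identity at the tuple whose image is $\delta_{2^n}^{\,j}$ gives $\text{Col}_j(L)=L\,\delta_{2^n}^{\,j}=\delta_2^{\,2-{\bm f}(\cdots)}=\text{Col}_j(L_{\bm f})$ for each $j$, whence $L=L_{\bm f}$.

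The only genuinely delicate point is the indexing in the first paragraph: one must verify that the recursive Kronecker rule produces exactly the binary enumeration and therefore a bijection onto $\Delta_{2^n}$. I expect this index bookkeeping to be the main obstacle, while the reduction of the STP to ordinary products, the column-selection construction, and the uniqueness argument are all straightforward.
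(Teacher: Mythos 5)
Your proposal is correct and is essentially the standard argument: the paper states this lemma without proof (citing Cheng et al.), and the canonical proof there proceeds exactly as you do, by observing that $\ltimes_{i=1}^{n}x_i$ ranges bijectively over $\Delta_{2^n}$ via the binary enumeration and then defining $L_{\bm f}$ column by column, with uniqueness following from column extraction. Your index bookkeeping $j=1+\sum_{i=1}^{n}(1-{\bm x}_i)2^{n-i}$ and the reduction of $\ltimes$ to the Kronecker product on canonical vectors are both verified correctly, so there is no gap.
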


\subsection{Structural Concept of BNs}
In this subsection, we introduce structural controllability in the field of \texttt{BCNs} and some related notions.

To begin with, the controllability of \texttt{BCN} $B(\tilde{{\bf G}},\tilde{{\bf F}},{\bf U})$ is defined.

\begin{definition}[See \cite{akutsu2007jtb670}]\label{def-controllability}
\texttt{BCN} $B(\tilde{{\bf G}},\tilde{{\bf F}},{\bf U})$ is said to be controllable if, for any state pair $\alpha,\beta\in\mathscr{D}^n$, there exist an integer $N\geq 0$ and a control sequence
${\bf u}(1),{\bf u}(2),\cdots,{\bf u}(N-1)$ that steers this \texttt{BCN} from ${\bf x}(0)=\alpha$ to ${\bf x}(N)=\beta$.
\end{definition}

By following the definitions in \cite{margaliot2018auto56}, a state node $\vec{v}$ is said to be a channel if its out-degree is one and there does not exist a self loop $(\vec{v},\vec{v})$. The vertex representing the control input ${\bm u}_j$ is called a generator. Additionally, a control node is the one that directly connects with the generators; otherwise, it is called a simple node.

In order to define the structural controllability of \texttt{BCN} $B(\tilde{{\bf G}},\tilde{{\bf F}},{\bf U})$, the concepts of structural equivalence and closed \texttt{BNs} are introduced.
\begin{definition}[see \cite{Azuma2015TCNS179}]
Given \texttt{BN} $B({\bf G},{\bf F})$, \texttt{BN} $B({\bf G},\breve{{\bf F}})$ is said to be structurally equivalent to $B({\bf G},{\bf F})$, where the component dynamics $\breve{{\bf F}}$ may be different from ${\bf F}$.
\end{definition}

\begin{definition}[see \cite{Azuma2015TCNS179}]
Given \texttt{BCN} $B(\tilde{{\bf G}},\tilde{{\bf F}},{\bf U})$, \texttt{BN} $B(\hat{{\bf G}},\hat{{\bf F}})$ is called its closed \texttt{BN}, which is obtained by removing all the generators and their adjacency edges to controlled nodes.
\end{definition}

\begin{definition}[see \cite{Azuma2015TCNS179}]
Given \texttt{BCN} $B(\tilde{{\bf G}},\tilde{{\bf F}},{\bf U})$, \texttt{BCN} $B(\hat{{\bf G}},\hat{{\bf F}},{\bf U})$ is said to be structurally equivalent to \texttt{BCN} $B(\tilde{{\bf G}},\tilde{{\bf F}},{\bf U})$ if their closed \texttt{BNs} are structurally equivalent.
\end{definition}

On the basis of the above preparations, we provide the definition of the structural controllability for \texttt{BCN} $B(\tilde{{\bf G}},\tilde{{\bf F}},{\bf U})$.
\begin{definition}\label{def-structuralcontrollable}
Given $B(\tilde{{\bf G}},\tilde{{\bf F}},{\bf U})$, it is said to be structurally controllable if all its structurally equivalent \texttt{BCNs} are controllable.
\end{definition}

According to Definition \ref{def-structuralcontrollable}, the traditional methods in \cite{chengdz2009tac1659,zhaoy2010scl767,liangjl2017tac6012,zhuqx2018tac} are disabled, unless we try all kinds of logical couplings as in \cite{liht2020scis}. However, the time complexity will have reached $\Theta((2^{n+m})^2)$, which is too heavy to address a practical large-scale \texttt{BCN}.

\section{Structural Controllability of BNs}\label{sec-structuralcontrollability}
In this section, we investigate the necessary and sufficient criterion for the structural controllability of \texttt{BCN} $B(\tilde{{\bf G}},\tilde{{\bf F}},{\bf U})$, and then analyze the time complexity of Problem \ref{problem-minimalnodecontrol} for an arbitrary \texttt{BN} $B({\bf G},{\bf F})$.
\begin{definition}[See \cite{margaliot2018auto56}]\label{def-ctr-path}
Given \texttt{BCN} $B(\tilde{{\bf G}},\tilde{{\bf F}},{\bf U})$, a non-empty ordered set $\Omega\subseteq {\bf G}$ is said to be a controlled path, if the following two conditions are both satisfied:
\begin{itemize}
  \item[1)] the first entry of $\Omega$ is a generator; and
  \item[2)] for $\mid \Omega \mid>1$, the $i$-th element of $\Omega$, $i\in[2,\mid \Omega \mid]$, is a state node and it is the only out-neighbor of the ($i-1$)-th element of $\Omega$.
\end{itemize}
Two controlled paths $\Omega_i$ and $\Omega_j$ are said to be disjoint if they do not have any overlapping node.
\end{definition}

\begin{definition}[See \cite{margaliot2018auto56}]\label{def-layernetwork}
Given an acyclic diagraph $\tilde{{\bf G}}=(\tilde{{\bf V}},\tilde{{\bf E}})$, it is called a $k$-layer graph if, each $v\in\tilde{{\bf V}}$ lies in a single layer $L_h$, $h\in[1,k]$, and for such $v\in L_h$, $(v,v')\in \tilde{{\bf E}}$ implies $v' \in L_{h+1}$.
\end{definition}

In \cite{margaliot2018auto56}, Weiss {\em et al.} obtained a necessary and sufficient condition for the controllability of \texttt{CBNs} as that a \texttt{CBN} is controllable if and only if its network structure can be split into several disjoint controlled paths. Nevertheless, this condition is indeed not sufficient for the structural controllability of \texttt{BCNs}. Hence, we need a more strict condition. Likewise, although the minimum node control problem has been analyzed to be NP-hard for \texttt{CBNs} \cite{margaliot2018auto56}, the complexity of Problem \ref{problem-minimalnodecontrol} for \texttt{BN} $B({\bf G},{\bf F})$ needs to be further analyzed, due to the enhancement of condition.

\subsection{A Polynomial-Time Condition for Structural Controllability}
In virtue of the controlled paths in Definition \ref{def-ctr-path}, a necessary and sufficient condition is established for the structural controllability of \texttt{BCN} $B(\tilde{{\bf G}},\tilde{{\bf F}},{\bf U})$.
\begin{theorem}\label{theorem-structuralcontrollability}
Consider \texttt{BCN} $B(\tilde{{\bf G}},\tilde{{\bf F}},{\bf U})$. The following three statements are mutually equivalent:
\begin{itemize}
  \item[1)] \texttt{BCN} $B(\tilde{{\bf G}},\tilde{{\bf F}},{\bf U})$ is structurally controllable;
  \item[2)] digraph $\tilde{{\bf G}}$ can be decomposed into several disjoint root in-trees\footnote{An acyclic digraph $\tilde{{\bf G}}$ is called a root in-tree if, there is a unique vertex $v$ with out-degree zero, and the out-degree of all other vertices is one.} whose leaves are all generators;
  \item[3)] digraph $\tilde{{\bf G}}$ is acyclic, and \textbf{(${\bf C1}$) the in-neighbor set of every simple node in $\tilde{{\bf G}}$ is non-empty and only contains channels}.
\end{itemize}
\end{theorem}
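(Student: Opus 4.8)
The plan is to prove the cycle of implications $3)\Rightarrow 2)\Rightarrow 1)\Rightarrow 3)$, splitting the work into a combinatorial core ($2)\Leftrightarrow 3)$) and a control-theoretic core ($1)\Leftrightarrow 3)$). First I would record two structural facts forced by the special form (\ref{equation-transformed-LCN}): every generator $u_j$ has out-degree one (its only out-neighbour is the control node $x_j$), and since $x_j(t+1)={\bm u}_j(t)$ depends on no state variable, no state node points to a control node. Consequently \emph{every out-neighbour of a state node is a simple node}. This single observation is what makes (${\bf C1}$) so powerful: if a state node $w$ had out-degree $\ge 2$, it would feed at least two simple nodes, and (${\bf C1}$) would force each of those in-neighbours, in particular $w$, to be a channel, i.e.\ to have out-degree one---a contradiction. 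Hence under 3) \emph{every} state node has out-degree at most one, while every generator has out-degree exactly one.

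For $3)\Rightarrow 2)$ I would then argue purely graph-theoretically. Acyclicity together with ``every node has out-degree at most one'' means that following the unique out-edge from any node terminates at a sink (an out-degree-zero node); grouping the vertices by the sink they reach partitions $\tilde{{\bf G}}$ into disjoint root in-trees. It remains to identify the leaves: a leaf has in-degree zero, so it can be neither a simple node (those have non-empty in-neighbour sets by (${\bf C1}$)) nor a control node (whose unique in-neighbour is a generator), hence it is a generator; conversely every generator is a source and thus a leaf. This yields exactly 2). The converse $2)\Rightarrow 3)$ is the reverse reading: a disjoint union of in-trees is acyclic, and for a simple node $v$ (necessarily a non-leaf, since leaves are generators) each in-neighbour is a non-root vertex of its tree, hence has out-degree one and, by acyclicity, no self-loop---so it is a channel, and $v$'s in-neighbour set is non-empty. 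Thus $2)\Leftrightarrow 3)$.

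The implication $2)\Rightarrow 1)$ is the sufficiency of the structural condition, and here I would exploit that structural equivalence ranges over \emph{all} minimally represented dynamics. A genuinely-dependent single-input Boolean function is either the identity or the negation, hence a bijection of $\{0,1\}$; a genuinely-dependent multi-input function is non-constant, hence surjective onto $\{0,1\}$. I would order the in-forest by depth from the generators and realise an arbitrary target $\beta$ by back-propagation: to achieve $x_v(N)=\beta_v$ at a node $v$, choose any assignment of its (channel) in-neighbours at time $N-1$ that maps to $\beta_v$, and recurse down to the free generators. Because each channel feeds a \emph{unique} node, the requirements on different branches never collide, and a single generator, feeding one path, is never asked for two values at one time; since $\tilde{{\bf G}}$ is acyclic the recursion terminates after at most $\mathrm{depth}(\tilde{{\bf G}})$ steps. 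This simultaneously proves controllability for every structurally equivalent \texttt{BCN} and yields the fixed-time reachability anticipated in the abstract.

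The genuinely hard direction is necessity, $1)\Rightarrow 3)$, for which I would argue the contrapositive by manufacturing a single uncontrollable structurally-equivalent instance whenever 3) fails. If some simple node has an empty in-neighbour set, its update is constant and one value is unreachable after the first step. If some simple node $v$ has a non-channel in-neighbour $w$ (out-degree $\ge 2$, feeding $v$ and another simple node $v''$), I would install a \emph{conflict} by taking functions of the form $f_v=w\wedge(\cdots)$ and $f_{v''}=\neg w\wedge(\cdots)$---each genuinely dependent on all of its arguments---which force $x_v=1\Rightarrow w=1$ and $x_{v''}=1\Rightarrow w=0$, so that $(x_v,x_{v''})=(1,1)$ is never reached. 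The subtle case is a directed cycle: it must lie among simple nodes (generators are sources and control nodes have only a generator in-neighbour), and I would turn an autonomous cycle into a value-rotation via identity maps, for which the number of ones is an invariant and most states are unreachable. I expect the main obstacle to be exactly this necessity argument---in particular reconciling it with the strictly weaker ``disjoint controlled paths'' criterion of \cite{margaliot2018auto56}, which does \emph{not} force (${\bf C1}$), and handling a cycle that also receives external inputs, where no autonomous invariant is immediate and the conserved relation must be built by hand.
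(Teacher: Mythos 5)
Your overall architecture (treating $2)\Leftrightarrow 3)$ as a purely combinatorial equivalence, back-propagation for $2)\Rightarrow 1)$, and contrapositive gadgets for $1)\Rightarrow 3)$) matches the paper's proof, and most of the individual constructions are sound: the observation that (${\bf C1}$) forces every state node to have out-degree at most one is exactly what the paper extracts (via a topological-sort argument on weakly connected components) in its $3)\Rightarrow 2)$ step; your depth-ordered back-propagation using surjectivity of genuinely-dependent functions is the paper's ``shift register'' argument; and your conflict gadget $f_v=w\wedge(\cdots)$, $f_{v''}=\neg w\wedge(\cdots)$ is a legitimate variant of the paper's choice of an all-conjunction $f_v$ and an all-disjunction $f_{v''}$ with the unreachable target $(x_v,x_{v''})=(1,0)$.

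The one genuine gap is the one you flag yourself: the cycle case of $1)\Rightarrow 3)$. Your identity-map/value-rotation invariant exists only when every vertex of the cycle has its cycle predecessor as its \emph{sole} in-neighbour; since the component functions are minimally represented and must genuinely depend on every listed in-neighbour, a cycle vertex with an additional in-neighbour from outside the cycle cannot be given the identity update, and ``number of ones on the cycle'' is then not conserved. The paper closes this with a monotone rather than a conserved quantity: set every component function to the disjunction of all of its in-neighbours and start from the state in which every cycle vertex equals $1$ and all other coordinates are $0$. Each cycle vertex then has at least one in-neighbour on the cycle whose value is $1$, so it remains $1$ at the next step regardless of what its external in-neighbours do; hence the all-ones pattern on the cycle is forward-invariant and the all-zero state is unreachable, contradicting structural controllability. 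Substituting this disjunctive gadget for your identity maps completes the necessity argument; everything else in your proposal goes through essentially as in the paper.
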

\begin{proof}
We would prove this theorem by implementing ${\bf 1)}\Rightarrow {\bf 3)}\Rightarrow {\bf 2)}\Rightarrow {\bf 1)}$.

First of all, we verify the implement ${\bf 1)}\Rightarrow {\bf 3)}$. Assuming that digraph $\tilde{{\bf G}}$ contains a directed cycle $C$ with length $p$, denoted by $C=\{v_{m+1},v_{m+2},\cdots,v_{m+p}\}\subseteq \tilde{{\bf V}}$ without loss of generality, one has that $v_{m+i}$, $i\in[1,p]$, are neither a generator nor a controlled node, because a generator does not have any in-neighbor and a controlled node has a unique in-neighbor as a generator.

Taking \texttt{BCN} $B(\tilde{{\bf G}},\vec{{\bf F}},{\bf U})$ into account, where the logical couplings in $\vec{{\bf F}}:=(\vec{{\bm f}_1},\vec{{\bm f}_2},\cdots,\vec{{\bm f}_n})$ are uniquely ``$\vee$''. Let initial state ${\bf x}(0)=\Sigma_{i=1}^{p}\delta_{n}^{m+i}$, it is obvious that \texttt{BCN} $B(\tilde{{\bf G}},\vec{{\bf F}},{\bf U})$ cannot be steered from the above ${\bf x}(0)$ to the target state $[0,0,\cdots,0]^\top$. Thus, it contradicts with the assumption that \texttt{BCN} $B(\tilde{{\bf G}},\tilde{{\bf F}},{\bf U})$ is structurally controllable.
\begin{figure}[h!]
\centering
\includegraphics[width=0.45\textwidth=0.48]{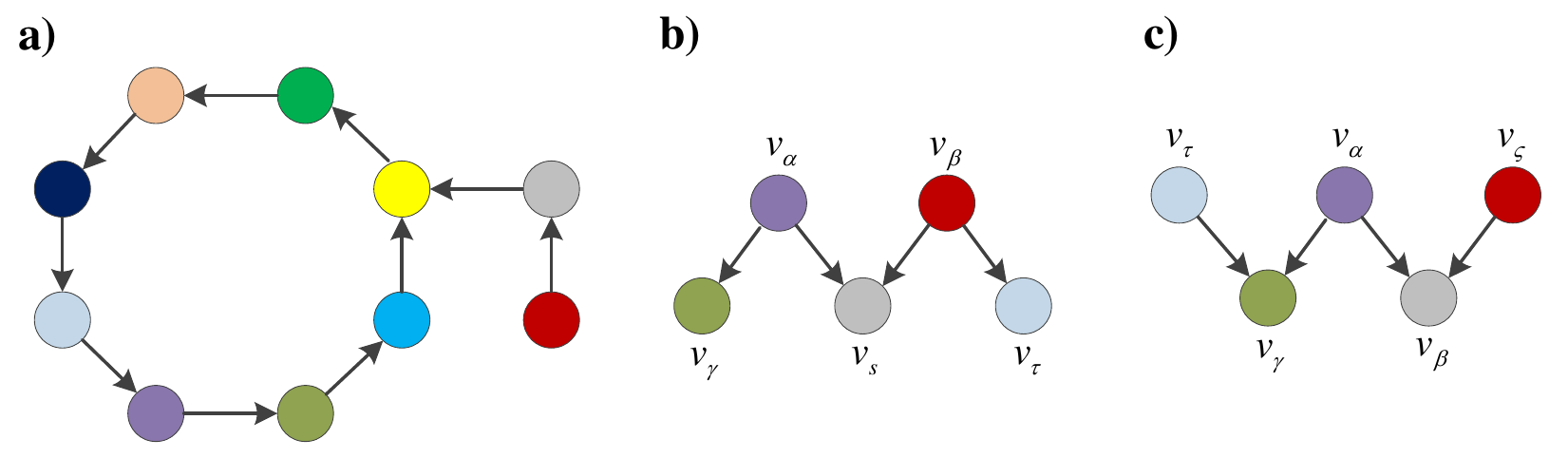}
\caption{Subgraph a) represents the case of cyclic $\tilde{{\bf G}}$, while subgraph b) stands for the scene that $\tilde{{\bf G}}$ contains a vertex $v_s$, whose in-neighbors are neither channels nor generators. Moreover, subgraph c) means the situation when there is a vertex $v_\beta$ that has an in-neighbor not as a channel or a generator.}\label{fig-cycle}
\end{figure}
If there is a node $v$ with zero in-degree, then the state of node $v$ must be constant to $0$ or $1$.

Next, we assume that \texttt{BCN} $B(\tilde{{\bf G}},\tilde{{\bf F}},{\bf U})$ is structurally controllable but graph $\tilde{{\bf G}}$ contains a simple node $v_s$, whose in-neighbor set does not contain a channel as shown in Fig. \ref{fig-cycle}b). We derive the functions $\bar{{\bf F}}:=\{\bar{{\bm f}}_1,\bar{{\bm f}}_2,\cdots,\bar{{\bm f}}_n\}$ from replacing all the functional variables of $\tilde{{\bm f}}_i$ by ``$\wedge$'' and consider \texttt{BCN} $B(\tilde{{\bf G}},\bar{{\bf F}},{\bf U})$.

Since \texttt{BCN} $B(\tilde{{\bf G}},\tilde{{\bf F}},{\bf U})$ is structurally controllable, \texttt{BCN} $B(\tilde{{\bf G}},\bar{{\bf F}},{\bf U})$ is controllable; and it is reachable from $\mu=[0,0,\cdots,0]^\top$ to $\nu={\bf 1}_{n}^\top-(\delta_n^s)^\top$. Without loss of generality, we assume that vertex $v_s$ only has two in-neighbors $v_\alpha$ and $v_\beta$. Because $v_\alpha$ and $v_\beta$ are not channels, one can find vertices $v_\gamma \neq v_s$ and $v_\tau \neq v_s$ satisfying $(v_\alpha,v_\gamma)\in \tilde{{\bf E}}$ and $(v_\beta,v_\tau)\in \tilde{{\bf E}}$. Thus, digraph $\tilde{{\bf G}}$ contains a self loop $(v_s,v_s)$; it is a contradiction with the former conclusion that $\tilde{{\bf G}}$ does not contain an oriented cycle.

Furthermore, we prove that the in-neighbor set of each simple node can only contain channels. Seek a contradiction. Without loss of generality, we suppose that digraph $\tilde{{\bf G}}$ contains a node
$v_\alpha$ that has two out-neighbors $v_\beta$ and $v_\gamma$, i.e., $(v_\alpha,v_\beta)\in\tilde{{\bf E}}$ and $(v_\alpha,v_\gamma)\in\tilde{{\bf E}}$. As proved above, there must be another two channels $v_\varsigma$ and $v_\tau$ that act as the in-neighbor of vertices $v_\beta$ and $v_\gamma$, respectively; it means that $(v_\varsigma,v_\beta)\in\tilde{{\bf E}}$ and $(v_\tau,v_\gamma)\in\tilde{{\bf E}}$. Let the logical operators in functions $\tilde{{\bm f}}_\beta$ and $\tilde{{\bm f}}_\gamma$ be full with ``$\wedge$'' and ``$\vee$'' respectively, and consider the system state $\mu\in \mathscr{D}^n$ with $\mu_\beta=1$ and $\mu_\gamma=0$. If state $\mu$ is reachable from any other state $\nu\in\mathscr{D}^n$ by one step, then $\nu_\alpha=1$ due to the conjunctive function ${\bm f}_\beta$. However, if $\nu_\alpha=1$, then it deduces that $\mu_\gamma=1$, because of the disjunctive function $\tilde{{\bm f}}_\gamma$. This is a contradiction. Therefore, if \texttt{BCN} $B(\tilde{{\bf G}},\tilde{{\bf F}},{\bf U})$ is structurally controllable, then the in-neighbor set of each simple node can only contain channels.

Afterwards, we verify the implement that ${\bf 3)}\Rightarrow {\bf 2)}$. Divide digraph $\tilde{{\bf G}}$ into several maximum weakly connected subgraphs $\tilde{{\bf G}}^i$, $i\in[1,\omega]$. Because digraph $\tilde{{\bf G}}^i$ is acyclic, we can find the topological sort, denoted by $Q(\tilde{{\bf G}}^i)$, for its nodes. For the terminal vertex $v_d$ in the sequence $Q(\tilde{{\bf G}}^i)$, its out-degree must be equal to zero. Subsequently, we prove that vertex $v_d$ is the unique node with out-degree zero in $\tilde{{\bf G}}^i$. Suppose that there is another vertex $v_{\tilde{d}}$ with out-degree zero. Since the digraph $\tilde{{\bf G}}^i$ is weakly connected, there must exist a vertex $v_\tau$ which can reach both $v_d$ and $v_{\tilde{d}}$. Along with the oriented path from $v_\tau$ to $v_d$, we can find a vertex with out-degree more than two. It contradicts with the fact that the in-neighbors of each simple node are channels. Meanwhile, it also concludes that vertex $v_d$ is reachable from other vertices in $\tilde{{\bf G}}^i$, since vertex $v_d$ is the unique vertex in acyclic digraph $\tilde{{\bf G}}^i$ with out-degree zero. Thus, each subgraph $\tilde{{\bf G}}^i$ is in a shape of a rooted in-tree. As the in-neighbor set of each simple node is nonempty, the leaves of each in-tree must be generators.

Finally, we show the deduction ${\bf 2)\Rightarrow 1)}$. Suppose that the digraph $\tilde{{\bf G}}$ is composed of a series of vertices-disjoint root trees ${\bf R}^1$, ${\bf R}^2$, $\cdots$, ${\bf R}^\lambda$ with leaves being generators. If we can prove the structural controllability for the case of $\lambda=1$, then the situation of $\lambda\geq2$ can be regarded as a straightforward consequence.

Consider the case that $\tilde{{\bf G}}$ is a root in-tree. With loss of generality, we assume that the lengths of all the paths from generators to its root are the same as $l$. Otherwise, we can unify the length by adding some virtual simple nodes. Then, by feeding the control inputs one by one like a shift register, we can achieve the reachability from arbitrary $\mu\in\mathscr{D}^n$ to arbitrary $\nu\in\mathscr{D}^n$. Thus, \texttt{BCN} $B(\tilde{{\bf G}},\tilde{{\bf F}},{\bf U})$ is controllable. Considering an arbitrary node $v_j$ in layer $L_k$ with $\nu_j\in\mathscr{D}$, we can find a group of variables $\beta_i$, $i\in \tilde{{\bf X}}_j$ satisfying $\nu_j=\tilde{{\bm f}}_j([\beta_i]_{i\in \tilde{{\bf X}}_j})$. Such procedure can be executed until reaching the generators. The tree shape guarantees that the state nodes in the same layer can be considered dependently. The proof is established.
\end{proof}

\begin{remark}
As for the controllability of \texttt{CBNs}, it only requires that network structure is acyclic and each vertex has a channel or a generator as one of in-neighbors. Obviously, condition ({\bf C1}) is stronger than that for \texttt{CBNs}, thus it of course can also address the \texttt{CBNs} as a sufficient condition. Besides, the time complexity of this criterion is remarkably lower than $\Theta(2^{3n+3m})$ in \cite{liht2020scis}, which is computationally heavy for large-scale \texttt{BCNs}.
\end{remark}

\begin{remark}
Compared with the criterion in \cite{margaliot2018auto56}, the conditions 2) and 3) are obviously stronger. It leads to that the proof of NP-hardness for Problem \ref{problem-minimalnodecontrol} to an arbitrarily given \texttt{CBN} in \cite{margaliot2018auto56} is not suitable anymore.
\end{remark}

\begin{remark}
As for the traditional case when the node dynamics can be identifiable or known, Theorem \ref{theorem-structuralcontrollability} also plays an efficient role to check the controllability as a sufficient condition; it is the cornerstone for the works in Section \ref{sec-network aggregation}, Section \ref{sec-pin-controllability} and Section \ref{sec-stability in probability}.
\end{remark}

In \cite{margaliot2012aut1218}, Laschov and Margaliot proposed a stronger notion, termed as fixed-time controllability, than controllability, which is important in biological engineering. For instance, for a multi-cellular organism composed of several identical cell-cycles where subnetworks are modeled as the same \texttt{BCNs}, sometimes we may be interested in synchronizing all partitions via control inputs. It should be noticed that if the fixed-time controllability can be achieved, so is the fixed-time synchronization. Moreover, the trajectory controllability is to define whether \texttt{BCNs} can track an arbitrarily given state sequence. In the following, we would also show that the states of root nodes can be trajectory controllable.
\begin{definition}\label{defn-fixed-time-controllability}
\texttt{BCN} $B(\tilde{{\bf G}},\tilde{{\bf F}},{\bf U})$ is said to be structurally fixed-time controllable if, for every \texttt{BCN} that is structurally equivalent to \texttt{BCN}
$B(\tilde{{\bf G}},\tilde{{\bf F}},{\bf U})$, there exist an integer $\tau>0$ and an input sequence ${\bf u}(0)$, ${\bf u}(1)$, $\cdots$, ${\bf u}(\tau-1)$ to steer it from ${\bf x}(0)=\mu$ to ${\bf x}(\tau)=\nu$ for any $\mu,\nu\in \mathscr{D}^n$.
\end{definition}

\begin{definition}\label{defn-n-fixed-time-controllability}
\texttt{BCN} $B(\tilde{{\bf G}},\tilde{{\bf F}},{\bf U})$ is said to be structurally $\eta$ fixed-time controllable if, there is an integer $\eta>0$ such that for every \texttt{BCN} that is structurally equivalent to \texttt{BCN} $\Sigma(\tilde{{\bf G}},\tilde{{\bf F}},{\bf U})$, one can find an input sequence ${\bf u}(0)$, ${\bf u}(1)$, $\cdots$, ${\bf u}(\eta-1)$ to steer it from ${\bf x}(0)=\mu$ to ${\bf x}(\eta)=\nu$ for any $\mu,\nu\in \mathscr{D}^n$.
\end{definition}

\begin{corollary}\label{cor-fixed-time}
\texttt{BCN} $B(\tilde{{\bf G}},\tilde{{\bf F}},{\bf U})$ is structurally fixed-time controllable (or structurally $\eta$ fixed-time controllable) if and only if it is structurally controllable. The minimal number $\eta$ is the layer number of these root in-trees.
\end{corollary}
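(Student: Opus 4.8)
The plan is to establish the cyclic chain of implications so that all three notions collapse to one: I would show that structural controllability implies structural $\eta$ fixed-time controllability, that the latter implies structural fixed-time controllability, and that this in turn implies structural controllability. The last two steps are immediate from the definitions. A uniform horizon $\eta$ that works for \emph{every} structurally equivalent \texttt{BCN} is, a fortiori, a (possibly \texttt{BCN}-dependent) horizon $\tau$, so Definition \ref{defn-n-fixed-time-controllability} yields Definition \ref{defn-fixed-time-controllability}; and reaching ${\bf x}(\tau)=\nu$ from ${\bf x}(0)=\mu$ in exactly $\tau$ steps is a particular case of the reachability demanded by Definition \ref{def-controllability} (take $N=\tau$), so structural fixed-time controllability gives back structural controllability. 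Hence the whole content is carried by the first implication together with the identification of $\eta$.

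For that implication I would invoke Theorem \ref{theorem-structuralcontrollability} to decompose $\tilde{{\bf G}}$ into disjoint root in-trees whose leaves are generators and whose simple nodes all have non-empty in-neighbour sets consisting solely of channels. Fixing one in-tree, I would layer it as in Definition \ref{def-layernetwork}, equalising the generator-to-root path lengths to a common value $l$ by padding shorter paths with virtual simple nodes, exactly as in the proof of ${\bf 2)}\Rightarrow{\bf 1)}$, and set $\eta:=l$, the layer number. The input is then produced by backward, shift-register propagation: at the root I solve $\nu_r=\tilde{{\bm f}}_r([\beta_i]_{i\in\tilde{{\bf X}}_r})$ for admissible in-neighbour values $\beta_i$, record these as the required values one layer down at the preceding time, and iterate until the generators are reached. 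The point making this construction uniform over all structurally equivalent \texttt{BCNs} is that structural equivalence preserves the essential-variable sets $\tilde{{\bf X}}_j$; by the minimal-representation convention each node function genuinely depends on every in-neighbour, hence is non-constant and attains both values of $\mathscr{D}$, so every local inversion above is solvable. Since $l$ depends only on $\tilde{{\bf G}}$ and not on $\tilde{{\bf F}}$, the same horizon $\eta=l$ steers every equivalent \texttt{BCN} from an arbitrary $\mu$ to an arbitrary $\nu$, which is precisely structural $\eta$ fixed-time controllability, and the passage from a single in-tree to the disjoint union is immediate as the trees share no vertices.

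To identify the minimal $\eta$ I would add a lower-bound argument by tracking propagation: the value of a node at edge-distance $d$ from the nearest generator is, for every time $t<d$, a function only of the initial condition $\mu$, because control injected at the generators has not yet reached it; such a node therefore cannot be forced to an arbitrary target before $d$ steps. Applied to the root, which sits at distance $l$, this shows that no horizon below the common path length can realise every $\nu$, so, combined with the construction attaining $\tau=l$, the minimal $\eta$ equals $l$, the layer number of the in-trees. I expect the main obstacle to be verifying the conflict-freeness of the backward construction, namely that no vertex is ever assigned two different required values at the same time. This is exactly where condition (${\bf C1}$) is indispensable: since every in-neighbour is a dedicated channel of out-degree one feeding a unique successor, and the in-tree endows each vertex with a unique path toward the root, the demand placed on a vertex at a given time traces upward to a \emph{single} target node, so along a fixed path a generator is constrained only at the distinct time slots $\tau-1,\tau-2,\dots,\tau-l$ and no collision can occur. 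Making this bookkeeping precise---and checking that the virtual padding affects only the conceptual timing and not the reachability of the genuine state---is the step requiring the most care.
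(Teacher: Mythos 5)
Your treatment of the equivalence itself matches the paper's: the implications from structural $\eta$ fixed-time controllability to structural fixed-time controllability to structural controllability are read off the definitions, and the converse is obtained by re-running the shift-register construction from the proof of ${\bf 2)}\Rightarrow{\bf 1)}$ of Theorem \ref{theorem-structuralcontrollability} and noting that the horizon $l$ it produces depends only on $\tilde{{\bf G}}$ (not on $\tilde{{\bf F}}$ or on the initial state); your conflict-freeness bookkeeping via the unique out-neighbour of each channel is the right justification for why the backward assignments never collide. This part is sound and is essentially the paper's argument (the paper additionally quotes from \cite{margaliot2012aut1218} that $\eta$ fixed-time controllability implies $T$ fixed-time controllability for $T>\eta$, which you replace by the equally valid cyclic chain of implications).

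The gap is in your lower bound for the minimality of $\eta$. You argue that a node at edge-distance $d$ from the \emph{nearest} generator is, for $t<d$, a function of $\mu$ alone, and you then apply this to the root ``which sits at distance $l$.'' But the root is at distance $l$ from its nearest generator only in the \emph{padded} graph; the virtual simple nodes are a proof device and are absent from the actual system, where the root can be adjacent to a genuine generator while simultaneously receiving a much longer controlled path. Take the in-tree with edges $u_1\rightarrow v_1\rightarrow v_2\rightarrow v_3$ and $u_2\rightarrow v_3$: here $l=3$, yet every state node is within distance $2$ of some generator, so your criterion only excludes horizons $\eta\le 1$ (from $v_2$), whereas the true minimum is $3$ --- with all node functions taken to be conjunctions, $v_3(2)$ equals $\mu_{v_1}\wedge u_2(1)$ and cannot be driven to $1$ when $\mu_{v_1}=0$. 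The correct lower bound requires the distance to the \emph{farthest} generator together with a worst-case choice of dynamics: choosing every coupling to be ``$\wedge$,'' forcing a node $v$ to $1$ at time $t$ forces every vertex in its depth-$t$ backward cone to $1$, and if some backward path of length $t$ from $v$ terminates at a state node rather than a generator, the initial condition placing a $0$ there pins $v(t)=0$ irrespective of the inputs; hence $\eta\ge\max_v(\text{length of the longest backward path from }v)=l$. (To be fair, the paper asserts the minimality of $\eta=l$ without proving the lower bound at all, so you attempt more than it does --- but the attempt as written establishes a strictly weaker bound.)
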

\begin{proof}
As already reported in \cite{margaliot2012aut1218}, for \texttt{BCN} $B(\tilde{{\bf G}},\tilde{{\bf F}},{\bf U})$ that is structurally $\eta$ fixed-time controllable, it is also $T$ fixed-time controllable with $T>\eta$. Thereby the structural fixed-time controllability and structural $\eta$ fixed-time controllability are equivalent. We can observe that the design of input variables in the proof of deduction ${\bf 2}) \Rightarrow {\bf 1})$ in Theorem \ref{theorem-structuralcontrollability} is independent of the initial state. Thus, one can always realize the reachability within $l$ time steps, which is equal to the layer number of root in-trees.
\end{proof}

\begin{definition}\label{defn-trajectory-controllability}
\texttt{BCN} $B(\tilde{{\bf G}},\tilde{{\bf F}},{\bf U})$ is said to be trajectory controllable w.r.t. state nodes $\{v_j\mid j\in\ddot{\bf V}\} \subseteq \tilde{{\bf V}}$ if, for any given binary sequence $S_{\infty}=\{s_1,s_2,\cdots\}$ with $s_i\in \mathscr{D}^{\mid \ddot{\bf V} \mid}$, there exists an input sequence ${\bf u}(0), {\bf u}(1), \cdots$ such that ${\bf x}_{\ddot{\bf V}}(N)=s_1, {\bf x}_{\ddot{\bf V}}(N+1)=s_2, \cdots$ for certain integer $N$.
\end{definition}

\begin{definition}\label{defn-trajectory-structuralcontrollability}
\texttt{BCN} $B(\tilde{{\bf G}},\tilde{{\bf F}},{\bf U})$ is said to be structurally trajectory controllable w.r.t. state nodes $\{v_j\mid j\in\ddot{\bf V}\} \subseteq \tilde{{\bf V}}$ if, for any given binary sequence $S_{\infty}=\{s_1,s_2,\cdots\}$ with $s_i\in \mathscr{D}^{\mid \ddot{\bf V} \mid}$, all structurally equivalent \texttt{BCNs} are trajectory controllable.
\end{definition}

\begin{corollary}\label{cor-trajectory-controllability}
\texttt{BCN} $\Sigma(\tilde{{\bf G}},\tilde{{\bf F}},{\bf U})$ is structurally trajectory controllable w.r.t. its root nodes if it is structurally controllable.
\end{corollary}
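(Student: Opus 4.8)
The plan is to leverage Theorem \ref{theorem-structuralcontrollability} to reduce the statement to a purely structural ``pipeline'' argument on root in-trees, in the same spirit as the implication ${\bf 2)}\Rightarrow{\bf 1)}$ and Corollary \ref{cor-fixed-time}. First I would note that structural equivalence leaves the digraph $\tilde{{\bf G}}$ (and hence its decomposition) unchanged, varying only the dynamics; so it suffices to fix an arbitrary \texttt{BCN} that is structurally equivalent to $B(\tilde{{\bf G}},\tilde{{\bf F}},{\bf U})$, i.e. an arbitrary choice of minimally represented dynamics on the given structure, and to prove that this single \texttt{BCN} is trajectory controllable w.r.t. its root nodes $\{v_j\mid j\in\ddot{{\bf V}}\}$; the qualifier ``structurally'' then comes for free. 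By Theorem \ref{theorem-structuralcontrollability}, structural controllability lets me assume $\tilde{{\bf G}}$ splits into disjoint root in-trees whose leaves are generators, and, exactly as in the proof of ${\bf 2)}\Rightarrow{\bf 1)}$, I would first homogenize each tree by inserting virtual simple nodes so that all its generator-to-root paths share a common length $l$, turning it into an $l$-layer graph with the generators in layer $L_1$ and the root in layer $L_l$.

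The core of the argument is a time-decoupling observation. Fixing one in-tree and tracking how a state propagates upward, I would use that the digraph is acyclic, carries no self-loop, and every non-root vertex has out-degree one, so that the value of a vertex in layer $L_h$ at time $t$ is a function only of the values of its (layer-$L_{h-1}$) in-neighbors at time $t-1$, with no dependence on its own past or on sibling branches. Iterating up the tree shows that the root value ${\bf x}_r(t)$ is a function $g$ of a \emph{single} input wave, namely the generator values fed to that tree at time $t-(l-1)$; in particular, input waves launched at distinct times never interfere at the root, and for $t\geq l-1$ the root value is insensitive to the initial condition. I would then argue that $g$ is surjective onto $\mathscr{D}$: controllability of the chosen \texttt{BCN} (guaranteed by structural controllability) forces states with root value $0$ and with root value $1$ to both be reachable, and the root coordinate of any reached state equals $g$ evaluated at the last input wave, so $g$ attains both Boolean values. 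Because the trees are vertex-disjoint their generator sets are disjoint, so the joint map sending one input wave to the tuple of root values is surjective onto $\mathscr{D}^{\mid\ddot{{\bf V}}\mid}$.

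Assembling these pieces, given any target sequence $S_{\infty}=\{s_1,s_2,\dots\}$ I would set $N:=l-1$ and, for every $i\geq 1$, choose the input wave applied at time $i-1$ to be any preimage under the joint map of $s_i$ (filling earlier inputs arbitrarily). The non-interference property then yields ${\bf x}_{\ddot{{\bf V}}}(N+i-1)=s_i$ for all $i$, which is precisely trajectory controllability w.r.t. the root nodes; since the dynamics were arbitrary among the structurally equivalent \texttt{BCN}s, structural trajectory controllability follows. I expect the main obstacle to be making the time-decoupling rigorous, i.e. proving cleanly that each root value depends on exactly one input wave and is insensitive to all other inputs and (after the pipeline fills) to initial conditions, since this is where acyclicity, the out-degree-one property, and the layer homogenization must be combined; once that independence across time steps is established, the surjectivity of $g$ and the scheduling of input waves are routine.
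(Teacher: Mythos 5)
Your proof is correct and is essentially the paper's own argument: the paper disposes of this corollary in one line by pointing back to the design of the input variables in the proof of ${\bf 2)}\Rightarrow{\bf 1)}$ of Theorem \ref{theorem-structuralcontrollability}, which is exactly the layer-homogenized, non-interfering ``shift-register'' pipeline you spell out. The one place you deviate is in deriving surjectivity of the wave-to-root map $g$ from controllability; as stated (``the root coordinate of any reached state equals $g$ evaluated at the last input wave'') this is only true for states reached after the pipeline has filled, i.e.\ at times $t\geq l-1$, so that step needs a small patch, whereas the paper's route is more direct: each minimally represented $\tilde{{\bm f}}_j$ is non-constant and hence surjective onto $\mathscr{D}$, and the disjointness of sibling in-neighbor sets in the in-tree lets these local preimages be chosen independently layer by layer.
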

\begin{proof}
This conclusion can be implied by the design of input variables in the proof of deduction ${\bf 2}) \Rightarrow {\bf 1})$ in Theorem \ref{theorem-structuralcontrollability}.
\end{proof}

\begin{remark}
Theorem \ref{theorem-structuralcontrollability} and Corollary \ref{cor-fixed-time} reveal a phenomenon that \texttt{BCN} $\Sigma(\tilde{{\bf G}},\tilde{{\bf F}},{\bf U})$ is structurally controllable if and only if it is
structurally fixed-time controllable. This is pretty different with the conclusion in \cite{margaliot2012aut1218}.
\end{remark}

\begin{remark}
Corollary \ref{cor-fixed-time} and Corollary \ref{cor-trajectory-controllability} are visual from our results. Compared with the traditional \texttt{STP} approach in \cite{liht2020scis}, Theorem \ref{theorem-structuralcontrollability} provides a more visual way to analyze the structural fixed-time controllability of \texttt{BCNs}, except for the time complexity.
\end{remark}

\begin{remark}
Furthermore, it is worthwhile noticing that the toolbox provided by Cheng and his colleagues based on MATLAB can only deal with the controllability analysis of \texttt{BCNs} based on \texttt{STP} of matrices with $n\leq 25$
or so, as pointed out in \cite{zhaoy2010scl767}. As a contract, the structural criterion-Theorem \ref{theorem-structuralcontrollability}-can be checked within $\Theta(n^2)$ time, as provided in Algorithm \ref{algorithm-controllability}. It means that our approach is computationally efficient even for large-scale \texttt{BCNs}.
\end{remark}

\begin{algorithm}[h!]
\caption{Testing the structural controllability of \texttt{BCN} $B(\tilde{{\bf G}},\tilde{{\bf F}},{\bf U})$ with $n$ state variables and $m$ control inputs.}\label{algorithm-controllability}
\begin{algorithmic}[1]
\Require The graph $\tilde{{\bf G}}=(\tilde{{\bf V}},\tilde{{\bf E}})$.
\Ensure Whether \texttt{BCN} $B(\tilde{{\bf G}},\tilde{{\bf F}},{\bf U})$ is structurally controllable?
\If{$\tilde{{\bf G}}$ contains a cycle}
\State\Return{$B(\tilde{{\bf G}},\tilde{{\bf F}},{\bf U})$ is not structurally controllable}
\Else
\State create an $n$-bits list $L$ with initial values $0$
\If{$\mid \mathbf{N}^{\text{out}}_i\mid\equiv 1$}
\For{$v_j\in \mathbf{N}^{\text{out}}_i$}
\State $L(j)\leftarrow L(j)+1$
\EndFor
\EndIf
\EndIf
\If{$L(j)=\mid\mathbf{N}^{\text{in}}_j\mid$ for $j\in[1,n]$}
\State\Return{\texttt{BCN} $B(\tilde{{\bf G}},\tilde{{\bf F}},{\bf U})$ is structurally controllable}
\EndIf
\end{algorithmic}
\end{algorithm}	

\subsection{Complexity Analysis of Minimum Node Control Problem}
In order to prove the NP-hardness of Problem \ref{problem-minimalnodecontrol}, we prove a somewhat stronger result: the minimum node control problem for the structural controllability of $3$-layer \texttt{BCNs} with customized $\breve{{\bf G}}$ is NP-hard. These $3$-layer \texttt{BCNs} were firstly proposed in \cite{margaliot2018auto56} and network structure $\breve{{\bf G}}$ satisfies that every vertex in the layer $1$ is a generator jointing to a vertex in layer $2$ and every vertex in layer $2$ has the unique in-neighbor lying in the layer $1$. With regard to such \texttt{BCN}, we can consider its structural controllability by directly applying Theorem \ref{theorem-structuralcontrollability}.
\begin{corollary}\label{cor-3layer}
Consider a $3$-layer \texttt{BCN} with the above network structure $\check{{\bf G}}$. It is structurally controllable if and only if the out-degree of each vertex in layer $2$ is no more than $1$.
\end{corollary}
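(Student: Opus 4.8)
The plan is to derive the corollary directly from the equivalence ${\bf 1)} \Leftrightarrow {\bf 3)}$ of Theorem \ref{theorem-structuralcontrollability}, since the $3$-layer structure makes every hypothesis of that theorem trivial to check except the out-degree condition on layer $2$. First I would record the node classification induced by $\check{{\bf G}}$: each vertex of layer $1$ is a generator, each vertex of layer $2$ has a single generator as in-neighbor and is therefore a control node, and each vertex of layer $3$ can only receive edges from layer $2$ (by Definition \ref{def-layernetwork}), so it is never adjacent to a generator and is thus a simple node. Because Definition \ref{def-layernetwork} forces every edge to go from $L_h$ to $L_{h+1}$, the digraph $\check{{\bf G}}$ is automatically acyclic; hence the acyclicity requirement in statement ${\bf 3)}$ of Theorem \ref{theorem-structuralcontrollability} holds for free, and only condition (\textbf{C1})—that the in-neighbor set of every simple node be non-empty and consist solely of channels—remains to be analyzed.

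Next I would translate (\textbf{C1}) into the out-degree statement. The simple nodes are exactly the layer-$3$ vertices, and each of their in-neighbors lies in layer $2$. A layer-$2$ vertex $v$ qualifies as a channel precisely when its out-degree equals $1$, the no-self-loop clause being automatic from acyclicity. Therefore (\textbf{C1}) holds if and only if every layer-$3$ vertex has a non-empty in-neighbor set and each such in-neighbor has out-degree exactly $1$. For the necessity direction I would argue contrapositively: if some layer-$2$ vertex $v$ has out-degree at least $2$, then $v$ has two out-neighbors in layer $3$, each of which is a simple node having $v$ as an in-neighbor; since $v$ is not a channel, (\textbf{C1}) fails, so by Theorem \ref{theorem-structuralcontrollability} the \texttt{BCN} is not structurally controllable. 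For sufficiency, assuming every layer-$2$ vertex has out-degree at most $1$, any in-neighbor of a simple node is a layer-$2$ vertex with out-degree at least $1$, hence out-degree exactly $1$, hence a channel; combined with acyclicity this yields (\textbf{C1}), and therefore, via ${\bf 3)} \Rightarrow {\bf 1)}$, structural controllability.

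The one point requiring care—and the step I expect to be the main obstacle—is the non-emptiness clause of (\textbf{C1}) in the sufficiency direction: a layer-$3$ vertex with no in-neighbor is an isolated zero-in-degree node whose state is frozen, and this would break controllability regardless of the out-degree condition. I would resolve this by invoking the defining shape of the customized $\check{{\bf G}}$, in which every layer-$3$ vertex is fed from layer $2$, so that no simple node has empty in-neighbor set; equivalently, one may note that such an isolated node is excluded exactly as in the zero-in-degree remark inside the proof of Theorem \ref{theorem-structuralcontrollability}. With this observation the two implications above combine into the claimed equivalence: the $3$-layer \texttt{BCN} with structure $\check{{\bf G}}$ is structurally controllable if and only if the out-degree of each vertex in layer $2$ is no more than $1$.
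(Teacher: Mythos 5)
Your argument is correct and is exactly the route the paper intends: the corollary is stated there without an explicit proof, as a direct application of Theorem \ref{theorem-structuralcontrollability}, and your proposal simply fleshes out statement ${\bf 3)}$ (acyclicity plus condition (\textbf{C1})) for the $3$-layer structure, including the necessity/sufficiency translation of (\textbf{C1}) into the layer-$2$ out-degree bound. Your added care about the non-emptiness of the in-neighbor sets of layer-$3$ vertices is a genuine point the paper glosses over, and your resolution (it is part of the defining shape of the customized structure) is the right one.
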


To proceed, once polynomial-time Algorithm \ref{algorithm-controllability} has been provided, what we need to do is to give a polynomial-time reduction from an NP-hard problem to Problem \ref{problem-minimalnodecontrol}  for the above $3$-layer \texttt{BCNs}.
\begin{definition}[see \cite{minimumvertexcover}]\label{def-vertexcover}
Given an undirected graph $G=(V,E)$, vertex subset $\tilde{V} \subseteq V$ is called a vertex cover of graph $G$ if, for every undirected edge $e_{ij}\in E$, exact one of $i \in \tilde{V}$ or $j \in \tilde{V}$ holds.
\end{definition}

\begin{problem}[see \cite{minimumvertexcover}]\label{problem-min-vertex-cover}
Given an undirected graph $G=(V,E)$, the minimum vertex cover problem is to find the minimum vertex subset $\tilde{V}^\ast$ for $G$.
\end{problem}

\begin{lemma}[see \cite{minimumvertexcover}]
Problem \ref{problem-min-vertex-cover} is NP-hard.
\end{lemma}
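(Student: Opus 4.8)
The plan is to establish the stronger statement that the \emph{decision} version of vertex cover---given $G=(V,E)$ and an integer $k$, decide whether $G$ admits a vertex cover $\tilde{V}$ with $\mid\tilde{V}\mid\leq k$---is NP-complete; NP-hardness of the minimization problem (Problem \ref{problem-min-vertex-cover}) then follows immediately. Membership in NP is clear, since any candidate $\tilde{V}$ is a polynomial-size certificate verifiable in $O(\mid E\mid)$ time by checking that every edge has an endpoint in $\tilde{V}$. The substance is the NP-hardness, which I would obtain by a polynomial-time (Karp) reduction from 3-SAT, known to be NP-complete by the Cook--Levin theorem.

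First I would fix a 3-CNF formula $\varphi$ over variables $x_1,\dots,x_p$ with clauses $C_1,\dots,C_q$, each a disjunction of three literals, and build a graph $G_\varphi$ as follows. For each variable $x_i$ introduce a \emph{truth-setting} gadget consisting of two adjacent vertices labelled $x_i$ and $\neg x_i$. For each clause $C_j=(\ell_{j,1}\vee\ell_{j,2}\vee\ell_{j,3})$ introduce a \emph{clause} gadget consisting of a triangle on three vertices, one per literal occurrence. Finally, join each literal-occurrence vertex of a clause triangle, by a cross edge, to the matching literal vertex of its variable gadget. Set the budget $k=p+2q$. This construction is computable in time polynomial in $\mid\varphi\mid$.

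The remaining and central step is the equivalence: $\varphi$ is satisfiable $\iff$ $G_\varphi$ has a vertex cover of size at most $p+2q$. The forward direction is routine: from a satisfying assignment, pick in each variable gadget the vertex of the true literal ($p$ vertices), and in each (satisfied) clause triangle leave out one vertex whose literal is true and take the other two ($2q$ vertices); one checks that the triangle edges are covered (any two of three triangle vertices suffice) and that every cross edge is covered---either its triangle endpoint is taken, or, for the omitted vertex, its variable-gadget endpoint is taken because that literal is true.

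The harder backward direction is where I expect the main obstacle. Since each truth-setting edge forces at least one of its two endpoints and each clause triangle forces at least two of its three vertices, \emph{every} cover has size $\geq p+2q$; hence a cover meeting the budget must take \emph{exactly} one vertex per variable gadget (yielding a consistent assignment) and \emph{exactly} two per triangle. The delicate point is to argue, via the cross edges, that the single omitted triangle vertex in each clause is matched in its variable gadget by a chosen (hence true) literal, so that every clause has a true literal and the induced assignment satisfies $\varphi$. Packaging this counting-and-cross-edge argument cleanly, rather than the gadget construction itself, is the crux. An alternative that avoids building a new gadget is to invoke the elementary fact that $\tilde{V}$ is a vertex cover of $G$ iff $V\setminus\tilde{V}$ is an independent set, reducing instead from maximum independent set (equivalently maximum clique on the complement graph), whose hardness may then be cited directly.
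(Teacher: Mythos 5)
Your proposal is correct, but note that the paper does not prove this lemma at all: it is stated as a known result and attributed to Karp's 1972 paper \cite{minimumvertexcover}, so there is no in-paper argument to compare against. What you have written is the standard textbook reduction from 3-SAT (variable gadgets as edges, clause gadgets as triangles, cross edges, budget $p+2q$), and both directions of your equivalence, including the counting argument forcing exactly one vertex per variable gadget and exactly two per triangle, are sound; your closing remark that one could instead reduce via independent set is in fact closer to Karp's original chain of reductions. In the context of this paper your self-contained proof is more than is needed -- a citation suffices -- but it is a valid and complete justification of the cited fact.
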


\begin{theorem}\label{thm-nphard}
Given \texttt{BN} $B({\bf G},{\bf F})$, Problem \ref{problem-minimalnodecontrol} is NP-hard.
\end{theorem}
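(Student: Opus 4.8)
The plan is to give a polynomial-time reduction from the minimum vertex cover problem (Problem \ref{problem-min-vertex-cover}, which is NP-hard) to Problem \ref{problem-minimalnodecontrol}, so that a polynomial-time solver for the latter would solve the former. Given an undirected graph $G=(V,E)$ — which I may assume has no isolated vertex, since deleting isolated vertices changes neither $G$'s minimum vertex cover nor the reduction's validity — I would build a \texttt{BN} $B({\bf G}',{\bf F}')$ whose vertex set is $\{s_\epsilon\mid \epsilon\in E\}\cup\{d_w\mid w\in V\}$ and whose edge set is $\{(s_\epsilon,d_u),(s_\epsilon,d_v)\mid \epsilon=\{u,v\}\in E\}$. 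The component dynamics ${\bf F}'$ may be taken arbitrarily, since Definition \ref{def-structuralcontrollable} makes structural controllability a property of the network structure alone. This construction is clearly computable in time polynomial in $|V|+|E|$, and once the sources are controlled it takes exactly the customized $3$-layer shape $\check{{\bf G}}$: the added generators form layer $1$, the $s_\epsilon$ form layer $2$, and the $d_w$ form layer $3$.

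The heart of the argument is to prove that a set $\Lambda$ makes the resulting \texttt{BCN} (\ref{equation-LCN}) structurally controllable if and only if $\Lambda=\{s_\epsilon\mid\epsilon\in E\}\cup\{d_w\mid w\in\tilde{V}\}$ for some vertex cover $\tilde{V}$ of $G$. First I would show every source $s_\epsilon$ must belong to $\Lambda$: each $s_\epsilon$ has in-degree zero, so if left as a simple node its in-neighbor set would be empty, violating condition ({\bf C1}) of Theorem \ref{theorem-structuralcontrollability}; controlling it is the only remedy, which contributes the fixed cost $|E|$. Once all $s_\epsilon$ are controlled, pinning a node $d_w$ (setting ${\bm x}_{d_w}(t+1)={\bm u}_{d_w}(t)$) deletes precisely the in-edges $(s_\epsilon,d_w)$, i.e. it lowers by one the out-degree of every $s_\epsilon$ incident to $w$. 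By Corollary \ref{cor-3layer}, the network is structurally controllable exactly when each layer-$2$ vertex $s_\epsilon$ has out-degree at most $1$; since $s_\epsilon$ with $\epsilon=\{u,v\}$ initially points to both $d_u$ and $d_v$, this holds iff at least one of $d_u,d_v$ is controlled. Ranging over all edges, the controlled subset of $\{d_w\}$ must hit an endpoint of every edge — that is, it must be a vertex cover of $G$ — and conversely any vertex cover turns every $s_\epsilon$ into a channel, so Theorem \ref{theorem-structuralcontrollability} restores controllability.

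It then follows that the minimal controlled set has cardinality $|E|+\tau(G)$, where $\tau(G)$ is the size of a minimum vertex cover. Since the summand $|E|$ is forced on every admissible solution, minimizing $|\Lambda|$ is equivalent to minimizing $|\tilde{V}|$. Hence a polynomial-time algorithm for Problem \ref{problem-minimalnodecontrol} would recover $\tau(G)$ in polynomial time, which is impossible unless $\mathrm{P}=\mathrm{NP}$. This establishes the NP-hardness of Problem \ref{problem-minimalnodecontrol}.

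I expect the main obstacle to be the tightness of the equivalence in the second paragraph — verifying \emph{both} directions of ``controllable $\Leftrightarrow$ vertex cover'' through the exact wording of condition ({\bf C1}), namely that every simple node's in-neighbor set be nonempty and consist entirely of channels, and not merely through the out-degree criterion of Corollary \ref{cor-3layer}. In particular one must check that, after controlling a vertex cover, each surviving simple node $d_w$ still has a nonempty set of in-neighbors, all of which have become channels; this is exactly where the no-isolated-vertex assumption and the forced controlling of every source $s_\epsilon$ are genuinely needed.
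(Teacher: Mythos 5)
Your proposal is correct and follows essentially the same route as the paper's proof: a polynomial-time reduction from minimum vertex cover to Problem \ref{problem-minimalnodecontrol} via the edge--vertex incidence digraph, in which the edge-gadget nodes are forced into every admissible control set and the remaining controlled nodes must form a vertex cover by the out-degree/channel criterion of Theorem \ref{theorem-structuralcontrollability} and Corollary \ref{cor-3layer}. The only substantive difference is the forcing gadget — you give each edge-node in-degree zero so that condition ({\bf C1}) compels its control, whereas the paper attaches a self-loop to each edge-node (and introduces auxiliary vertices $v_i'$ to accommodate self-loops of $G$, which your restriction to simple graphs renders unnecessary); both mechanisms are valid and the rest of the argument, including the identity $|\Lambda^\ast|=|E|+\tau(G)$, matches the paper's.
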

\begin{proof}
Algorithm \ref{algorithm-controllability} has been developed to check whether a \texttt{BCN} is structurally controllable in a polynomial time. Thus, this problem is in P (the structural controllability of a given \texttt{BCN} can be checked in a polynomial time); and it suffices to complete the proof by exhibiting a polynomial-time reduction from NP-hard Problem \ref{problem-min-vertex-cover}.

Our proof proceeds to prove a somewhat stronger result that the minimum node control problem for a $3$-layer \texttt{BCN}, given in Corollary \ref{cor-3layer}, is NP-hard.

In Problem \ref{problem-min-vertex-cover}, we are given an undirected graph $G=(V,E)$. Then a digraph $\hat{{\bf G}}=(\hat{{\bf V}},\hat{{\bf E}})$ is constructed to guarantee that once Problem \ref{problem-minimalnodecontrol} of $3$-layer \texttt{BCNs} with networks structure $\hat{{\bf G}}$ is solved, so is Problem \ref{problem-min-vertex-cover} for undirected graph $G$. We now provide the polynomial time reduction. For every self loop $(v_i,v_i)$, we add a new node $v_i'$ and collect them by the set $V'$. Define $\hat{{\bf V}}_2=V\cup V'$ and $\hat{{\bf V}}_3=E$, then let $\hat{{\bf V}}=\hat{{\bf V}}_2 \cup \hat{{\bf V}}_3$. Afterwards, we construct the directed edge set $\hat{{\bf E}}$. For every vertex $e_{ij}\in \hat{{\bf V}}_3$, one induces a self loop $(e_{ij},e_{ij})\in \hat{{\bf E}}_1$. Besides, for each $v_i,v_j\in\hat{{\bf V}}_2$, we assign the directed edges $(e_{ij},v_i)\in \hat{{\bf E}}_2$ and $(e_{ij},v_j)\in \hat{{\bf E}}_2$. Particularly, if self loop $(v_i,v_i)\in E$, then it is enough to assign directed edges $(e_{ii},v_i)\in \hat{{\bf E}}_2$ and $(e_{ii},v'_i)\in \hat{{\bf E}}_2$. Let $\hat{{\bf E}}=\hat{{\bf E}}_1 \cup \hat{{\bf E}}_2$.
\begin{figure}[h!]
\centering
\includegraphics[width=0.4\textwidth=0.45]{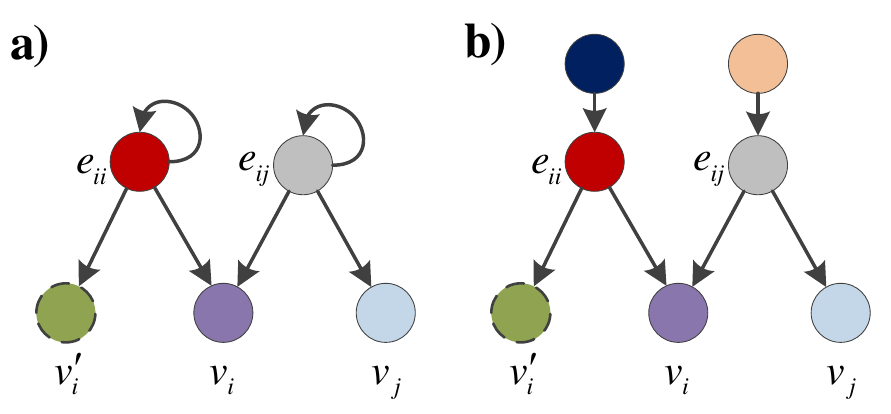}
\caption{Subgraph a) gives the construction of digraph $\hat{{\bf G}}$ from $G$, and subgraph b) presents the network structure of $3$-layer \texttt{BCN} after controlling nodes in $\hat{{\bf V}}_3$.}\label{fig-reductions}
\end{figure}

In the case that the \texttt{BN} with the above network structure $\hat{{\bf G}}$. We further consider Problem \ref{problem-minimalnodecontrol}. Because vertex $e_{ij}\in \hat{{\bf V}}_3$ has a self loop, it must be selected as a controlled node. Suppose that set $\Lambda^\ast$ is the minimum controlled node set of the \texttt{BN} with network structure $\hat{{\bf G}}$, then we can find another solution of $\vec{\Lambda}^\ast$ satisfying $\vec{\Lambda}^\ast \cap V' =\emptyset$ and $\mid \vec{\Lambda}^\ast \mid=\mid\Lambda^\ast\mid$ via replacing the $v'_i$ in $\Lambda^\ast$ by $v_i$. We then prove that $\tilde{\Lambda}=\vec{\Lambda}^\ast \backslash \hat{{\bf V}}_3$ is exactly the minimum vertex cover of undirected graph $G=(V,E)$.

For one thing, we illustrate that $\tilde{\Lambda}$ is a vertex cover of undirected graph $G$. Notice that every vertex in layer $2$ has out-degree $2$, if we can control the vertices in $\vec{\Lambda}^\ast\backslash \hat{{\bf V}}_3$ so as to control the out-degree of vertices in layer $2$, i.e., $\hat{{\bf V}}_3$, to be either $1$ or $0$. Therefore, for each vertex $e_{ij}\in \hat{{\bf V}}_2$, it implies $v_i\in \vec{\Lambda}^\ast\backslash \hat{{\bf V}}_3$ or $v_j\in \vec{\Lambda}^\ast\backslash \hat{{\bf V}}_3$. It also holds for self loop $(v_i,v_i)\in E$. Therefore, set $\tilde{\Lambda}$ satisfies Definition \ref{def-vertexcover} and $\tilde{\Lambda}$ is a vertex cover of graph $G$.

For the other thing, we prove that the set $\tilde{\Lambda}$ is minimal. Seek a contradiction. Suppose that there is another set $\hat{\Lambda}$ with $\mid \hat{\Lambda} \mid < \mid\tilde{\Lambda} \mid$ such that it is also a vertex cover of graph $G$. Thus, if we control the nodes in $\hat{\Lambda} \cup \hat{{\bf V}}_3$, then it can be checked that the \texttt{BCN} after control is structurally controllable. Moreover, one derives $\hat{\Lambda} \cap \hat{{\bf V}}_3 = \emptyset$ and $\tilde{\Lambda} \cap \hat{{\bf V}}_3=\emptyset$, which implies $\mid\hat{\Lambda} \cup \hat{{\bf V}}_3\mid < \mid\tilde{\Lambda} \cup \hat{{\bf V}}_3\mid=\mid\vec{\Lambda}^\ast\mid$. It is a contradiction because set $\vec{\Lambda}^\ast$ is the minimal node set.
\end{proof}

The results in Theorem \ref{theorem-structuralcontrollability} would be used to overcome some other difficult control problems, including network aggregation w.r.t. controllability of \texttt{BNs}, pinning controllability of \texttt{BNs} as well as stabilization in probability of \texttt{PBNs}. These results will be respectively displayed in Section \ref{sec-network aggregation}, Section \ref{sec-pin-controllability} and Section \ref{sec-stability in probability}.

\section{Network Aggregation With Regard to Controllability}\label{sec-network aggregation}
As mentioned in \cite{zhaoy2015tnnls}, Zhao {\em et al.} formulated that ``any meaningful and efficient sufficient condition for controllability has not yet be found''. Moreover, they also pointed out that this is inherently a
difficult problem and an open problem. The main reason is that some state variables may act as the input variables of another subnetwork, but their values cannot be arbitrarily injected as generators. Compared with the existing results on controllability, the root nodes of a structurally controllable \texttt{BCN} are also trajectory controllable; it guarantees the potential possibility to provide a network aggregation w.r.t. controllability.

In this section, we will investigate a network aggregation approach, which not only addresses the problems ({\bf P1}) and ({\bf P5}) to some extent but also gives an approach to solve Problem \ref{problem-minimalnodecontrol}.

\subsection{Network Aggregation for Controllability}
Given \texttt{BCN} $B(\tilde{{\bf G}},\tilde{{\bf F}},{\bf U})$ with $\tilde{{\bf G}}=(\tilde{{\bf V}},\tilde{{\bf E}})$, set $\mathscr{X}=\{v_1,v_2,\cdots,v_n\}$ and set $\mathscr{U}=\{u_1,u_2,\cdots,u_m\}$ are respectively used to denote the sets of state nodes and generators.
\begin{definition}[see \cite{zhangkz2020tac}]
Given \texttt{BCN} $B(\tilde{{\bf G}},\tilde{{\bf F}},{\bf U})$, the partition $\tilde{{\bf V}}=\mathbf{N}_1\cup\mathbf{N}_2\cup\cdots\cup\mathbf{N}_\omega$ is said to be a network aggregation if, $\mathbf{N}_i$ is a nonempty subset of $\mathscr{X}\cup\mathscr{U}$ and $\mathbf{N}_i \cap \mathbf{N}_j=\emptyset$, for any $i,j\in[1,\omega]$.
\end{definition}

By viewing set $\mathbf{N}_i$, $i\in[1,\omega]$, as a supper node, we can construct the aggregated graph $\tilde{{\bf G}}_A=(\tilde{{\bf V}}_A,\tilde{{\bf E}}_A)$, where $\tilde{{\bf V}}_A=\{\mathbf{N}_i\mid i\in[1,\omega]\}$, and $(\mathbf{N}_i,\mathbf{N}_j)\in \tilde{{\bf E}}_A$ if and only if one has $(v_{k_i},v_{k_j})\in \tilde{{\bf E}}$ for some $v_{k_i}\in \mathbf{N}_i$ and $v_{k_j}\in\mathbf{N}_j$. With regard to each supper node $\mathbf{N}_i$, the corresponding \texttt{BCN} is briefly termed as subnetwork $B(\tilde{{\bf G}}_i,\tilde{{\bf F}}_i,{\bf U}_i)$. More precisely, the state node set of subnetwork $B(\tilde{{\bf G}}_i,\tilde{{\bf F}}_i,{\bf U}_i)$ is given as $\mathbf{N}_i \cap \mathscr{X}$, while the input nodes are collected by $\mathscr{U}_i \cup \mathscr{U}'_i$, where $\mathscr{U}_i=\mathbf{N}_i \cap \mathscr{U}$ and $\mathscr{U}'_i=\{v\in\mathscr{X}\backslash\mathbf{N}_i \mid \exists v'\in \mathbf{N}_i\cap \mathscr{X},~\text{s.t.}~(v,v')\in \tilde{{\bf E}}\}$.

\begin{remark}
The most common network aggregation is the acyclic aggregation based on strongly connected components; it means that digraph $\tilde{{\bf G}}_A$ is acyclic \cite{zhangkz2020tac,zhaoy2015tnnls,zhaoyin2013tac1976}. However, the general acyclic network aggregation is obviously not suitable for controllability.
\end{remark}

In addition, according to Theorem \ref{theorem-structuralcontrollability}, the acyclic condition in digraph $\tilde{{\bf G}}_A$ seems to be prerequisite. Thus, we strengthen the traditional acyclic network aggregation so as to make it appropriate for controllability analysis.
\begin{definition}\label{def-aggregation}
Given \texttt{BCN} $B(\tilde{{\bf G}},\tilde{{\bf F}},{\bf U})$, the above acyclic aggregation $\mathbf{N}=\mathbf{N}_1\cup\mathbf{N}_2\cup\cdots\cup\mathbf{N}_\omega$ is called an channel aggregation if, there is an oriented edge $(v_\mu,v_o)\in\tilde{{\bf E}}$ with $v_\mu\in\mathbf{N}_i$ and $v_o\in\mathbf{N}_j$, such that $(v_{\tilde{\mu}},v_{\tilde{o}})\in \tilde{{\bf E}}$ and $v_{\tilde{\mu}}\in{\bf N}_i$ can imply $v_{\tilde{o}}\in\mathbf{N}_j$. On this basis, if $v_{\mu}=v_{\tilde{\mu}}$ can obtain $v_{\tilde{o}}=v_{o}$, then it is said to be a single-source-channel aggregation.
\end{definition}

\begin{theorem}
Given \texttt{BCN} $B(\tilde{{\bf G}},\tilde{{\bf F}},{\bf U})$ and a single-source-channel network aggregation $\mathbf{N}=\mathbf{N}_1\cup\mathbf{N}_2\cup\cdots\cup\mathbf{N}_\omega$, the entire \texttt{BCN} is controllable if, its root subnetworks are controllable and other subnetworks are structurally controllable.
\end{theorem}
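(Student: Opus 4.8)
The plan is to exploit the acyclicity of the aggregated digraph $\tilde{{\bf G}}_A$ together with the trajectory controllability granted by Corollary \ref{cor-trajectory-controllability}. First I would fix a topological ordering $\mathbf{N}_{\pi(1)},\mathbf{N}_{\pi(2)},\dots,\mathbf{N}_{\pi(\omega)}$ of the super-nodes, which exists because a channel aggregation is in particular an acyclic aggregation; the ordering is chosen so that every inter-block edge $(v_\mu,v_o)\in\tilde{{\bf E}}$ with $v_\mu\in\mathbf{N}_i$ and $v_o\in\mathbf{N}_j$ has $i$ preceding $j$. Processing the blocks along this order reduces the global reachability requirement of Definition \ref{def-controllability} to a sequence of local reachability requirements, one per subnetwork $B(\tilde{{\bf G}}_i,\tilde{{\bf F}}_i,{\bf U}_i)$, where the only coupling between consecutive blocks is through the finitely many interface vertices recorded by the sets $\mathscr{U}'_j$.

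The central observation is that, for every non-root subnetwork $B(\tilde{{\bf G}}_i,\tilde{{\bf F}}_i,{\bf U}_i)$, structural controllability yields---via Corollary \ref{cor-trajectory-controllability}---that its root (output) nodes are trajectory controllable: they can be forced to realize any prescribed binary sequence after a finite delay. The single-source-channel property of Definition \ref{def-aggregation} is exactly what lets me treat these output nodes as \emph{free generators} for the downstream block. Indeed, the channel part guarantees that all inter-block edges leaving $\mathbf{N}_i$ terminate in a single block $\mathbf{N}_j$, so no output node is simultaneously claimed by two distinct downstream subnetworks; the single-source part guarantees that a fixed source vertex feeds a fixed interface vertex, so that the arbitrary trajectories prescribed at the interface are mutually consistent and can be imposed independently. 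Consequently, each downstream block perceives its incoming interface signals as an unconstrained input stream, identical in effect to the generator inputs used in the proof of ${\bf 2)}\Rightarrow{\bf 1)}$ of Theorem \ref{theorem-structuralcontrollability}.

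With this reduction in hand I would argue by induction along the topological order, treating each weakly connected component (each in-tree of $\tilde{{\bf G}}_A$) separately. The root subnetworks, being controllable, can be steered to any target assignment on their own state nodes once their inputs are supplied; the upstream non-root subnetworks supply those inputs because their roots are trajectory controllable. To splice the local control laws into a single global one I would invoke the fixed-time controllability of Corollary \ref{cor-fixed-time}: each structurally controllable block realizes its required interface trajectory within a fixed number of steps equal to its layer number, so I can left-pad the control of each block by the cumulative delay of all its predecessors and thereby synchronise every interface signal to arrive precisely when the downstream block consumes it.

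The main obstacle I anticipate is this synchronisation-and-consistency bookkeeping. One must verify that the finite tracking delay of Corollary \ref{cor-trajectory-controllability} never forces conflicting demands on a shared interface vertex across different stages of the induction, and that the trajectory required of an upstream root over the \emph{entire} horizon---not merely its terminal value---remains realizable. The single-source-channel hypothesis is precisely the structural condition that rules out such fan-out conflicts and keeps the inter-block dependency acyclic, so the crux of the proof is to formalise that it guarantees a well-defined, conflict-free assignment of interface trajectories block by block, after which controllability of the root subnetworks closes the argument.
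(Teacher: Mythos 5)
Your overall strategy coincides with the paper's: use Corollary \ref{cor-trajectory-controllability} to let the roots of each structurally controllable upstream block act as free generators for the block downstream of it, use the controllability of the root subnetworks to hit their target components, and use the single-source-channel property of Definition \ref{def-aggregation} plus the fixed layer depth to keep the interface signals well defined and correctly timed. The paper organizes this as a two-block split (the union of all non-root subnetworks versus the root subnetwork, with all generator-to-root path lengths padded to a common $l$) rather than a block-by-block induction over a topological order, but that is a presentational difference, not a different idea.

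There is, however, one concrete gap. Controllability in the sense of Definition \ref{def-controllability} requires the \emph{entire} state vector to reach the target $\beta$, and $\beta$ prescribes values for the state nodes of the non-root subnetworks as well. Your argument ends with ``controllability of the root subnetworks closes the argument,'' i.e., it only delivers $\beta$ restricted to the root blocks; the upstream blocks are used purely as signal relays and are left in whatever state the relaying forces on them. The paper's proof has an explicit final phase to repair this: after the generator inputs ${\bf u}^{0},\dots,{\bf u}^{\kappa-1}$ have been chosen so that the upstream roots realize the interface trajectory required by the root subnetwork, the upstream union sits in some determined state $\alpha'$, and its structural controllability is invoked once more to append $l$ further generator inputs ${\bf u}^{\kappa},\dots,{\bf u}^{\kappa+l-1}$ that drive it from $\alpha'$ to its own target components of $\beta$. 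The $l$-step propagation delay through the in-trees is what makes this final phase harmless: those last inputs reach the upstream roots only after time $\kappa+l$, so they cannot corrupt the interface sequence already delivered to the root subnetwork during $[l,l+\kappa-1]$, and both parts arrive at their targets simultaneously at time $\kappa+l$. You should add this final steering phase together with the accompanying non-interference argument; without it the proof establishes reachability only of the root-subnetwork coordinates, not controllability of the entire \texttt{BCN}.
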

\begin{proof}
Without loss of generality,  we assume that subnetwork $B(\tilde{{\bf G}}_\omega,\tilde{{\bf F}}_\omega,\tilde{{\bf U}}_{\omega})$ is the unique root one and then establish the proof. Suppose that subnetwork $B(\tilde{{\bf G}}_\omega,\tilde{{\bf F}}_\omega,\tilde{{\bf U}}_\omega)$ is controllable and subnetworks $B(\tilde{{\bf G}}_j,\tilde{{\bf F}}_j,\tilde{{\bf U}}_{j})$, $j\in[1,\omega-1]$, are structurally controllable. Additionally, we suppose that the length of the unique path from each generator to the root subnetwork is the same as $l$; otherwise, one can add the virtual nodes as in the proof of Theorem \ref{theorem-structuralcontrollability}.

Since $\mathbf{N}=\mathbf{N}_1\cup\mathbf{N}_2\cup\cdots\cup\mathbf{N}_\omega$ is a single-source-channel aggregation, the subnetwork, denoted by $\bigcup_{i=1}^{\omega-1}B(\tilde{{\bf G}}_i,\tilde{{\bf F}}_i,\tilde{{\bf U}}_i)$ is structurally controllable according to Theorem \ref{theorem-structuralcontrollability}, that is, $\Sigma_{i=1}^{\omega-1}{\bf N}_i$ is a group of disjoint root in-trees.

Given any two initial states $\alpha,\beta\in \mathscr{D}^n$, since we have set that the length of path from each generator to root subnetwork is the same, there holds $\mathbf{N}_\omega\cap\mathscr{U}=\emptyset$. Thus, $\mathscr{U}_\omega=\emptyset$ and the values of $\mathscr{U}_{\omega}$ between time instants $t=0$ and $t=l-1$, denoted by ${\bf u}^{\omega,0},{\bf u}^{\omega,2},\cdots,{\bf u}^{\omega,l-1}$, are actually determined by initial state $\alpha$ of the entire \texttt{BCN}. Once the input sequence ${\bf u}^{\omega,0},{\bf u}^{\omega,2},\cdots,{\bf u}^{\omega,l-1}$ has been given, the state of subnetwork $B(\tilde{{\bf G}}_\omega,\tilde{{\bf F}}_\omega,\tilde{{\bf U}}_\omega)$ at time instant $l$ can be determined as ${\bf x}^{\omega,l}$. Then, because of the controllability of subnetwork $B(\tilde{{\bf G}}_\omega,\tilde{{\bf F}}_\omega,\tilde{{\bf U}}_\omega)$, we can find another input sequence ${\bf u}^{\omega,l},{\bf u}^{\omega,l+1},\cdots,{\bf u}^{\omega,l+\kappa-1}$ to steer the state of this \texttt{BCN} from ${\bf x}^{\omega,l}$ to $\beta_{{\bf N}_{\omega}\cap\mathscr{X}}$. By Corollary \ref{cor-trajectory-controllability}, the root nodes in subnetwork $\bigcup_{i=1}^{\omega-1}B(\tilde{{\bf G}}_i,\tilde{{\bf F}}_i,\tilde{{\bf U}}_i)$ can be trajectory controllable w.r.t. sequence ${\bf u}^{\omega,l},{\bf u}^{\omega,l+1},\cdots,{\bf u}^{\omega,l+\kappa-1}$. The corresponding input sequence for generators is denoted as ${\bf u}^{0},{\bf u}^{1},\cdots,{\bf u}^{\kappa-1}$.

On the other hand, we can also determine the state $\alpha'_{\bigcup_{i=1}^{\omega-1} {\bf N}_i\cap{\mathscr{X}_i}}$ of subnetwork $\bigcup_{i=1}^{\omega-1}B(\tilde{{\bf G}}_i,\tilde{{\bf F}}_i,\tilde{{\bf U}}_i)$ via the input sequence ${\bf u}^{0},{\bf u}^{1},\cdots,{\bf u}^{\kappa-1}$. Since \texttt{BCN} $\bigcup_{i=1}^{\omega-1}B(\tilde{{\bf G}}_i,\tilde{{\bf F}}_i,\tilde{{\bf U}}_i)$ is structurally controllable, one can find another input sequence ${\bf u}^{\kappa},{\bf u}^{\kappa+1},\cdots,{\bf u}^{\kappa+l-1}$ to dominate its trajectory from $\alpha'_{\bigcup_{i=1}^{\omega-1} {\bf N}_i\cap{\mathscr{X}_i}}$ to $\beta_{\bigcup_{i=1}^{\omega-1} {\bf N}_i\cap{\mathscr{X}_i}}$.

By injecting the values ${\bf u}^{0},{\bf u}^{1},\cdots,{\bf u}^{\kappa-1},{\bf u}^{\kappa},{\bf u}^{\kappa+1},\cdots,{\bf u}^{\kappa+l-1}$ to generators ${\bm u}_i$ in turn, one can realize the reachability of \texttt{BCN} $B(\tilde{{\bf G}},\tilde{{\bf F}},{\bf U})$ from state $\alpha\in \mathscr{D}^n$ to $\beta\in \mathscr{D}^n$. Due to the arbitrariness of $\alpha,\beta\in\mathscr{D}$, the controllability of the entire \texttt{BCN} can be proved.
\end{proof}

\begin{theorem}\label{thm-aggregation-structuralcontrollability}
Given \texttt{BCN} $B(\tilde{{\bf G}},\tilde{{\bf F}},{\bf U})$ with acyclic digraph $\tilde{{\bf G}}$ and a single-source-channel network aggregation $\mathbf{N}=\mathbf{N}_1\cup\mathbf{N}_2\cup\cdots\cup\mathbf{N}_\omega$, the entire \texttt{BCN} is structurally controllable if and only if each subnetwork is structurally controllable.
\end{theorem}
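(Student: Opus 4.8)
The plan is to prove Theorem~\ref{thm-aggregation-structuralcontrollability} by leveraging the structural characterization in Theorem~\ref{theorem-structuralcontrollability}, namely that a \texttt{BCN} is structurally controllable if and only if its digraph is acyclic and satisfies condition (\textbf{C1}): the in-neighbor set of every simple node is non-empty and contains only channels. Since $\tilde{{\bf G}}$ is already assumed acyclic, both directions reduce to tracking how condition (\textbf{C1}) relates between the global digraph $\tilde{{\bf G}}$ and the subnetwork digraphs $\tilde{{\bf G}}_i$. The key observation exploited throughout will be the defining property of a single-source-channel aggregation from Definition~\ref{def-aggregation}: each inter-cluster edge $(v_\mu,v_o)$ with $v_\mu\in{\bf N}_i$, $v_o\in{\bf N}_j$ is ``rigid'' in the sense that any edge out of $v_\mu$ lands in ${\bf N}_j$ at the \emph{same} target $v_o$, so that a node $v_\mu$ behaves as a channel globally precisely when it behaves as a channel within its subnetwork.

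First I would prove the ``only if'' direction: assuming the entire \texttt{BCN} is structurally controllable, I show each subnetwork is. Since $\tilde{{\bf G}}$ satisfies (\textbf{C1}) and is acyclic, and since the aggregation is acyclic by Definition~\ref{def-aggregation}, each $\tilde{{\bf G}}_i$ is acyclic as a subgraph. It remains to verify (\textbf{C1}) inside each $\tilde{{\bf G}}_i$: for any simple node $w$ of subnetwork $i$, I must check that its in-neighbors (now possibly including the injected input nodes in $\mathscr{U}_i\cup\mathscr{U}_i'$) are non-empty and are channels of $\tilde{{\bf G}}_i$. The injected nodes in $\mathscr{U}_i'$ act as generators of the subnetwork, so the real work is to argue that any in-neighbor $v$ of $w$ that is an internal state node of ${\bf N}_i$ and is a channel in $\tilde{{\bf G}}$ remains a channel in $\tilde{{\bf G}}_i$; here I use that restricting to the cluster can only reduce out-degree, and the single-source property prevents a channel from gaining spurious out-neighbors inside the cluster.

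Next I would prove the ``if'' direction, which I expect to be the main obstacle. Assuming every subnetwork is structurally controllable, I must show $\tilde{{\bf G}}$ globally satisfies (\textbf{C1}). The delicate point is a node $v_\mu$ that is a channel \emph{within} its own subnetwork ${\bf N}_i$ but which also emits inter-cluster edges: I need its \emph{total} global out-degree to be one. This is exactly where the single-source-channel hypothesis is essential: the clause ``$v_\mu=v_{\tilde\mu}$ implies $v_{\tilde o}=v_o$'' forces all outgoing inter-cluster edges of a fixed source $v_\mu$ to share one common head, and combined with $v_\mu$ being an internal channel of ${\bf N}_i$ (out-degree at most one inside), one concludes $v_\mu$ has global out-degree one, i.e. it is a global channel. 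I would then assemble these local channel facts cluster by cluster, using that $\Sigma_{i=1}^{\omega}{\bf N}_i$ reconstitutes $\tilde{{\bf G}}$ and that an in-neighbor of a global simple node either lies in the same cluster (handled by the subnetwork's (\textbf{C1})) or is an inter-cluster source (handled by the single-source argument just given), thereby verifying global (\textbf{C1}).

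Finally, having established that $\tilde{{\bf G}}$ is acyclic and satisfies (\textbf{C1}) if and only if each $\tilde{{\bf G}}_i$ does, I invoke the equivalence ${\bf 1)}\Leftrightarrow{\bf 3)}$ of Theorem~\ref{theorem-structuralcontrollability} at both the global and subnetwork levels to conclude the stated equivalence of structural controllability. The one place demanding genuine care is the bookkeeping of out-degrees across the cluster boundary in the ``if'' direction: I must make sure that no internal channel of a subnetwork is secretly a ``hub'' in $\tilde{{\bf G}}$ with multiple distinct inter-cluster targets, and it is precisely the single-source-channel condition—rather than the weaker plain channel aggregation of Definition~\ref{def-aggregation}—that rules this out, which is why the theorem's hypothesis cannot be relaxed.
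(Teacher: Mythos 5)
Your overall strategy coincides with the paper's: both directions are reduced to the graph-theoretic characterization of Theorem \ref{theorem-structuralcontrollability} (you use the equivalent form ``acyclic $+$ ({\bf C1})'' of statement 3), the paper uses the root in-tree decomposition of statement 2)), and the whole content of the proof is the bookkeeping of out-degrees across cluster boundaries under the single-source-channel hypothesis. Your ``only if'' direction is sound and in fact more careful than the paper's one-sentence version of it.

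However, there is a concrete error at exactly the point you yourself single out as delicate. In the ``if'' direction you consider an inter-cluster source $v_\mu\in{\bf N}_i$ and argue: the single-source clause forces all inter-cluster edges out of $v_\mu$ to share one head $v_o$, and ``combined with $v_\mu$ being an internal channel of ${\bf N}_i$ (out-degree at most one inside), one concludes $v_\mu$ has global out-degree one.'' This does not follow. If $v_\mu$ has out-degree exactly one \emph{inside} ${\bf N}_i$ (i.e., it is a genuine channel of the subnetwork, feeding some $w\in{\bf N}_i$) \emph{and} it also emits the inter-cluster edge $(v_\mu,v_o)$, then its global out-degree is two, so it is not a channel of $\tilde{{\bf G}}$ and ({\bf C1}) fails at the simple nodes $w$ and $v_o$; the subnetworks can each be structurally controllable while the glued network is not. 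The single-source-channel condition of Definition \ref{def-aggregation} only collapses the inter-cluster edges of ${\bf N}_i$ to a single tail and head; it says nothing about the intra-cluster out-edges of that tail. What the argument actually needs is that the source of the inter-cluster edge have out-degree zero within its own cluster, i.e., that it be the root of one of the in-trees of subnetwork $i$ --- only then does adding the single outgoing inter-cluster edge yield global out-degree one. You would have to derive this from the definitions (it does not follow from Definition \ref{def-aggregation} as stated) or impose it as an additional hypothesis. To be fair, the paper's own proof of this direction is a single unargued sentence that silently assumes the same thing, so you have correctly located the real difficulty; but the specific deduction you offer to close it is wrong as written ($1+1\ne 1$).
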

\begin{proof}
On the one hand, in the single-source-channel network aggregation $\mathbf{N}=\mathbf{N}_1\cup\mathbf{N}_2\cup\cdots\cup\mathbf{N}_\omega$, if each subnetwork is composed of a series of root in-trees with leaves being generators, then the entire \texttt{BCN} still satisfies the condition 3) in Theorem \ref{theorem-structuralcontrollability}. One can imply that the entire \texttt{BCN} is structurally controllable.

On the other hand, according to Definition \ref{def-aggregation}, the network structure of each subnetwork must be a group of root in-trees. Thus, each subnetwork $B(\tilde{{\bf G}}_i,\tilde{{\bf F}}_i,\tilde{{\bf U}}_i)$ is structurally controllable.
\end{proof}


\subsection{Minimum Node Control Theorem}
Given \texttt{BN} $B({\bf G},{\bf F})$, since the in-neighbors of each simple node can only be channels, the minimal number of controlled nodes is obviously not less than $\omega^\ast-1$, where $\omega^\ast$ is the maximum vertex out-degree of ${\bf G}$. Although this bound is tight in some special cases, it is actually far from being sufficient to compute the minimal number of controlled nodes. Therefore, in the following, we shall utilize the network aggregation approach to provide an accurate solution for Problem \ref{problem-minimalnodecontrol}. Before proceeding, we analyze what kind of nodes is controlled necessarily or unnecessarily.
\begin{definition}
A digraph ${\bf G}$ is said to be ``spanned'' by an oriented path at vertex $v$ if, it becomes an oriented path via removing all the oriented edges jointing vertex $v$.
\end{definition}

Then, we present a scene when controlling one node suffices.
\begin{theorem}\label{thm-controllability-one-node}
Given \texttt{BN} $B({\bf G},{\bf F})$, the minimal number of controlled node is one if and only if
\begin{itemize}
  \item[1)] digraph ${\bf G}$ is an oriented path; or
  \item[2)] digraph ${\bf G}$ is a directed graph ``spanned'' by an oriented path at its beginning vertex.
\end{itemize}
\end{theorem}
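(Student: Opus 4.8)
The plan is to translate the minimum-node problem into the root-in-tree picture of Theorem~\ref{theorem-structuralcontrollability}. Controlling a single state node $v$ means deleting every incoming edge of $v$ from ${\bf G}$ and attaching one generator $u$ through the single edge $(u,v)$; denote the resulting \texttt{BCN} digraph by $\tilde{{\bf G}}$. The whole argument rests on one structural lemma that I would isolate first: \emph{when exactly one generator is present, the \texttt{BCN} is structurally controllable if and only if $\tilde{{\bf G}}$ is a single oriented path $u\to v\to w_1\to\cdots\to w_k$ covering all state nodes.}

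I would prove this lemma straight from condition~2) of Theorem~\ref{theorem-structuralcontrollability}. Structural controllability forces $\tilde{{\bf G}}$ to split into vertex-disjoint root in-trees all of whose leaves (the in-degree-zero vertices) are generators. Each finite root in-tree is weakly connected---following the out-degree-one edges from any vertex terminates at the unique root---and possesses at least one leaf, so any component devoid of a generator would own a forbidden non-generator leaf. Hence every component must contain a generator, and with a single generator there is exactly one component whose only leaf is $u$. A root in-tree with a unique leaf cannot branch: a vertex of in-degree $\ge 2$ would, on tracing its two incoming branches back through the out-degree-one vertices, produce two distinct sources, contradicting uniqueness. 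Thus every vertex has in-degree at most one and the component is the path $u\to v\to w_1\to\cdots\to w_k$. The converse is immediate, since a path is a root in-tree whose sole leaf is its generator.

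Sufficiency then follows by exhibiting the right node. In case~1) I would control the head $v_1$ of the path ${\bf G}=v_1\to\cdots\to v_n$; as $v_1$ has no incoming edge, $\tilde{{\bf G}}=u\to v_1\to\cdots\to v_n$ is a path and the lemma applies. In case~2) the defining property of a graph spanned at its beginning vertex $v$ is precisely that deleting the incoming edges of $v$ leaves an oriented path $v\to w_1\to\cdots\to w_k$; since controlling $v$ performs exactly this deletion and prepends $u$, the digraph $\tilde{{\bf G}}=u\to v\to w_1\to\cdots\to w_k$ is again a path. Minimality of one node is clear: with no generator no in-tree can have all-generator leaves, so at least one controlled node is indispensable, and the minimal number is therefore exactly one.

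For necessity I would run the lemma in reverse. If controlling one node $v$ yields structural controllability, the lemma makes $\tilde{{\bf G}}$ the path $u\to v\to w_1\to\cdots\to w_k$. Stripping the generator edge, the state-node edges are exactly $v\to w_1\to\cdots\to w_k$; because $\tilde{{\bf G}}$ arises from ${\bf G}$ by deleting the incoming edges of $v$ and adding $(u,v)$, it follows that ${\bf G}$ equals this oriented path together with the deleted incoming edges of $v$ (back-edges from some $w_i$, possibly a self-loop at $v$). If $v$ has no incoming edge, ${\bf G}$ is the oriented path itself---case~1); otherwise deleting the incoming edges of $v$ restores the oriented path with $v$ at its head---case~2). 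The step I expect to be the crux is the lemma's passage ``one generator $\Rightarrow$ single path'': one must argue both that no component can be generator-free and that a one-leaf root in-tree admits no branching. The remaining edge-bookkeeping around the controlled node is routine.
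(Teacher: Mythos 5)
Your proof is correct and follows essentially the same route as the paper's: reduce the question to the root-in-tree characterization of Theorem~\ref{theorem-structuralcontrollability} and observe that a single generator forces $\tilde{{\bf G}}$ to be one oriented path headed by that generator, whence ${\bf G}$ is recovered by re-inserting the deleted edges, all of which end at the controlled node. The paper's own argument is far terser (it asserts "there is a path from this controlled node traversing all vertices" without justification), and your isolated lemma together with the unique-leaf/no-branching argument is precisely the step it glosses over.
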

\begin{proof}
The proof of sufficiency is trivial, since the structural controllability can be guaranteed if we let $\Lambda^\ast$ only contain the unique beginning vertex.

We now prove the necessity of this theorem. If the number of controlled nodes for \texttt{BN} $B({\bf G},{\bf F})$ is exactly one, then all of nodes must be accessible from one generator. More precisely, there is a path from this controlled node and traversing all vertices of ${\bf G}$. Thus, for digraph ${\bf G}$, the removed oriented edges (if exists) must end with this controlled node. It establishes the proof.
\end{proof}

\begin{theorem}\label{thm-ctr-reduction}
Given \texttt{BN} $B({\bf G},{\bf F})$ with digraph ${\bf G}=({\bf V},{\bf E})$.
\begin{itemize}
  \item[1)] The vertices with in-degree zero must be controlled.

  \item[2)] There must exist a minimum set of controlled nodes that does not contain the vertices on the strict path\footnote{A path $v_{k_1}\rightharpoonup v_{k_2} \rightharpoonup \cdots \rightharpoonup v_{k_p}$ is said to be strict if, $v_{k_i}$ is the unique out-neighbor of $v_{k_{i-1}}$ and $v_{k_{i-1}}$ is the unique in-neighbor of $v_{k_i}$ for $i\geq2$.} $\Omega$, except for the beginning vertex;

  \item[3)] If there is a vertex $v$ with two disjoint out-neighbors $v'$ and $v''$, where $v$ is the unique in-neighbor of $v'$ and $v''$, then it is only needed to control one of vertices $v'$ and $v''$.

  \item[4)] If there is a vertex $v$ with two disjoint out-neighbors $v'$ and $v''$, where $v$ is the unique in-neighbor of $v'$ and $v'$ and does not lie in any cycle, then there must exist a minimum set of controlled nodes that does not contain vertex $v'$.

  \item[5)] If $(v_o,v_o)\in{\bf E}$, then vertex $v_o$ must be controlled.
\end{itemize}
\end{theorem}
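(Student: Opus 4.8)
The plan is to prove each of the five claims in Theorem~\ref{thm-ctr-reduction} independently, in each case reducing to the structural characterization of controllability given by condition~3) of Theorem~\ref{theorem-structuralcontrollability}: namely that $\tilde{{\bf G}}$ must be acyclic and that the in-neighbor set of every simple node is non-empty and contains only channels (nodes of out-degree one with no self-loop). A controlled node is one whose single in-neighbor is a generator, so controlling a vertex $v$ amounts to deleting the incoming edges at $v$ and attaching a fresh generator as its unique in-neighbor. Throughout I would keep in mind that, by Theorem~\ref{theorem-structuralcontrollability}, structural controllability of the resulting \texttt{BCN} is equivalent to the post-control digraph decomposing into disjoint rooted in-trees whose leaves are generators.

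First I would dispose of the two ``forcing'' items, 1) and 5), since they are nearly immediate. For 1), a vertex $v$ of in-degree zero in ${\bf G}$ can never receive an in-edge from a channel, so unless it is made a controlled node (given a generator as in-neighbor) it violates condition~(${\bf C1}$); hence every in-degree-zero vertex lies in $\Lambda^\ast$. For 5), a vertex $v_o$ with a self-loop $(v_o,v_o)\in{\bf E}$ creates a cycle, and cycles are forbidden by the acyclicity requirement of condition~3); the only way to destroy the loop within the allowed control operation is to delete $v_o$'s incoming edges and attach a generator, i.e. to control $v_o$. Both arguments are short contradiction arguments against Theorem~\ref{theorem-structuralcontrollability}.

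Next I would handle the ``exchange'' items 2), 3) and 4), which assert that some minimum control set can be chosen avoiding certain vertices. These are the substantive claims, and the natural technique is an exchange/swap argument: start from an arbitrary minimum control set $\Lambda^\ast$ and produce another minimum set of the same cardinality avoiding the forbidden vertex, by re-routing one generator. For 2), if an interior vertex $v_{k_i}$ of a strict path (one whose unique in-neighbor $v_{k_{i-1}}$ has $v_{k_i}$ as unique out-neighbor) were controlled, I would argue that moving the generator back to the beginning vertex $v_{k_1}$ of the strict path still yields disjoint in-trees with generator leaves, since the strict path contributes a single chain that can be fed by one generator at its head; this preserves cardinality and structural controllability. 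For 3), when $v$ has two disjoint out-neighbors $v',v''$ each with $v$ as unique in-neighbor, condition~(${\bf C1}$) forces $v$ to be a channel after control, so $v$ may keep exactly one of $v',v''$ as its out-neighbor; the other out-neighbor becomes a simple node whose in-neighbor $v$ is no longer a channel, so that one must be controlled. This shows controlling exactly one of $v',v''$ is both necessary and sufficient to locally repair the branch. For 4), under the extra hypothesis that $v$ lies in no cycle, I would show that among the two necessarily-broken branches, one can always choose to control $v''$ rather than $v'$ without increasing the count, again by a re-routing swap; the acyclicity of $v$ guarantees the swap does not create a cycle elsewhere.

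The hard part will be the exchange arguments in 3) and 4), specifically verifying that re-routing a generator from one vertex to another does not inadvertently violate condition~(${\bf C1}$) somewhere downstream or reintroduce a cycle, and that the global cardinality is genuinely preserved rather than merely locally balanced. I would make these precise by tracking, for each candidate swap, the out-degrees of the affected vertices before and after and invoking Theorem~\ref{theorem-structuralcontrollability} on the modified digraph to certify that the in-tree decomposition is maintained; the no-cycle hypothesis in 4) is exactly what is needed to rule out the degenerate case where re-routing the generator to $v$ would close a loop through $v$. The phrasing of items 2) and 4) as ``there must exist a minimum set'' (rather than ``every minimum set'') is deliberate and signals that a witness construction via swapping suffices, which is the lighter of the two possible proof burdens.
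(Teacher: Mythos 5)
Your proposal matches the paper's own proof in both structure and substance: items 1) and 5) are dispatched by direct appeal to condition 3) of Theorem~\ref{theorem-structuralcontrollability} (empty in-neighbor sets and self-loop cycles are fatal unless the vertex is controlled), and items 2)--4) are handled by exactly the exchange/swap arguments you describe, with 3) reducing the surviving branch to a strict path so that item 2) applies, and 4) following from 3) by choosing to control $v''$ rather than $v'$ under the no-cycle hypothesis. The paper's version is in fact terser than your plan (items 4 and 5 are dismissed in a single sentence), so no gap.
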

\begin{proof}
$1)$: If vertex $v$ has no in-degree, then it is unaccessible from any generator. Therefore, it must be controlled.

$2)$: If there is a strict path $P$ containing edge $(v_{\mu},v_{o})$ with $v_o\in\Lambda^\ast$, then the out-degree of node $v_\mu$ must be zero in \texttt{BCN} $B(\tilde{{\bf G}},\tilde{{\bf F}},\Lambda^\ast)$. So, if we control the node set $\left(\Lambda^\ast\backslash \{v_o\}\right)\cup\{v_\mu\}$ whose cardinality is equal to that of $\Lambda^\ast$, then the \texttt{BCN} is still structurally controllable.

$3)$: Since the out-neighbors of $v$ are $v'$ and $v''$, $v$ is not a channel. It is claimed that at least one of $v'$ and $v''$ needs to be controlled. Besides, once $v'$ (or $v''$) is controlled, $v''$ (or $v'$) lies on a strict path $v\rightarrow v''$ (or $v \rightarrow v'$). According to proposition 2) in Theorem \ref{thm-ctr-reduction}, $v''$ (or $v'$) does not need to be controlled.

The proof of $4)$ and $5)$ can be directly obtained by resorting to case 4) in Theorem \ref{thm-ctr-reduction} and Theorem \ref{thm-controllability-one-node}, respectively.
\end{proof}

\begin{theorem}\label{thm-aggregation-controllable}
Given \texttt{BN} $B({\bf G},{\bf F})$ and a single-source-channel network aggregation $\mathbf{N}=\mathbf{N}_1\cup\mathbf{N}_2\cup\cdots\cup\mathbf{N}_\omega$. Denote the strongly connected components of ${\bf G}$ by ${\bf S}_1$, ${\bf S}_2$, $\cdots$, ${\bf S}_\kappa$. For each subnetwork $B({\bf G}_i,{\bf F}_i, {\bf U}_i)$, denote its minimal set of controlled nodes by $\Lambda_i^\ast$ whose cardinality is $N_i^\ast$. Then the minimal number of controlled nodes for the entire \texttt{BN} $B({\bf G},{\bf F})$ is $\Lambda^\ast=\bigcup_{i=1}^{\omega}\Lambda_i^\ast$ and the number of controlled nodes is $N^\ast=\Sigma_{i=1}^{\omega}N_i^\ast$ if there are not two different ${\bf N}_i$ and ${\bf N}_j$ satisfying that ${\bf N}_i\cap {\bf S}_k\neq\emptyset$ and ${\bf N}_j\cap {\bf S}_k\neq\emptyset$ for any $k\in[1,\kappa]$.
\end{theorem}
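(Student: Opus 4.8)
The plan is to prove the claimed equality by establishing the two matching inequalities $N^\ast \le \sum_{i=1}^{\omega} N_i^\ast$ and $N^\ast \ge \sum_{i=1}^{\omega} N_i^\ast$, and to exhibit $\Lambda^\ast=\bigcup_{i=1}^{\omega}\Lambda_i^\ast$ as an optimal solution. The conceptual heart is a \emph{decoupling} observation: since the supernodes partition the state set $\mathscr{X}$, every candidate control set $\Lambda$ (a set of state nodes) splits as a disjoint union $\Lambda=\bigsqcup_{i=1}^{\omega}(\Lambda\cap\mathbf{N}_i)$, and controlling an internal node of $\mathbf{N}_i$ only rewrites the in-edges of that node while leaving the inter-supernode channel edges untouched. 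The hypothesis that no strongly connected component $\mathbf{S}_k$ meets two distinct supernodes is exactly what makes the decoupling exact: every cycle of $\mathbf{G}$ then lives inside a single $\mathbf{N}_i$, so the cycle-breaking required for acyclicity (condition 3) of Theorem \ref{theorem-structuralcontrollability}) is a purely local operation and no cross-supernode coordination can ever save a controlled node.

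First I would record the preliminary fact that assigning the generator of a controlled node $v\in\mathbf{N}_i$ to the supernode $\mathbf{N}_i$ keeps the aggregation single-source-channel: the new generator is a source of out-degree one and creates no new cross-supernode edge, so Definition \ref{def-aggregation} remains satisfied. For the upper bound I would take $\Lambda=\bigcup_{i=1}^{\omega}\Lambda_i^\ast$; by the definition of each $\Lambda_i^\ast$, every subnetwork $B(\mathbf{G}_i,\mathbf{F}_i,\mathbf{U}_i)$ is then structurally controllable and, using the no-split condition to guarantee that the controlled digraph is acyclic, Theorem \ref{thm-aggregation-structuralcontrollability} immediately yields that the whole controlled \texttt{BN} is structurally controllable. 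Hence $\Lambda$ is feasible for Problem \ref{problem-minimalnodecontrol}, and because the $\mathbf{N}_i$ are disjoint, $|\Lambda|=\sum_{i=1}^{\omega}|\Lambda_i^\ast|=\sum_{i=1}^{\omega}N_i^\ast$, giving $N^\ast\le\sum_{i=1}^{\omega}N_i^\ast$.

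For the lower bound I would run the converse direction of Theorem \ref{thm-aggregation-structuralcontrollability}. Let $\Lambda$ be any feasible control set, so the controlled \texttt{BN} is structurally controllable; by condition 3) of Theorem \ref{theorem-structuralcontrollability} its digraph is acyclic, and by the preliminary fact the aggregation is still single-source-channel, so Theorem \ref{thm-aggregation-structuralcontrollability} applies and forces each subnetwork controlled by $\Lambda\cap\mathbf{N}_i$ (with the external nodes $\mathscr{U}'_i$ playing the role of generators) to be structurally controllable. Minimality of $\Lambda_i^\ast$ then gives $|\Lambda\cap\mathbf{N}_i|\ge N_i^\ast$, and summing over the disjoint pieces yields $|\Lambda|\ge\sum_{i=1}^{\omega}N_i^\ast$. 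Combining the two bounds proves $N^\ast=\sum_{i=1}^{\omega}N_i^\ast$ and shows $\Lambda^\ast=\bigcup_{i=1}^{\omega}\Lambda_i^\ast$ is optimal.

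I expect the main obstacle to be the careful bookkeeping that converts the ``only if'' half of Theorem \ref{thm-aggregation-structuralcontrollability} into the per-subnetwork bound $|\Lambda\cap\mathbf{N}_i|\ge N_i^\ast$: one must verify that the external inputs $\mathscr{U}'_i$ of subnetwork $i$ are genuinely supplied ``for free'' by the upstream subnetworks (through trajectory controllability of their root nodes, Corollary \ref{cor-trajectory-controllability}), so that the isolated minimum count $N_i^\ast$ is exactly the right quantity, and that controlling a source node which also feeds a downstream supernode never reduces the cost there. The no-split SCC hypothesis must be invoked precisely at this step to exclude a cycle straddling two supernodes, which would otherwise couple two of the minimum-control subproblems and destroy additivity.
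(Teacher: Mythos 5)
Your proposal is correct and follows essentially the same route as the paper: feasibility of $\bigcup_{i=1}^{\omega}\Lambda_i^\ast$ via the structural-controllability criterion together with Theorem \ref{thm-aggregation-structuralcontrollability}, and minimality via the necessity direction of that same theorem, with the no-shared-SCC hypothesis guaranteeing that the per-supernode subproblems decouple. The only cosmetic difference is that the paper disposes of the cross-supernode edges by citing case 3) of Theorem \ref{thm-ctr-reduction}, whereas you handle the same point through your ``preliminary fact'' and the two-inequality bookkeeping.
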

\begin{proof}
According to Theorem \ref{theorem-structuralcontrollability}, $\Lambda^\ast$ is a feasible set of controlled nodes to force the structural controllability of \texttt{BN}. Reversely, since this acyclic aggregation is single-source-channel, for each directed edge $e_{ij}$ with $v_i\in{\bf N}_i$ and $v_j\in{\bf N}_j$, vertex $v_j$ does not need to be controlled based on the case 3) of Theorem \ref{thm-ctr-reduction}. Hence, by the necessity of Theorem
\ref{thm-aggregation-structuralcontrollability}, one can conclude that the set $\Lambda^\ast$ must be minimal.
\end{proof}

Next, given \texttt{BN} $B({\bf G},{\bf F})$, on the basis of the single-source-channel aggregation in Theorem \ref{thm-aggregation-controllable}, we can provide an efficient and explicit procedure to solve Problem \ref{problem-minimalnodecontrol} as well as all the feasible solutions.

Consider subnetwork $B({\bf G}_k,{\bf F}_k,{\bf U}_k)$. Let set ${\bf N}_k \cap \mathscr{X}=\{ v_{x_1^k},v_{x_2^k},\cdots,v_{x_{n_k}^k}\}$ with $x_1^k < x_2^k < \cdots < x_{n_k}^k$, then we denote the set of controlled nodes by set $S_k$. Equivalently, it can be defined as an $n_k$-dimensional binary vector $V_{S_k}=[\varsigma^k_1,\varsigma^k_2,\cdots,\varsigma^k_{n_k}]$, where $\varsigma^k_i=1$ if $x^k_i\in S_k$; $\varsigma^k_i=0$, otherwise. Via the correspondence ``$\mapsto$'' defined in Section \ref{sec-preliminaries}, one has that $\varsigma_i^k \mapsto \zeta_i^k:=\delta_2^{2-\varsigma_i^k}$ for $i\in[1,n_k]$. The adjacency matrix for state nodes can be defined by a Boolean matrix $A_k=(a^k_{ij})_{n_k \times n_k}$.

Suppose that the subnetwork $\Sigma({\bf G}_i,{\bf F}_i, {\bf U}_i)$ is structurally controllable through controlling the vertices in $S_k$. Such subnetwork should be restrained by the following three conditions.
\begin{itemize}
  \item[$C1)$] If $v_{x_i^k}$ has been a controlled node, then it derives $\varsigma^k_i=0$; and if its in-degree is zero, then $\varsigma^k_i=1$;

  \item[$C2)$] The out-degree of each state node is no more than $1$, that is,
  $$ \bigvee\limits_{i=1}^{n_k}\left[\sum_{j=1}^{n_k} a^k_{ij}(1-\varsigma^k_j)\right]\leq 1,$$
  where $a \bigvee b:=\max\{a,b\}$ for $a,b\in\mathbb{N}$.

  \item[$C3)$] The network structure after control is acyclic. By depth-first search algorithm, one can compute all the cycles in the digraph ${\bf G}_k$, denoted as $C_{k,1}$,
  $C_{k,2}$, $\cdots$, $C_{k,\kappa_k}$, where $C_{k,i}$ can be written as a vector $\gamma_{k,i}=[\gamma_{k,i}^1,\gamma_{k,i}^2,\cdots,\gamma_{k,i}^{n_k}]\in \mathscr{D}^{n_k}$ for $i\in[1,\kappa_k]$, where $v_{x^j_k}\in C_{k,i}$ implies that $\gamma_{k,i}^{j}=1$. Then, one has that
  $$\sum_{j=1}^{n_k}\gamma^j_{k,i}  \varsigma^k_j>0.$$
\end{itemize}

Let $W^k_1$ and $W^k_2$ respectively denote the nodes that have to be controlled and those need not be controlled in \texttt{BCN} $B({\bf G}_k,{\bf F}_k,{\bf U}_k)$. The solution to Problem \ref{problem-minimalnodecontrol} is equivalent to determining the solution of the following optimal problem:
\begin{equation}\label{equ-optimal}
\begin{aligned}
&\min ~\eta_k=\sum\limits_{j=1}^{n_k}\varsigma^k_{j},~k=1,2,\cdots,\kappa,\\
&\text{s.t}\left\{\begin{aligned} &\varsigma^k_i=0, ~v_i\in W^k_1,\\ &\varsigma^k_i=1, ~v_i\in W^k_2, \\&\bigvee\limits_{i=1}^{n_k}\left[\sum_{j=1}^{n_k} a^k_{ij}(1-\varsigma^k_{j})\right]\leq 1,\\ &\sum_{j=1}^{n_k}\gamma^j_{k,i}  \varsigma^k_j>0. \end{aligned}\right.
\end{aligned}
\end{equation}

To this end, the \texttt{STP} of matrices is utilized to address the above optimal problem.
\begin{theorem}\label{thm-solveoptimal}
Given \texttt{BN} $B({\bf G},{\bf F})$ and a single-source-channel aggregation $\mathbf{N}=\mathbf{N}_1\cup\mathbf{N}_2\cup\cdots\cup\mathbf{N}_\omega$ satisfying that there are not two different ${\bf N}_i$ and ${\bf N}_j$ such that ${\bf N}_i\cap {\bf S}_k\neq\emptyset$ and ${\bf N}_j\cap {\bf S}_k\neq\emptyset$ for any $k\in[1,\kappa]$. For every subnetwork $B({\bf G}_k,{\bf F}_k,{\bf U}_k)$, the set $\Lambda^\ast_k \mapsto \delta_{2^{n_k}}^{s_k}$ is a feasible solution to Problem \ref{problem-minimalnodecontrol}, if and only if, $J_1 \ltimes \bar{M}^k \ltimes \delta_{2^{n_k}}^{s_k} \in\mathscr{D}$ and
$J_1 \ltimes \tilde{M}^k \ltimes \delta_{2^{n_k}}^{s_k} \neq 0$, where $J_1=[1,0]$.
\end{theorem}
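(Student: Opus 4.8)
The plan is to recognize the two displayed tests as algebraic encodings of the only substantive constraints of the feasibility program (\ref{equ-optimal}), and to build an explicit structure matrix for each. By Theorem \ref{theorem-structuralcontrollability}, once a candidate control set $\Lambda_k$ (coded by $\delta_{2^{n_k}}^{s_k}$) has been imposed on the subnetwork $B({\bf G}_k,{\bf F}_k,{\bf U}_k)$, the controlled \texttt{BCN} is structurally controllable exactly when its digraph is acyclic and every state node has at most one non-controlled out-neighbor. The first property is precisely condition $C3)$ (every cycle of ${\bf G}_k$ carries a controlled node), the second is precisely condition $C2)$ (maximal out-degree $\le 1$), while $C1)$ merely fixes the coordinates forced by Theorem \ref{thm-ctr-reduction}. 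Hence ``$\delta_{2^{n_k}}^{s_k}$ is feasible for Problem \ref{problem-minimalnodecontrol}'' is equivalent to ``$C2)$ and $C3)$ both hold,'' and it remains to match each graphical condition with one algebraic test.

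First I would move the per-coordinate data into \texttt{STP} form. With $J_1=[1,0]$ and $J_2=[0,1]$, the coding $\varsigma_j^k\mapsto\zeta_j^k=\delta_2^{2-\varsigma_j^k}$ gives $J_1\ltimes\zeta_j^k=\varsigma_j^k$ and $J_2\ltimes\zeta_j^k=1-\varsigma_j^k$. Using ${\bf 1}_2^\top\ltimes\zeta=1$ for every $\zeta\in\Delta_2$, for each node $i$ I form the row vector
\[
R_i=\sum_{j=1}^{n_k}a_{ij}^k\,\bigl({\bf 1}_2^\top\otimes\cdots\otimes J_2\otimes\cdots\otimes{\bf 1}_2^\top\bigr),
\]
with $J_2$ occupying the $j$-th factor, so that $R_i\ltimes\delta_{2^{n_k}}^{s_k}=\sum_{j}a_{ij}^k(1-\varsigma_j^k)$ is the number of non-controlled out-neighbors of $i$ in configuration $s_k$. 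Taking the entrywise maximum $\bar R=\bigvee_{i=1}^{n_k}R_i$ (in the sense of $\bigvee$ used in $C2)$) produces a row whose $s_k$-th entry equals $\bigvee_i[\sum_j a_{ij}^k(1-\varsigma_j^k)]$; I set $\bar M^k$ to be the $2\times 2^{n_k}$ matrix having $\bar R$ as its first row (the second row is immaterial, since only $J_1$ selects it). Then $J_1\ltimes\bar M^k\ltimes\delta_{2^{n_k}}^{s_k}=\bar R_{s_k}$ is a non-negative integer, so it lies in $\mathscr{D}=\{0,1\}$ iff it is $\le 1$, which is exactly $C2)$.

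The cycle condition is handled in the same spirit, with a product replacing the maximum. For each cycle $C_{k,i}$ with indicator $\gamma_{k,i}$ I form
\[
P_i=\sum_{j=1}^{n_k}\gamma_{k,i}^{j}\,\bigl({\bf 1}_2^\top\otimes\cdots\otimes J_1\otimes\cdots\otimes{\bf 1}_2^\top\bigr),
\]
so that $P_i\ltimes\delta_{2^{n_k}}^{s_k}=\sum_j\gamma_{k,i}^j\varsigma_j^k$ counts the controlled nodes lying on $C_{k,i}$. Forming the entrywise (Hadamard) product $\tilde R=P_1\odot\cdots\odot P_{\kappa_k}$ yields a row whose $s_k$-th entry equals $\prod_{i}[\sum_j\gamma_{k,i}^j\varsigma_j^k]$, and I take $\tilde M^k$ to be the $2\times 2^{n_k}$ matrix with first row $\tilde R$. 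Since each factor is a non-negative integer, $J_1\ltimes\tilde M^k\ltimes\delta_{2^{n_k}}^{s_k}$ is non-zero iff every factor is strictly positive, i.e. every cycle contains a controlled node, which is exactly $C3)$. Combining the two equivalences with the reduction of the first paragraph proves the claim.

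The step I expect to demand the most care is the verification that these single row vectors evaluate to the intended combinatorial quantities at \emph{every} canonical input: using Definition \ref{def-STP} and the identity ${\bf 1}_2^\top\ltimes\zeta=1$, one must check that the tensor padding correctly ``sums out'' the coordinates a given summand ignores, and that the dimensions in $J_1\ltimes\bar M^k\ltimes\delta_{2^{n_k}}^{s_k}$ collapse the \texttt{STP} to an ordinary column selection. By contrast, the genuinely content-bearing step is the appeal to Theorem \ref{theorem-structuralcontrollability}, which converts structural controllability of the controlled subnetwork into the purely graphical conditions $C2)$ and $C3)$; once that bridge is in place, everything else is \texttt{STP} bookkeeping.
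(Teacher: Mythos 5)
Your proof is correct and follows essentially the same route as the paper's: both arguments reduce feasibility to conditions $C2)$ and $C3)$ of program (\ref{equ-optimal}) via Theorem \ref{theorem-structuralcontrollability} and then encode each condition as a row-vector test evaluated at $\delta_{2^{n_k}}^{s_k}$, your Kronecker-form coordinate extractors ${\bf 1}_2^\top\otimes\cdots\otimes J_2\otimes\cdots\otimes{\bf 1}_2^\top$ being exactly the first rows of the paper's matrices built from $M_j^k=\Psi^{n_k-1}W_{[2^j,2^{n_k-j}]}$. The only substantive difference is that you make the aggregation over multiple cycles explicit (a Hadamard product of the per-cycle rows $P_i$), whereas the paper defines $\tilde{M}_i^k$ for each cycle but leaves the passage to the single matrix $\tilde{M}^k$ unstated.
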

\begin{proof}
Consider the condition $C2)$. We have that
\begin{equation}\label{equation-1}
\begin{aligned}
\zeta^k_j&=\Psi^{n_k-1} \ltimes (\zeta^k_{j+1}\zeta^k_{j+2} \cdots \zeta^k_{n}) \ltimes (\zeta^k_1\zeta^k_2 \cdots \zeta^k_{j-1}) \ltimes \zeta^k_j\\
&=\Psi^{n_k-1} W_{[2^j,2^{n_k-j}]} \ltimes (\zeta^k_1\zeta^k_2 \cdots \zeta^k_{j-1})\ltimes \zeta^k_j \ltimes (\zeta^k_{j+1}\zeta^k_{j+2} \cdots \zeta^k_{n})\\
&=\Psi^{n_k-1} W_{[2^j,2^{n_k-j}]} \ltimes \zeta^k = M^k_j \ltimes \zeta^k,
\end{aligned}
\end{equation}
where $M_j^k=\Psi^{n_k-1} W_{[2^j,2^{n_k-j}]}$ and $\zeta^k=\ltimes_{j=1}^{k_{n}}\zeta^k_j$. It is claimed that
\begin{equation}
\sum_{j=1}^{n_k} a^k_{ij}(1-\varsigma^k_j)=\sum_{j=1}^{n_k} a^k_{ij} J_1(\delta_2^1 \otimes {\bf 1}^\top_{2^{n_k}}- M^k_j) \ltimes \zeta^k= J_1 \ltimes \bar{M}^k_i \ltimes \zeta^k,
\end{equation}
where $\bar{M}_i^k=\sum_{j=1}^{n_k} a^k_{ij} (\delta_2^1 \otimes {\bf 1}^\top_{2^{n_k}}- M^k_j)$. Subsequently, we define the matrix $\bar{M}^k$ as
$$ [\bar{M}^k]_{\mu\nu}= \max\limits_{i\in[1,n_k]}{[\bar{M}^k_i]_{\mu\nu}}.$$
Condition $C2)$ holds by controlling the node set $V_{S_k}$ corresponding to canonical vector $\zeta_k=\delta_{2^n}^{s_k}$, that is, the $s_k$-th component of $\bar{M}_k$ is $1$ or $0$, or alternatively, belongs to $\mathscr{D}$ and vice versa.

Besides condition $C3)$ derives that
\begin{equation}
\sum_{j=1}^{n_k}\gamma^j_{k,i}  \varsigma^k_j = \sum_{j=1}^{n_k}\gamma^j_{k,i} J_1 M_j^k \zeta^k = J_1 \sum_{j=1}^{n_k}\gamma^j_{k,i} M_j^k \zeta^k = J_1 \ltimes \tilde{M}_i^k \ltimes \zeta^k,
\end{equation}
where $\tilde{M}_i^k = \sum_{j=1}^{n_k}\gamma^j_{k,i} M_j^k$. Accordingly, the condition $3)$ can imply the $J_1 \ltimes \tilde{M}^k \ltimes \delta_{2^{n_k}}^{s_k} =1$, and vice versa.
\end{proof}

\begin{theorem}
Given \texttt{BN} $B({\bf G},{\bf F})$ and a single-source-channel aggregation $\mathbf{N}=\mathbf{N}_1\cup\mathbf{N}_2\cup\cdots\cup\mathbf{N}_\omega$ satisfying that there are not two different ${\bf N}_i$ and ${\bf N}_j$ such that ${\bf N}_i\cap {\bf S}_k\neq\emptyset$ and ${\bf N}_j\cap {\bf S}_k\neq\emptyset$ for any $k\in[1,\kappa]$. Set $\Lambda^\ast$ is the minimum controlled node set w.r.t. Problem \ref{problem-minimalnodecontrol}, if and only if, $\Lambda^\ast= \bigcup_{k=1}^{\omega}\Lambda^\ast_k$, where $J_1 \ltimes \bar{M}^k \ltimes \delta_{2^{n_k}}^{s_k} \in\mathscr{D}$ and
$J_1 \ltimes \tilde{M}^k \ltimes \delta_{2^{n_k}}^{s_k} \neq 0$.
\end{theorem}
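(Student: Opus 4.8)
The plan is to derive this theorem as a synthesis of the two results that immediately precede it, namely the decomposition principle of Theorem~\ref{thm-aggregation-controllable} and the per-subnetwork algebraic feasibility test of Theorem~\ref{thm-solveoptimal}. Since Algorithm~\ref{algorithm-controllability} already places the underlying decision problem in P, the substantive content here is purely combinatorial: I would split the biconditional into the claim that (i) the global minimum controlled set decouples as the union $\bigcup_{k=1}^{\omega}\Lambda^\ast_k$ of the per-subnetwork minimizers, and (ii) each local minimizer $\Lambda^\ast_k$ is exactly characterized by the stated \texttt{STP} conditions. The whole argument reduces the global minimization over $B({\bf G},{\bf F})$ to $\omega$ independent local minimizations, then invokes the algebraic characterization on each supernode.

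First I would invoke Theorem~\ref{thm-aggregation-controllable}: under the single-source-channel aggregation together with the hypothesis that no two supernodes ${\bf N}_i,{\bf N}_j$ meet a common strongly connected component ${\bf S}_k$, the minimum controlled set satisfies $\Lambda^\ast=\bigcup_{k=1}^{\omega}\Lambda^\ast_k$ with $N^\ast=\sum_{k=1}^{\omega}N^\ast_k$. This decoupling shows that the feasible region of the global problem is the Cartesian product of the local feasible regions and that the additive objective $\sum_k\eta_k$ in (\ref{equ-optimal}) is minimized term by term, so it suffices to treat a single subnetwork $B({\bf G}_k,{\bf F}_k,{\bf U}_k)$. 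For that subnetwork I would then appeal to Theorem~\ref{thm-solveoptimal}: a candidate $\Lambda^\ast_k\mapsto\delta_{2^{n_k}}^{s_k}$ satisfies constraints $C2$ and $C3$ of (\ref{equ-optimal})—equivalently, renders the controlled subnetwork acyclic with every state node of out-degree at most one—precisely when $J_1\ltimes\bar M^k\ltimes\delta_{2^{n_k}}^{s_k}\in\mathscr{D}$ and $J_1\ltimes\tilde M^k\ltimes\delta_{2^{n_k}}^{s_k}\neq 0$. Incorporating constraint $C1$ through the admissible sets $W^k_1,W^k_2$, Theorem~\ref{theorem-structuralcontrollability} guarantees that these are exactly the conditions for the controlled subnetwork to be structurally controllable.

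With both ingredients in place, sufficiency follows by feeding a union of locally feasible minimum-cardinality solutions into Theorem~\ref{thm-aggregation-controllable}, which certifies the union as a globally minimum structurally controllable set; necessity follows by restricting a global minimizer $\Lambda^\ast$ to each supernode, observing that each restriction must itself be a local minimizer (otherwise the additive objective could be strictly lowered), and then applying Theorem~\ref{thm-solveoptimal} in its "only if'' direction to each restriction to recover the two \texttt{STP} conditions.

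The step I expect to be the main obstacle is the bookkeeping that justifies that minimality is preserved under the decomposition, i.e. that a global minimizer restricts to a local minimizer on every subnetwork and conversely. This rests entirely on the non-overlap hypothesis on strongly connected components: because no cycle of ${\bf G}$ straddles two supernodes, the acyclicity constraint $C3$ and the out-degree constraint $C2$ factor through the subnetworks without any cross-coupling, so the cycle-breaking choices made in one supernode neither help nor constrain another. I would make this precise by exhibiting the feasible region of the global optimization as the exact product of the local feasible regions, after which term-by-term minimization of $\sum_{k}\eta_k$ delivers both the identity $\Lambda^\ast=\bigcup_{k=1}^{\omega}\Lambda^\ast_k$ and its minimality, completing the equivalence.
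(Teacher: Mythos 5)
Your proposal is correct and matches the intended argument: the paper states this theorem without any proof, treating it as an immediate corollary obtained by combining the decoupling of Theorem~\ref{thm-aggregation-controllable} (global minimizer equals the union of per-supernode minimizers under the non-overlap-of-strongly-connected-components hypothesis) with the per-subnetwork \texttt{STP} feasibility characterization of Theorem~\ref{thm-solveoptimal}. Your additional care in verifying that the feasible region factors as a product over supernodes, so that minimality passes in both directions of the biconditional, is exactly the bookkeeping the paper leaves implicit.
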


\section{Distributed Pinning Controllability of BNs}\label{sec-pin-controllability}
In Section \ref{sec-structuralcontrollability}, Theorem \ref{theorem-structuralcontrollability} has been established in the case without available node dynamics. However, while the node dynamics can be identifiable, it is still potential to design the pinning controller to make an uncontrollable \texttt{BN} controllable. From the viewpoint of network structure, several disadvantages of the traditional pinning controllability strategies, such as problems ({\bf P1}) and ({\bf P4}) in \cite{chenhw2016scis,liff2020cyber,liff2020tcns1523,lujq2016ieeetac1658,liuzq2020tcns}, can be reduced to some extent. In \cite{zhongjie2019new}, Zhong {\em et al.} designed the distributed pinning controller for the stabilization of \texttt{BNs} for the first time, based on the network structure. However, the acyclic network structure is only applicable for the global stabilization.

In the following, we would like to present a novel pinning strategy for the sake of making \texttt{BCN} $B(\tilde{{\bf G}},\tilde{{\bf F}},{\bf U})$ controllable. Denote by $\Gamma$ the set of pinning nodes. For $k\in\Gamma$, the inputs are injected in the form of
\begin{equation}\label{equation-pinning-bn}
{\bm x}_k(t+1)=\tilde{{\bm u}}_k(t)\tilde{\oplus}_k \tilde{{\bm f}}_k\big([{\bm x}_i(t)]_{i\in\tilde{\mathbf{X}}_k}\big),
\end{equation}
where $\oplus_k:\mathscr{D}^2\rightarrow\mathscr{D}$ is a binary logical operator, and the injected input $\tilde{{\bm u}}_k(t)$ can be open-loop control or feedback control. The dynamics of functions $\tilde{{\bm f}}_k$, $k\in[1,n]$, have been given in \texttt{BCN} $B(\tilde{{\bf G}},\tilde{{\bf F}},{\bf U})$. Besides, for $k\in \tilde{{\bf V}}\backslash\Gamma$, its dynamics is identical with that in $B(\tilde{{\bf G}},\tilde{{\bf F}},{\bf U})$. The portions of (\ref{equation-pinning-bn}) that need to be designed are the pinning node set $\Gamma$, logical operator $\tilde{\oplus}_k$ and the control input $\tilde{{\bm u}}_k(t)$. The research idea of this part can be described as Fig. \ref{fig-idea}.
\begin{figure}[h!]
\centering
\includegraphics[width=0.48\textwidth=0.48]{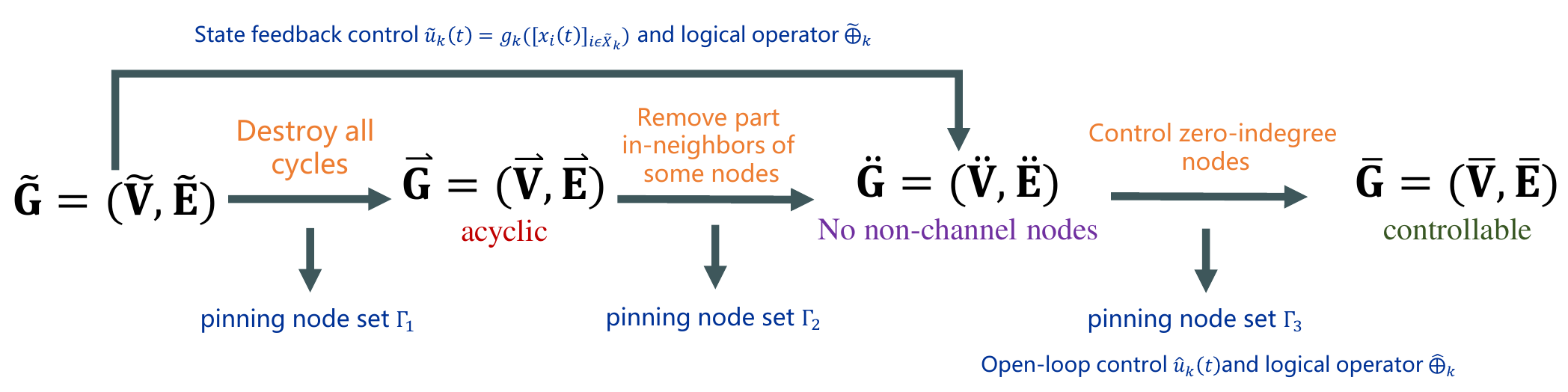}
\caption{The procedure of pinning control design in this section.}\label{fig-idea}
\end{figure}

\subsection{Selection of the Pinning Node Set}
Motivated by condition $3)$ in Theorem \ref{theorem-structuralcontrollability}, the candidate pinning nodes are possibly the nodes that do not satisfy condition $({\bf C1})$ or lie in a cycle. Then inputs $\tilde{{\bm u}}_k(t)$, $k\in\Gamma$, are designed to modify the adjacency relationship of pinning nodes, aiming to satisfy condition 3) in Theorem \ref{theorem-structuralcontrollability}. In the design of pinning controller, it is crucial to develop a selection procedure for the pinning nodes, but it is pretty difficult by using traditional \texttt{ASSR} approaches \cite{chenhw2016scis,liff2020cyber,liff2020tcns1523,lujq2016ieeetac1658}.

First of all, in diagraph $\tilde{{\bf G}}$, by the depth-first search algorithm, one can find all the cycles in $\tilde{{\bf G}}$ including self loops, which are denoted as $\vec{C}_1,\vec{C}_2,\cdots,\vec{C}_s$. For each cycle $\vec{C}_i$, $i\in[1,s]$, we can arbitrarily select an oriented edge $e_{\sigma_i\varsigma_i} \in \vec{C}_i$. Then, the acyclic digraph $\vec{{\bf G}}:=(\vec{{\bf V}},\vec{{\bf E}})$ can be constructed as $\vec{{\bf V}}:=\tilde{{\bf V}}$ and $\vec{{\bf E}}:=\tilde{{\bf E}}\backslash \bigcup_{i=1}^s\left\{e_{\sigma_i\varsigma_i}\right\}$. The above process produces an
acyclic digraph $\vec{{\bf G}}$, and the first type of pinning node set can be determined as $\Gamma_1:=\bigcup_{i=1}^s\{\varsigma_i \mid i\in[1,s]\}$.

Secondly, consider the above acyclic digraph $\vec{{\bf G}}$. Let $S=\{\varrho_1,\varrho_2,\cdots,\varrho_\iota\}\subseteq [m+1,n]$ be the index set of all state nodes in $\vec{{\bf G}}$ satisfying $\mid\vec{\mathbf{N}}^{\text{out}}_{\varrho_i}\mid \geq 2$ with $i\in[1,\iota]$. Furthermore, in order to control as fewer nodes as possible, we define a mapping $\odot:\bigcup_{i=1}^{\iota}\vec{\mathbf{N}}^{\text{out}}_{\varrho_i}\rightarrow\mathbb{R}$ as
\begin{equation}\label{equ-odot}
\odot(v_j)=\sum_{i=1}^{\iota}{\bf 1}_{\left\{v_j\in \vec{\mathbf{N}}^{\text{out}}_{\varrho_i}\right\}},~v_j\in\bigcup_{i=1}^{\iota} \vec{{\bf N}}^{\text{out}}_{\varrho_i},
\end{equation}
where ${\bf 1}_{\{\cdot\}}$ is the indicator function. Therefore, for each $\varrho_i\in S$, we can define
$$\kappa_{\varrho_i}:=\text{argmin}_{v_j\in \vec{\mathbf{N}}^{\text{out}}_{\varrho_i}}\left\{ \odot(v_j)\right\}.$$
Consequently, the second type of pinning nodes can be selected as $\Gamma_2:=\bigcup_{i=1}^{\iota}\{ j \mid v_j\in\vec{\mathbf{N}}^{\text{out}}_{\varrho_i}\}\backslash \{\kappa_{\varrho_i}\}$, and we can construct the network structure as $\ddot{{\bf G}}=(\ddot{{\bf V}},\ddot{{\bf E}})$, where $\ddot{{\bf V}}:={\bf V}$ and $\ddot{{\bf E}}:=\vec{{\bf E}}\backslash \bigcup_{i=1}^{\iota} \{e_{\varrho_i,j}\mid j\in\vec{\mathbf{N}}^{\text{out}}_{\varrho_i}\backslash \{v_{\kappa_{\varrho_i}}\}\}$. In digraph $\ddot{{\bf G}}$, the in-neighbor set (respectively, out-neighbor set) of node $v_i$ is denoted as $\ddot{\mathbf{N}}^{\text{in}}_i$ (respectively, $\ddot{\mathbf{N}}^{\text{out}}_i$). In this way, the obtained digraph $\ddot{{\bf G}}$ is acyclic, and the in-neighbor set of each state node is either empty or full with channels.

Thirdly, the state nodes with in-degree zero in $\ddot{{\bf G}}$ must be controlled, thus set $\Gamma_3$ is composed of state nodes with in-degree zero in $\ddot{{\bf G}}$. In conclusion, the pinning node set can be selected as $\Gamma:=\Gamma_1\cup\Gamma_2\cup\Gamma_3$.

\subsection{Control Design for Pinning Nodes}\label{subsection-compute}
In this subsection, we design the logical coupling $\tilde{\oplus}_k$ and control inputs $\tilde{{\bm u}}_k(t)$ for different types of nodes. Denote the canonical form of ${\bm x}_k$ and $\tilde{{\bm u}}_k$ in (\ref{equation-pinning-bn}) respectively as $x_k:=\delta_2^{2-{\bm x}_k}$ and $\tilde{u}_k:=\delta_2^{2-\tilde{{\bm u}}_k}$.

In the first part, we first consider pinning nodes in $\Gamma_1$ and $\Gamma_2$: For $k\in\Gamma_1\cup\Gamma_2$, input $\tilde{{\bm u}}_k(t)$ is designed as the feedback of variables ${\bm x}_i(t)$, $i\in\tilde{\mathbf{X}}_k$, that is, $\tilde{{\bm u}}_k(t)={\bm g}_k([{\bm x}_i(t)]_{i\in\tilde{\mathbf{X}}_k})$. Without loss of generality, we assign ${\bf N}^{\text{in}}_i=\{v_{\alpha_i^1},v_{\alpha_i^2},\cdots,v_{\alpha_i^{\mid {\bf N}^{\text{in}}_i \mid}}\}$ and $\ddot{{\bf N}}^{\text{in}}_i=\{v_{\alpha_i^{\kappa_1}},v_{\alpha_i^{\kappa_2}},\cdots,v_{\alpha_i^{\kappa_{\mid \ddot{{\bf N}}^{\text{in}}_i \mid}}}\}$ with $\alpha_i^1<\alpha_i^2<\cdots<\alpha_i^{\mid {\bf N}^{\text{in}}_i \mid}$ and $\kappa_1<\kappa_2<\cdots<\kappa_{\mid \ddot{{\bf N}}^{\text{in}}_i \mid}$. In this setting, it holds that
\begin{equation*}
\begin{aligned}
\ltimes_{k=1}^{\mid {\bf N}^{\text{in}}_i \mid} x_{\alpha_i^k} &= W_{[2,2^{\kappa_{\mid \ddot{{\bf N}}^{\text{in}}_i \mid}-1}]} x_{\alpha_i^{\kappa_{\mid \ddot{{\bf N}}^{\text{in}}_i \mid}}} (\ltimes_{k=1}^{\kappa_{\mid {\bf N}^{\text{in}}_i \mid}-1} x_{\alpha_i^k}) (\ltimes_{k=\kappa_{\mid \ddot{{\bf N}}^{\text{in}}_i \mid}-1}^{\mid {\bf N}^{\text{in}}_i \mid} x_{\alpha_i^k})\\
&=W_{i,{\bf N}^{\text{in}}_i,\ddot{{\bf N}}^{\text{in}}_i} \ltimes (\ltimes_{k=1}^{\mid \ddot{{\bf N}}^{\text{in}}_i \mid} x_{\alpha_i^{\kappa_k}}) (\ltimes_{k\in \bar{\mathbf{N}}^{\text{in}}_i\backslash \ddot{{\bf N}}_i^{\text{in}}}x_{\alpha_i^{\kappa_k}}),
\end{aligned}
\end{equation*}
where $W_{i,{\bf N}^{\text{in}}_i,\ddot{{\bf N}}^{\text{in}}_i}=\ltimes_{k={\mid \ddot{{\bf N}}^{\text{in}}_i \mid}}^{1} W_{[2,2^{\kappa_k+\mid \ddot{{\bf N}}^{\text{in}}_i \mid-k-1}]}$. Then, one can find a matrix $F_k\in\mathscr{L}_{2\times 2^{\mid \mathbf{N}_{k}^{\text{in}} \mid}}$ such that
\begin{equation}\label{equation-equa}
F_k=A_k\ltimes\left(I_{2^{\mid \ddot{\mathbf{N}}^{\text{in}}_k \mid}}\otimes {\bf 1}_{2^{\mid \mathbf{N}^{\text{in}}_k \backslash \ddot{\mathbf{N}}^{\text{in}}_k \mid}}^\top\right) \ltimes W^\top_{k,{\bf N}^{\text{in}}_k,\ddot{{\bf N}}^{\text{in}}_k}, ~k\in \Gamma_1\cup\Gamma_2
\end{equation}
with matrix $A_k\in\mathscr{L}_{2\times 2^{\mid \ddot{\mathbf{N}}^{\text{in}}_k\mid}}$ satisfying
\begin{equation}\label{equation-equa2}
\text{Blk}_1\left(A_k W_{[2,2^{j-1}]}\right)\not=\text{Blk}_2\left(A_k W_{[2,2^{j-1}]}\right)
\end{equation}
for all $j\in \left[1,\mid \ddot{\mathbf{N}}^{\text{in}}_k\mid\right]$.

Hence, by Lemma \ref{lemma-structurematrix}, the multi-linear form of node dynamics (\ref{equation-pinning-bn}) with feedback control $\tilde{{\bm u}}_k(t)={\bm g}_{k}([{\bm x}_i(t)]_{i\in \tilde{{\bf X}}_k})$ can be derived as
\begin{equation*}
\begin{aligned}
x_k(t+1)&=L_{\oplus_k}L_{{\bm g}_k}\left(\ltimes_{i\in\tilde{\mathbf{X}}_{k}} x_i(t)\right) L_{\tilde{{\bm f}}_k}\left(\ltimes_{i\in\tilde{\mathbf{X}}_{k}} x_i(t)\right)\\
&=L_{\oplus_k}L_{{\bm g}_k}\left(I_{2^{\mid \tilde{\mathbf{X}}_{k}\mid}}\otimes L_{\tilde{{\bm f}}_k}\right)\Phi_{2^{\mid \tilde{\mathbf{X}}_{k} \mid}}\left(\ltimes_{i\in\tilde{\mathbf{X}}_{k}} x_i(t)\right),
\end{aligned}
\end{equation*}
where $L_{\oplus_k}\in\mathscr{L}_{2\times 4}$, $L_{{\bm g}_k}\in\mathscr{L}_{2\times 2^{\mid {\bf N}^{\text{in}}_k \mid}}$ and $L_{\tilde{{\bm f}}_k}\in\mathscr{L}_{2\times 2^{\mid {\bf N}^{\text{in}}_k \mid}}$ are respectively the structure matrices of logical functions $\oplus_k$, ${\bm g}_k$ and $\tilde{{\bm f}}_k$. Hence, matrices $L_{\oplus_k}$ and $L_{{\bm g}_k}$ can be solved from the following equations
\begin{equation}\label{equ-logicalequation}
L_{\oplus_k}L_{{\bm g}_k}\left(I_{2^{\mid \tilde{\mathbf{X}}_{k}\mid}}\otimes L_{\tilde{{\bm f}}_k}\right)\Phi_{2^{\mid \tilde{\mathbf{X}}_{k}\mid }}=F_k.
\end{equation}
Fortunately, equation (\ref{equ-logicalequation}) is inevitably solvable as proved in \cite{Liff2016TNNLS1585}.

The second part considers the pinning nodes in $\Gamma_3$: For $k\in\Gamma_3$, the input $\tilde{{\bm u}}_k$ is in the open-loop form and the logical operator $\tilde{\oplus}_k$ is directly designed as $\wedge$.

To sum up, the pinning controller can be designed as (\ref{equation-final}), since there might exist some common elements between $\Gamma_1\cup\Gamma_2$ and $\Gamma_3$.
\begin{figure*}[!ht]
\centering
\begin{equation}\label{equation-final}
\left\{\begin{array}{ll}
{\bm x}_k(t+1)={\bm g}_k\left([{\bm x}_i(t)]_{i\in\tilde{\mathbf{X}}_k}\right)\tilde{\oplus}_k \tilde{{\bm f}}_k\left([{\bm x}_i(t)]_{i\in\tilde{\mathbf{X}}_k}\right),&k\in (\Gamma_1\cap\Gamma_2)\backslash\Gamma_3,\\
{\bm x}_k(t+1)=\tilde{{\bm u}}_k(t)\wedge \tilde{{\bm f}}_k\left([{\bm x}_i(t)]_{i\in\tilde{\mathbf{X}}_k}\right),&k\in \Gamma_3\backslash(\Gamma_1\cap\Gamma_2),\\
{\bm x}_k(t+1)=\tilde{{\bm u}}_k(t) \wedge \left({\bm g}_k\left([{\bm x}_i(t)]_{i\in\tilde{\mathbf{X}}_k}\right)\tilde{\oplus}_k {\bm f}_k\left([{\bm x}_i(t)]_{i\in\tilde{\mathbf{X}}_k}\right)\right),&k\in (\Gamma_1\cap\Gamma_2)\cap\Gamma_3,\\
{\bm x}_k(t+1)=\tilde{{\bm f}}_k\left([{\bm x}_i(t)]_{i\in\tilde{\mathbf{X}}_k}\right),&k\in [m+1,n]\backslash \Gamma,\\
{\bm x}_k(t+1)={\bm u}_k(t),&k\in [1,m].
\end{array}
\right.
\end{equation}
\hrulefill
\vspace*{1pt}
\end{figure*}

\begin{theorem}
Given an uncontrollable \texttt{BCN} $B(\tilde{{\bf G}},\tilde{{\bf F}},{\bf U})$, the pinning controlled \texttt{BCN} (\ref{equation-final}) is controllable.
\end{theorem}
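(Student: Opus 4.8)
The plan is to show that the pinning controller (\ref{equation-final}) transforms the effective network structure of the closed-loop \texttt{BCN} into a digraph meeting condition 3) of Theorem \ref{theorem-structuralcontrollability}; once this is done, the closed-loop network is structurally controllable, and since a \texttt{BCN} is trivially structurally equivalent to itself, Definition \ref{def-structuralcontrollable} forces the specific closed-loop \texttt{BCN} (\ref{equation-final}) to be controllable. Write $G^\ast$ for the effective structure of (\ref{equation-final}): its state-node edges are those of $\ddot{{\bf G}}$, augmented by the original generators feeding the nodes $[1,m]$ together with the open-loop generators attached to the nodes of $\Gamma_3$. The entire argument then reduces to establishing three properties of $G^\ast$: it is acyclic; every node that has an outgoing edge is a channel; and every simple node has a non-empty in-neighbor set.

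Acyclicity is inherited from the construction. Deleting one edge $e_{\sigma_i\varsigma_i}$ from each cycle yields the acyclic digraph $\vec{{\bf G}}$, removing further out-edges to obtain $\ddot{{\bf G}}$ cannot create a cycle, and the generators are sources, so attaching them introduces none. The decisive gain of the $\Gamma_2$-step is that every state node has out-degree at most one in $\ddot{{\bf G}}$: each $\varrho_i$ with out-degree $\geq 2$ retains the single out-neighbor $v_{\kappa_{\varrho_i}}$, while all other nodes already had out-degree at most one in $\vec{{\bf G}}$. Since $\ddot{{\bf G}}$ is acyclic it carries no self-loop, so any node possessing an outgoing edge has out-degree exactly one and is therefore a channel; consequently every in-neighbor of every state node of $G^\ast$ is a channel, which is the ``only contains channels'' half of (${\bf C1}$). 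Finally, $\Gamma_3$ collects exactly the nodes of in-degree zero in $\ddot{{\bf G}}$ and equips each with a generator, so every simple node (one not adjacent to a generator) has in-degree at least one, supplying the ``non-empty'' half of (${\bf C1}$).

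\textbf{The substantive step, which I expect to be the main obstacle,} is to verify that the feedback design for $k\in\Gamma_1\cup\Gamma_2$ really realizes the reduced adjacency of $\ddot{{\bf G}}$, i.e.\ that after control node $k$ depends on \emph{exactly} the variables indexed by $\ddot{{\bf N}}^{\text{in}}_k$ and on no others. I would argue in two directions. For ``no spurious dependence'', the matrix $F_k$ assembled in (\ref{equation-equa}) contracts the variables outside $\ddot{{\bf N}}^{\text{in}}_k$ through the factor $I_{2^{|\ddot{{\bf N}}^{\text{in}}_k|}}\otimes {\bf 1}^\top_{2^{|{\bf N}^{\text{in}}_k\backslash\ddot{{\bf N}}^{\text{in}}_k|}}$, so $F_k$ is a function of the $\ddot{{\bf N}}^{\text{in}}_k$-variables alone; thus every in-edge of $k$ deleted in forming $\ddot{{\bf G}}$, in particular any cycle-breaking edge $e_{\sigma_i\varsigma_i}$, is effectively erased. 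For ``no over-reduction'', condition (\ref{equation-equa2}) makes the two blocks of $A_k W_{[2,2^{j-1}]}$ differ for every $j$, which is precisely the statement that the realized function depends genuinely on each variable of $\ddot{{\bf N}}^{\text{in}}_k$; hence the effective in-neighbor set is neither larger nor smaller than $\ddot{{\bf N}}^{\text{in}}_k$, so the in-degrees used to define $\Gamma_3$ remain consistent with the closed loop. It then suffices to realize $F_k$ by an admissible controller, which holds because (\ref{equ-logicalequation}) is solvable for $L_{\oplus_k}$ and $L_{{\bm g}_k}$ by the result of \cite{Liff2016TNNLS1585}. For the remaining nodes $k\notin\Gamma_1\cup\Gamma_2$ no in-edge is deleted in passing from $\tilde{{\bf G}}$ to $\ddot{{\bf G}}$, so their unchanged dynamics already realize $\ddot{{\bf N}}^{\text{in}}_k$.

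Assembling these facts shows that $G^\ast$ is acyclic and satisfies (${\bf C1}$), i.e.\ condition 3) of Theorem \ref{theorem-structuralcontrollability}, whence the closed-loop network is structurally controllable and (\ref{equation-final}) is controllable. The one bookkeeping point I would treat carefully is the overlap in (\ref{equation-final}) between $\Gamma_1\cup\Gamma_2$ and $\Gamma_3$: for a node in both, the controller simultaneously reduces its dependence to $\ddot{{\bf N}}^{\text{in}}_k=\emptyset$ and attaches an open-loop generator through the conjunction $\tilde{{\bm u}}_k\wedge(\cdot)$, so the node becomes a controlled node with no residual state in-neighbor, which is consistent with both structural properties proved above. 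With every case covered, $G^\ast$ satisfies condition 3) and the claim follows.
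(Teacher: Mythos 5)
Your proposal is correct and follows essentially the same route as the paper's own proof: reduce the claim to showing that the closed-loop structure is $\ddot{{\bf G}}$ (acyclic, condition $({\bf C1})$ satisfied) and then invoke Theorem \ref{theorem-structuralcontrollability}, with the contraction $I_{2^{|\ddot{{\bf N}}^{\text{in}}_k|}}\otimes {\bf 1}^\top$ in $F_k$ eliminating spurious dependence and condition (\ref{equation-equa2}) guaranteeing genuine dependence on each retained variable. Your explicit treatment of the channel/out-degree bookkeeping and of the overlap between $\Gamma_1\cup\Gamma_2$ and $\Gamma_3$ only spells out details the paper leaves implicit.
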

\begin{proof}
To establish this theorem, we need to prove that the network structure of \texttt{BCN} (\ref{equation-final}) is acyclic and satisfies condition ({\bf C1}). Without considering the open-loop control input $\tilde{{\bm u}}_k$, we prove that the network structure of (\ref{equation-final}) is $\ddot{{\bf G}}$. Notice that the structure matrix for the dynamics of node $k$ is $F_k$, by plugging (\ref{equation-equa}) into its multi-linear form, one has that
\begin{equation}
\begin{aligned}
x_k&(t+1)=F_k (\ltimes_{i\in \tilde{{\bf X}}_k} x_i(t))\\
&= A_k(I_{2^{\mid \ddot{\mathbf{N}}^{\text{in}}_k \mid}}\otimes {\bf 1}_{2^{\mid \mathbf{N}^{\text{in}}_k\backslash\ddot{\mathbf{N}}^{\text{in}}_k\mid}}^\top) W^\top_{k,{\bf N}^{\text{in}}_k,\ddot{{\bf N}}^{\text{in}}_k} (\ltimes_{i\in \tilde{{\bf X}}_k} x_i(t))\\
&= A_k(I_{2^{\mid \ddot{\mathbf{N}}^{\text{in}}_k \mid}}\otimes {\bf 1}_{2^{\mid \mathbf{N}^{\text{in}}_k\backslash\ddot{\mathbf{N}}^{\text{in}}_k\mid}}^\top) W^\top_{k,{\bf N}^{\text{in}}_k,\ddot{{\bf N}}^{\text{in}}_k} W_{k,{\bf N}^{\text{in}}_k,\ddot{{\bf N}}^{\text{in}}_k} \\
&~~~~~~~~~~~~~~~~~~~~~~~~~~~~~~(\ltimes_{j=1}^{\mid \ddot{{\bf N}}^{\text{in}}_k \mid} x_{\alpha_k^{\kappa_j}}) (\ltimes_{j\in \mathbf{N}^{\text{in}}_k\backslash \ddot{\mathbf{N}}^{\text{in}}_k } x_{\alpha_k^{\kappa_j}})\\
&=A_k(I_{2^{\mid \ddot{\mathbf{N}}^{\text{in}}_k \mid}}\otimes {\bf 1}_{2^{\mid \mathbf{N}^{\text{in}}_k\backslash \ddot{{\bf N}}^{\text{in}}_k \mid}}^\top) (\ltimes_{j=1}^{\mid \ddot{{\bf N}}^{\text{in}}_k \mid} x_{\alpha_k^{\kappa_j}}) (\ltimes_{j\in \mathbf{N}^{\text{in}}_k \backslash \ddot{{\bf N}}^{\text{in}}_k}x_{\alpha_k^{\kappa_j}})\\
&=A_k(\ltimes_{j=1}^{\mid \ddot{{\bf N}}^{\text{in}}_k \mid} x_{\alpha_k^{\kappa_j}})
\end{aligned}
\end{equation}
where the fourth equation in the above holds because every swap matrix is also a permutation matrix \cite{chengdz2011springer}, and the establishment of the fifth equation is due to
$$(I_{2^{\mid \ddot{\mathbf{N}}^{\text{in}}_k \mid}}\otimes {\bf 1}_{2^{\mid \mathbf{N}^{\text{in}}_k\backslash \ddot{{\bf N}}^{\text{in}}_k \mid}}^\top) (\ltimes_{j=1}^{\mid \ddot{{\bf N}}^{\text{in}}_k \mid} x_{\alpha_k^{\kappa_j}})=(\ltimes_{j=1}^{\mid \ddot{{\bf N}}^{\text{in}}_k \mid} x_{\alpha_k^{\kappa_j}})\otimes {\bf 1}_{2^{\mid \mathbf{N}^{\text{in}}_k \backslash \ddot{\mathbf{N}}^{\text{in}}_k \mid}}^\top.$$

Moreover, since matrix $A_k$ satisfies the condition (\ref{equation-equa2}) and $\ltimes_{j=1}^n x_j=W_{[2,2^{j-1}]} x_j \ltimes_{i=1}^{j-1}x_i \ltimes_{i=j+1}^{n}x_j$, one can conclude that every variable ${\bm x}_j$ is functional.

Finally, the pinning nodes in $\Gamma_3$ guarantee that the in-neighbor set of any state vertex is nonempty. Therefore, \texttt{BCN} (\ref{equation-final}) is controllable.
\end{proof}

\subsection{Discussions and Comparisons}
Compared with the traditional \texttt{ASSR} approach as in \cite{chenhw2016scis,liff2020cyber,liff2020tcns1523,lujq2016ieeetac1658,liuzq2020tcns}, this novel pinning approach is equipped with the following four superiorities:
\begin{itemize}
  \item[1)] The time complexity of deriving the network structure $\tilde{{\bf G}}$ of \texttt{BCN} $B(\tilde{{\bf G}},\tilde{{\bf F}},{\bf U})$ is polynomial w.r.t. the node number $n+m$; it is bounded by $\Theta(n^2+mn+m)$. The pinning nodes are selected by utilizing the depth-first search on digraph $\tilde{{\bf G}}$, thus its time complexity is also bounded by $\Theta((n+m)^2)$. To solve all logical equations (\ref{equ-logicalequation}), all operators are determined by the structure matrix $L_{\tilde{{\bm f}}_i}\in\mathscr{L}_{2\times 2^{\mid\tilde{{\bf N}}^{\text{in}}_i\mid}}$, and this process can be implemented within time $\Theta(n2^{3d^\ast})$, where $d^\ast$ is the maximal in-degree of all vertices. To sum up, the time complexity of our approach is totally upper bounded by $\Theta(n2^{3d^\ast}+2(n+m)^2)$. The actual experiments have pointed out that ``the realistic biological networks are always sparsely connected'' \cite{jeong2000nature}, thus the number $d^\ast$ would not be pretty large. Therefore, this pinning strategy provides a way to force the controllability of the large-scale \texttt{BCNs}.

  \item[2)] Additionally, it is noted that feedback control ${\bm g}_k$ only depends on the functional variables of $\tilde{{\bm f}}_k$, thus it is of distributed form and is more concise than those in \cite{chenhw2016scis,liff2020cyber,liff2020tcns1523,lujq2016ieeetac1658,liuzq2020tcns}.

  \item[3)] Another important thing is that we develop a way to design the pinning controller so as to make \texttt{BCNs} controllable rather than just to check whether they are controllable under the pre-assigned pinning controller. This is the most essential difference.

  \item[4)] Finally, the pinning node set can be selected as $\Gamma$ in a polynomial time. By comparison, it is superior to the traditional methods, which may need to inject the control inputs on all the state nodes.
\end{itemize}

\subsection{Output Tracking/Synchronization/Regulation for BCNs}
Last but not least, our approach not only suits the pinning controllability, but can also be utilized to improve the methods of pinning output tracking/synchronization/regulation of \texttt{BCN} $B(\tilde{{\bf G}},\tilde{{\bf F}},{\bf U})$. All these problems require that the outputs can track to a given reference trajectory. Thus, by Corollary \ref{cor-trajectory-controllability}, we could regard the functional variables as several desired root nodes. More precisely, consider \texttt{BCN} $B(\tilde{{\bf G}},\tilde{{\bf F}},{\bf U})$ with outputs
\begin{equation}
{\bf y}_j(t)={\bm h}_j([{\bm x}_i(t)]_{i\in {\bf Y}_j}), ~j\in[n-p+1,n],
\end{equation}
where ${\bf Y}_j$ is the set of functional variables for logical function ${\bm h}_j$. If we regard nodes $v_j$ with $j\in\bigcup_{j=1}^{p}{\bf Y}_j$ as the root nodes, then the outputs can track to any target sequence according to Corollary \ref{cor-trajectory-controllability}. To this end, one can modify the set $\ddot{{\bf E}}$ given in the above subsection as $\ddot{{\bf E}}:=\vec{{\bf E}}\backslash {\bf E}'$ with ${\bf E}'=\{(v_i,v_j)\mid i\in \bigcup_{k=1}^{p}{\bf Y}_k, j \not\in \bigcup_{k=1}^{p}{\bf Y}_k\}$ to make this \texttt{BCN} track towards the given trajectory.

\section{Pinning Stabilization in probability}\label{sec-stability in probability}
In this section, we apply the structural controllability criterion--Theorem \ref{theorem-structuralcontrollability}--to overcome the difficulties of control design in \cite{huangc2020ins205} and \cite{huangc2020tnnls}, due to the lack of inclusion property for reachable subsets. Consequently, the pinning nodes and state feedback controllers in \cite{huangc2020ins205} were both given in advance. Or alternatively, only can the testification procedure be available. Although the time-varying state feedback controller was designed in \cite{huangc2020tnnls}, it can only guarantee the fixed-time reachability to the desirable stable state, and the control inputs after the reachability time is still lacking. As mentioned in \cite{huangc2020ins205,huangc2020tnnls}, the efficient algorithm for controller design is rare and the worst time complexity in the existing works on controller design is $\Theta((2^m)^{2^n})$, which is a severe computational burden for large-scale \texttt{BNs}. Besides, the obtained Theorem 3 in \cite{huangc2020tnnls} is still hard to use for designing time-varying controller after the system state reaching $x^\ast$ at the fixed time.

In this section, we will present an equivalence verification between several types of stability in probability. On this basis, the pinning control for stabilization in probability is designed by the structural controllability condition. Notice that the $2^n\times 2^n$-dimensional transition probability matrix (\texttt{TPM}), which is utilized in this section, is only applied to verify the equivalence between different types of stability. However, in the procedure of controller design, we still do not need to use the whole $2^n\times 2^n$-dimensional network transition matrix $L$ in (\ref{equ-assr-bn}), with the help of the structural controllability criterion.

By using \texttt{STP} of matrices, the \texttt{ASSR} of \texttt{PBN} (\ref{equ-pbn}) is developed as
\begin{equation}\label{equ-pbn-assr}
x(t+1)=\breve{L}^{\sigma(t)} x(t),
\end{equation}
where the evolution of state $x(t)$ can be equivalently regarded as the Markov chain $\{x(t)\in\Delta_{2^n}\mid t\in\mathbb{N}\}$ with \texttt{TPM} $\mathscr{P}=\sum_{i=1}^{s} p_i \breve{L}^{i}$, denoted by $M(\Delta_{2^n},\mathbb{N},\mathscr{P})$. Let $\breve{{\bf V}}=\Delta_{2^n}$, we can equivalently characterize the Markov chain $M(\Delta_{2^n},\mathbb{N},\mathscr{P})$ by the called state transition graph (\texttt{STG}) $\breve{{\bf G}}=(\breve{{\bf V}},\breve{{\bf E}},\breve{{\bf W}})$, where $(\delta_{2^n}^i,\delta_{2^n}^j) \in \breve{{\bf E}}$ if and only if $[\mathscr{P}]_{ji}>0$, and the weight function $\breve{{\bf W}}:\breve{{\bf E}}\rightarrow \mathbb{R}$ is defined as $\breve{{\bf W}}(\delta_{2^n}^i,\delta_{2^n}^j)=[\mathscr{P}]_{ji}$.

For the weighted digraph $\breve{{\bf G}}$, state $\delta_{2^n}^j$ is said to be reachable from state $\delta_{2^n}^i$ by one time step, denoted by $\delta_{2^n}^i\rightharpoonup\delta_{2^n}^j$, if $(\delta_{2^n}^i,\delta_{2^n}^j) \in \breve{{\bf E}}$. Moreover, state $\delta_{2^n}^j$ is said to be reachable from state $\delta_{2^n}^i$, denoted by $\delta_{2^n}^i\rightarrow\delta_{2^n}^j$, if there is a path from state $\delta_{2^n}^i$ to state $\delta_{2^n}^j$. On this basis, we denote $\delta_{2^n}^i\leftrightarrow\delta_{2^n}^j$ if $\delta_{2^n}^i\rightarrow\delta_{2^n}^j$ and $\delta_{2^n}^j\rightarrow\delta_{2^n}^i$.

\subsection{Definitions of Stability in Probability}
In this subsection, we review several definitions of stability in probability for \texttt{PBNs} in the literature.
\begin{definition}[See \cite{huangc2020ins205}\cite{huangc2020tnnls}]\label{def-SP}
Given state $x^\ast \in \Delta_{2^n}$, \texttt{PBN} (\ref{equ-pbn-assr}) is said to be stable in probability (\texttt{SP}) at state $x^\ast$ if, there exists an integer $T>0$ such that
\begin{equation}
{\bf P}\{x(t;x_0)=x^\ast\mid x(0)=x_0\}>0
\end{equation}
holds for any $x_0\in\Delta_{2^n}$ and any $t>T$.
\end{definition}

Accordingly, one can divide Definition \ref{def-SP} into the following two cases. Or alternatively, a \texttt{PBN} satisfying Definition \ref{def-SP} satisfies either Definition \ref{def-SAPP} or Definition \ref{def-SNP}.
\begin{definition}\label{def-SAPP}
Given state $x^\ast \in \Delta_{2^n}$, \texttt{PBN} (\ref{equ-pbn-assr}) is said to be stable in absolutely positive probability (\texttt{SAPP}) $\varepsilon$ at state $x^\ast$ if, \texttt{PBN} satisfies Definition \ref{def-SP} and there exists a number $0<\varepsilon\leq 1$, such
that for any $T>0$ we can find an integer $M>T$ satisfying
\begin{equation}
{\bf P}\{x(M;x_0)=x^\ast\mid x(0)=x_0\}>\varepsilon.
\end{equation}
\end{definition}

\begin{definition}\label{def-SNP}
Given state $x^\ast \in \Delta_{2^n}$, \texttt{PBN} (\ref{equ-pbn-assr}) is said to be stable in negligible probability (\texttt{SNP}) at state $x^\ast$ if, it satisfies Definition \ref{def-SP} and  $\lim\limits_{t\rightarrow+\infty}{\bf P}\{x(t;x_0)=x^\ast\mid x(0)=x_0\}=0$ holds for any $x_0\in\Delta_{2^n}$.
\end{definition}

\begin{remark}
In fact, Definition \ref{def-SNP} describes the scene of \texttt{SP} with probability $\lim\limits_{t\rightarrow+\infty}{\bf P}\{x(t;x_0)=x^\ast\mid x(0)=x_0\}=0$, while Definition \ref{def-SAPP} characterizes the case except that in Definition \ref{def-SNP}.
\end{remark}

\begin{definition}\label{def-SSPP}
Given state $x^\ast \in \Delta_{2^n}$, \texttt{PBN} (\ref{equ-pbn-assr}) is said to be stable in steady positive probability (\texttt{SSPP}) at state $x^\ast$ if, for any $x_0\in\Delta_{2^n}$, it holds that
\begin{equation}
\lim\limits_{t\rightarrow+\infty}{\bf P}\{x(t;x_0)=x^\ast\mid x(0)=x_0\}=\varepsilon_{x_0}
\end{equation}
exists for certain $0<\sigma_{x_0}\leq 1$.
\end{definition}
\begin{remark}
In this section, we will also show that once \texttt{PBN} (\ref{equ-pbn-assr}) satisfies Definition \ref{def-SSPP}, the number $\varepsilon_{x_0}$ is independent of initial state $x_0\in\Delta_{2^n}$.
\end{remark}

Finally, we provide the concept of general stability in probabilistic distribution (\texttt{SPD}) that was firstly proposed in \cite{zhusy2019tac}. Denote by $\nabla_{2^n}$ the set of all $2^n$-dimensional probability column vectors.

In this study, we simply modify the definition of stability in probability distribution (\texttt{SPD}) in \cite{zhusy2019tac} as follows.
\begin{definition}\label{def-SPD2}
\texttt{PBN} (\ref{equ-pbn-assr}) is said to be globally stable in probability distribution (\texttt{SPD}) w.r.t. ${\bm \mu}$ if, there exists a probability vector ${\bm \mu} \in \nabla_{2^n}$ satisfying that
\begin{equation}
\lim_{t\rightarrow +\infty}{\bf P}\left\{x(t;x_0)=\delta_{2^n}^j\mid x(0)=x_0\right\}= [{\bm \mu}]_j, ~j\in[1,2^n],
\end{equation}
for any $x_0\in\Delta_{2^n}$.
\end{definition}

\subsection{Some Lemmas of Markov Chains}
Here, some necessary conclusions of Markov chains are briefly introduced. In order to keep consistent with the expression of \texttt{PBNs}, the \texttt{TPM} that we consider here is a column-stochastic one, and the state space is assumed to be $\Delta_{2^n}$.

Given Markov chain $M(\Delta_{2^n},\mathbb{N},\mathscr{P})$. The first arrival probability from state $\delta_{2^n}^{i}$ to state $\delta_{2^n}^j$ at the $k$-th time step is denoted by $P^{(k)}_{i\rightarrow j}={\bf P}\{x(k)=\delta^j_{2^n}, x(t)\neq \delta_{2^n}^j, t\in[1,k-1] \mid x(0)=\delta_{2^n}^i\}$. Then, the first arrival probability from $\delta_{2^n}^{i}$ to $\delta_{2^n}^j$ is defined as $P_{i\rightarrow j}=\sum_{k=1}^{+\infty} P^{(k)}_{i\rightarrow j}$. To distinguish with the first arrival probability $P^{(k)}_{i \rightarrow j}$, we denote $\mathscr{P}^t$ by $\mathscr{P}(t)$.

As for the period of system states. Define the period of state $\delta_{2^n}^i$, represented by $d(\delta_{2^n}^i)$, as the largest common divisor of all integers $k$ that satisfy $[\mathscr{P}(k)]_{ii}>0$. State $\delta_{2^n}^i$ is said to be periodic if $d(\delta_{2^n}^i)>1$; otherwise (i.e., $d(\delta_{2^n}^i)=1$), it is called aperiodic. For the recurrence of states, state $\delta_{2^n}^i$ is called a recurrent (respectively, transition) state if $P_{i \rightarrow i}=1$ (respectively, $P_{i\rightarrow i}<1$).

\begin{lemma}[see \cite{markovchains}]\label{lem-markov}
\begin{itemize}
  \item[1)] If $\delta_{2^n}^\mu \leftrightarrow \delta_{2^n}^\nu$, then the types of $\delta_{2^n}^\mu$ and $\delta_{2^n}^\nu$ are the same, including periods and recurrence.

  \item[2)] Every Markov chain with finite state space contains at least one recurrent state.

  \item[3)] If state $\delta_{2^n}^\mu$ is recurrent and $\delta_{2^n}^\mu \rightarrow \delta_{2^n}^\nu$, then $\delta_{2^n}^\nu$ is recurrent and $P_{\nu\rightarrow\mu}=1$.

  \item[4)] If a Markov chain is an ergodic (i.e., recurrent and aperiodic) one, then it has the limiting distribution, which is exactly its unique steady distribution ${\bm \mu}$ satisfying ${\bm \mu}=\mathscr{P}{\bm \mu}$.

  \item[5)] If state $\delta_{2^n}^\mu$ is not recurrent, then $\lim\limits_{k\rightarrow+\infty} [\mathscr{P}(k)]_{\mu\nu}=0$ for any $\delta_{2^n}^\nu\in\Delta_{2^n}$.
\end{itemize}
\end{lemma}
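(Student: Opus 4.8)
The plan is to derive all five items from a single first-passage (renewal) decomposition together with its generating-function consequence. Throughout I keep the paper's column-stochastic convention, so $[\mathscr{P}(k)]_{ji}$ is the probability of reaching $\delta_{2^n}^{j}$ from $\delta_{2^n}^{i}$ in exactly $k$ steps and, since a product of column-stochastic matrices is again column-stochastic, $\sum_{j}[\mathscr{P}(k)]_{ji}=1$ for every $i$ and $k$. Conditioning on the epoch $r$ of the first visit to $\delta_{2^n}^{j}$ gives, for all $i,j$ and $k\geq 1$,
\begin{equation*}
[\mathscr{P}(k)]_{ji}=\sum_{r=1}^{k}P^{(r)}_{i\rightarrow j}\,[\mathscr{P}(k-r)]_{jj}.
\end{equation*}
Writing $F_{ij}(s)=\sum_{r\geq 1}P^{(r)}_{i\rightarrow j}s^{r}$ and $G_{ji}(s)=\sum_{k\geq 0}[\mathscr{P}(k)]_{ji}s^{k}$, this convolution turns into $G_{jj}(s)=1/\bigl(1-F_{jj}(s)\bigr)$ and $G_{ji}(s)=F_{ij}(s)\,G_{jj}(s)$ for $i\neq j$. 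Letting $s\uparrow 1$ and recalling $F_{jj}(1)=P_{j\rightarrow j}$ yields the \emph{potential criterion}: state $\delta_{2^n}^{j}$ is recurrent if and only if $\sum_{k}[\mathscr{P}(k)]_{jj}=+\infty$. This criterion is the workhorse for items 1), 2), 3) and 5).

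For item 1), pick $a,b$ with $[\mathscr{P}(a)]_{\mu\nu}>0$ and $[\mathscr{P}(b)]_{\nu\mu}>0$, which exist because $\delta_{2^n}^{\mu}\leftrightarrow\delta_{2^n}^{\nu}$. Concatenating paths gives the Chapman--Kolmogorov lower bound $[\mathscr{P}(a+k+b)]_{\mu\mu}\geq [\mathscr{P}(a)]_{\mu\nu}[\mathscr{P}(k)]_{\nu\nu}[\mathscr{P}(b)]_{\nu\mu}$. Summing over $k$ shows that $\sum_{k}[\mathscr{P}(k)]_{\nu\nu}=+\infty$ forces $\sum_{k}[\mathscr{P}(k)]_{\mu\mu}=+\infty$, so the potential criterion makes recurrence a two-sided (class) property; taking instead any single $k$ with $[\mathscr{P}(k)]_{\nu\nu}>0$ and using $d(\delta_{2^n}^{\mu})\mid(a+b)$ forces $d(\delta_{2^n}^{\mu})\mid k$, whence $d(\delta_{2^n}^{\mu})\mid d(\delta_{2^n}^{\nu})$ and, symmetrically, the two periods coincide. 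Item 3) is then obtained in two moves: first $P_{\nu\rightarrow\mu}=1$, proved by contradiction---since $\delta_{2^n}^{\mu}\rightarrow\delta_{2^n}^{\nu}$, a shortest such path does not revisit $\delta_{2^n}^{\mu}$, so with some probability $p>0$ the chain reaches $\delta_{2^n}^{\nu}$ before returning to $\delta_{2^n}^{\mu}$, and if $P_{\nu\rightarrow\mu}<1$ the chain would avoid $\delta_{2^n}^{\mu}$ forever with probability at least $p(1-P_{\nu\rightarrow\mu})>0$, contradicting $P_{\mu\rightarrow\mu}=1$; this yields $\delta_{2^n}^{\mu}\leftrightarrow\delta_{2^n}^{\nu}$, after which item 1) upgrades recurrence of $\delta_{2^n}^{\mu}$ to recurrence of $\delta_{2^n}^{\nu}$.

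Items 5) and 2) are quick corollaries. For 5), transience of $\delta_{2^n}^{\mu}$ means $G_{\mu\mu}(1)=1/(1-P_{\mu\rightarrow\mu})<\infty$, so $G_{\mu\nu}(1)=F_{\nu\mu}(1)G_{\mu\mu}(1)=P_{\nu\rightarrow\mu}/(1-P_{\mu\rightarrow\mu})<\infty$; a convergent series has vanishing terms, hence $[\mathscr{P}(k)]_{\mu\nu}\to 0$ for every $\nu$. For 2), if \emph{every} state were transient then item 5) would give $[\mathscr{P}(k)]_{ji}\to 0$ for each fixed $j$, and summing the finitely many rows against the column-sum identity would yield $1=\sum_{j}[\mathscr{P}(k)]_{ji}\to 0$, a contradiction; hence at least one recurrent state exists.

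The real work is item 4). The potential criterion is silent about periodicity and about actual convergence of $\mathscr{P}(k)$, so here one must exploit aperiodicity genuinely. The plan is to observe that on the finite space $\Delta_{2^n}$ an ergodic (irreducible and aperiodic) chain has a \emph{primitive} matrix $\mathscr{P}$, apply the Perron--Frobenius theorem to conclude that the eigenvalue $1$ is simple with a strictly positive column eigenvector ${\bm \mu}$ normalized by ${\bf 1}^{\top}{\bm \mu}=1$ while every other eigenvalue lies strictly inside the unit disc, and deduce $\mathscr{P}(k)\to{\bm \mu}\,{\bf 1}^{\top}$. Each column of $\mathscr{P}(k)$ then converges to the same ${\bm \mu}$, independently of the initial state, so ${\bm \mu}$ is the limiting distribution; passing to the limit in $\mathscr{P}(k+1)=\mathscr{P}\,\mathscr{P}(k)$ gives ${\bm \mu}=\mathscr{P}{\bm \mu}$, and simplicity of the eigenvalue gives uniqueness. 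I expect this step to be the main obstacle: establishing genuine (not merely Ces\`aro) convergence to a rank-one limit requires either the primitivity/Perron--Frobenius machinery or an equivalent coupling argument, neither of which follows from the renewal identity used for the other four items.
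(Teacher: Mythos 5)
The paper does not actually prove this lemma: it is imported verbatim from the cited Markov-chain monograph, so there is no in-paper argument to compare yours against. Your proposal is a correct, self-contained derivation of all five standard facts, and the route you take (first-passage/renewal decomposition $\Rightarrow$ generating-function identities $\Rightarrow$ potential criterion for recurrence, then Perron--Frobenius for the convergence statement) is the classical textbook one. A few points worth noting. You correctly track the paper's column-stochastic convention, so the identity $[\mathscr{P}(k)]_{ji}=\sum_{r=1}^{k}P^{(r)}_{i\rightarrow j}[\mathscr{P}(k-r)]_{jj}$ and its consequence $G_{ji}(s)=F_{ij}(s)G_{jj}(s)$ are consistent with the paper's definition of $P^{(k)}_{i\rightarrow j}$; this is exactly where index errors usually creep in and you avoided them, which matters for item 5) where the conclusion concerns $[\mathscr{P}(k)]_{\mu\nu}$ with $\mu$ transient. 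Your reading of ``ergodic'' in item 4) as \emph{irreducible} and aperiodic (rather than merely ``all states recurrent and aperiodic'') is the right one---without irreducibility the stationary distribution need not be unique---and it matches how the lemma is actually invoked in the paper's Appendix, where $\mathscr{P}_1$ is the restriction of $\mathscr{P}$ to a single closed communicating class $S$; on a finite space this gives a primitive matrix and the rank-one limit $\mathscr{P}(k)\rightarrow{\bm \mu}\,{\bf 1}^{\top}$ as you describe. Finally, your item 2) leans on item 5) together with finiteness of the state space via $1=\sum_{j}[\mathscr{P}(k)]_{ji}\rightarrow 0$; that is fine, but note it makes item 2) genuinely dependent on finiteness, as it must be, since the claim fails for instance for the asymmetric random walk on $\mathbb{Z}$. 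No gaps.
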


\subsection{The Equivalence Among Stability in Probability}
Now, we establish the proof for the equivalence of Definition \ref{def-SP}, Definition \ref{def-SAPP}, Definition \ref{def-SSPP} and Definition \ref{def-SPD2}, which is crucial for the subsequent controller design.
\begin{theorem}\label{thm-equivalence}
Given state $x^\ast\in \Delta_{2^n}$, denoted by $x^\ast=\delta_{2^n}^{\alpha}$ without loss of generality, the following statements are equivalent:
\begin{itemize}
  \item[1)] \texttt{PBN} (\ref{equ-pbn-assr}) is \texttt{SP} at state $x^\ast$.
  \item[2)] \texttt{PBN} (\ref{equ-pbn-assr}) is \texttt{SAPP} at state $x^\ast$.
  \item[3)] \texttt{PBN} (\ref{equ-pbn-assr}) is \texttt{SSPP} at state $x^\ast$.
  \item[4)] \texttt{PBN} (\ref{equ-pbn-assr}) is \texttt{SPD} w.r.t. ${\bm \mu}\in \nabla_{2^n}$ with $[{\bm \mu}]_{\alpha}>0$.
\end{itemize}
\end{theorem}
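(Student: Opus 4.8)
The plan is to route all the substance through the single implication $1)\Rightarrow 4)$: once \texttt{SP} is shown to produce the limiting distribution ${\bm\mu}$ of statement 4) with $[{\bm\mu}]_\alpha>0$, statements 3) and 2) drop out at once by reading off the $\alpha$-th coordinate, while each of 2), 3), 4) trivially returns 1) because a strictly positive long-run probability of occupying $x^\ast$ is exactly what \texttt{SP} asserts (for 3) and 4) one extracts $\lim_t{\bf P}\{x(t;x_0)=x^\ast\}>0$ and takes the maximum of the finitely many onset times over $x_0\in\Delta_{2^n}$; for 2) the inclusion of Definition \ref{def-SP} is built into Definition \ref{def-SAPP}). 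Hence it suffices to establish $1)\Rightarrow 4)$ and to record these easy reductions.

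For $1)\Rightarrow 4)$, I would first extract structural information about the Markov chain $M(\Delta_{2^n},\mathbb{N},\mathscr{P})$ from \texttt{SP}. Taking $x_0=x^\ast$ in Definition \ref{def-SP} gives $[\mathscr{P}(t)]_{\alpha\alpha}>0$ for every $t>T$; since the set of such return times contains two consecutive integers, its greatest common divisor is $1$, so $\delta_{2^n}^\alpha$ is aperiodic. By part 2) of Lemma \ref{lem-markov} some recurrent state $\delta_{2^n}^\mu$ exists, and \texttt{SP} gives $\delta_{2^n}^\mu\to\delta_{2^n}^\alpha$, whence part 3) of Lemma \ref{lem-markov} makes $\delta_{2^n}^\alpha$ recurrent. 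To see that the recurrent class $C:=[\delta_{2^n}^\alpha]$ is the only one, I take any recurrent $\delta_{2^n}^j$; \texttt{SP} yields $\delta_{2^n}^j\to\delta_{2^n}^\alpha$, so part 3) of Lemma \ref{lem-markov} forces $P_{\alpha\to j}=1$ and hence $\delta_{2^n}^\alpha\leftrightarrow\delta_{2^n}^j$, placing $\delta_{2^n}^j$ in $C$.

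Thus the chain restricted to $C$ is irreducible, recurrent, and aperiodic, i.e. ergodic, so part 4) of Lemma \ref{lem-markov} supplies a unique steady distribution $\pi$ on $C$, which is strictly positive there; in particular $\pi_\alpha>0$. I then extend $\pi$ to ${\bm\mu}\in\nabla_{2^n}$ by setting $[{\bm\mu}]_j=0$ for every transient $\delta_{2^n}^j$ and $[{\bm\mu}]_j=\pi_j$ for $\delta_{2^n}^j\in C$, and claim $\lim_{t\to+\infty}{\bf P}\{x(t;x_0)=\delta_{2^n}^j\mid x(0)=x_0\}=[{\bm\mu}]_j$ for all $x_0$. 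For transient $\delta_{2^n}^j$ this is part 5) of Lemma \ref{lem-markov}; for $\delta_{2^n}^j\in C$ I would condition on the time and location of first entry into $C$ (almost surely finite, since $C$ is the unique recurrent class of a finite chain) and pass to the limit through the ergodic convergence on $C$, which does not depend on the entry state. This is precisely statement 4), and since $[{\bm\mu}]_\alpha=\pi_\alpha>0$ it simultaneously delivers \texttt{SSPP} with $\varepsilon_{x_0}=\pi_\alpha$ and \texttt{SAPP} with any $\varepsilon\in(0,\pi_\alpha)$.

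I expect the main obstacle to be exactly the independence of the limiting distribution from the initial state on the recurrent coordinates: the cited lemmas only guarantee ergodic convergence \emph{inside} $C$, so the argument must carefully couple the first-entry decomposition into $C$ with a dominated-convergence passage to the limit over the entry time and state. A secondary point worth stating is that the positivity $\pi_\alpha>0$ rules out the \texttt{SNP} alternative of Definition \ref{def-SNP}, so \texttt{SP} in fact always refines to \texttt{SAPP}.
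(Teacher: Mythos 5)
Your proposal is correct, and its structural half coincides with the paper's: both identify the communicating class of $x^\ast$ as the unique recurrent class, both obtain aperiodicity and recurrence of $x^\ast$ from Lemma \ref{lem-markov} (the paper builds this class, its set $S$, by an iterative path-adding construction and argues aperiodicity by contradiction, while you take the class of $\alpha$ directly and use two consecutive return times, which is slightly cleaner), and both then apply part 4) of Lemma \ref{lem-markov} on that class. The genuine divergence is in the final analytic step, extending the limit from initial states inside the recurrent class to all of $\Delta_{2^n}$: the paper permutes the states so that $\mathscr{P}$ becomes block upper-triangular with blocks $\mathscr{P}_1,\mathscr{P}_2,\mathscr{P}_3$ and runs an explicit $\varepsilon/3$ estimate on $\|\mathscr{P}_2(\delta)-{\bf 1}_s^\top\otimes\vec{\omega}\|$ via the identity $\mathscr{P}_2(\kappa+\eta)=\mathscr{P}_1(\kappa)\mathscr{P}_2(\eta)+\mathscr{P}_2(\kappa)\mathscr{P}_3(\eta)$, whereas you decompose on the almost-surely-finite first entry time and entry state into $C$ and pass to the limit by dominated convergence. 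Your route is shorter and gives a uniform treatment that absorbs the paper's separate Case I (where $x^\ast$ is a common fixed point and an external stability theorem is invoked) as the degenerate case $C=\{x^\ast\}$; on the other hand, the step you yourself flag as delicate --- uniformity of the ergodic limit over the entry state together with summability over the entry time --- is exactly the content that the paper's hands-on matrix estimate supplies, so in a full write-up it must be spelled out (e.g., split the sum over entry times at $t/2$ and bound the tail by the probability of late entry into $C$). The cheap implications among 1)--4) are arranged slightly differently (you derive 2) and 3) from the explicit limiting distribution of 4) by reading off the $\alpha$-th coordinate, rather than following the paper's chain $4)\Rightarrow 3)\Rightarrow 2)\Rightarrow 1)$), but both closures of the equivalence are valid.
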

\begin{proof}
Please refer to Appendix for the proof of this theorem.
\end{proof}

\begin{theorem}\label{thm-corollary}
Given $x^\ast\in\Delta_{2^n}$, \texttt{PBN} (\ref{equ-pbn-assr}) is \texttt{SP} at $x^\ast$ if and only if its \texttt{STG} $G$ contains an in-tree with the root $x^\ast$ and $x^\ast$ is an aperiodic state.
\end{theorem}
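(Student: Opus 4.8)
The plan is to recast stability in probability entirely in terms of the entries of the powers of the column-stochastic \texttt{TPM} and then read off the two graph conditions. Writing $x^\ast=\delta_{2^n}^\alpha$ and $x_0=\delta_{2^n}^i$, \texttt{SP} at $x^\ast$ is precisely the assertion that there is an integer $T>0$ with $[\mathscr{P}(t)]_{\alpha i}>0$ for every $i\in[1,2^n]$ and every $t>T$, since ${\bf P}\{x(t;x_0)=x^\ast\mid x(0)=x_0\}=[\mathscr{P}(t)]_{\alpha i}$. I would keep this reformulation central and argue the two directions separately.

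For necessity, note first that $[\mathscr{P}(t)]_{\alpha i}>0$ forces a length-$t$ directed path from $\delta_{2^n}^i$ to $\delta_{2^n}^\alpha$ in the \texttt{STG} $G$; because \texttt{SP} supplies such a $t$ for every $i$, every vertex reaches $x^\ast$, which is exactly the statement that $G$ contains a spanning in-tree rooted at $x^\ast$. For aperiodicity I would specialize to $i=\alpha$: \texttt{SP} gives $[\mathscr{P}(t)]_{\alpha\alpha}>0$ for all $t>T$, so two consecutive integers $t,t+1>T$ both lie in the return set $R:=\{k\ge1\mid[\mathscr{P}(k)]_{\alpha\alpha}>0\}$, whence $d(x^\ast)=\gcd R$ divides $\gcd\{t,t+1\}=1$ and $x^\ast$ is aperiodic.

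For sufficiency, I would first upgrade ``every vertex reaches $x^\ast$'' to the recurrence of $x^\ast$. By part 2) of Lemma \ref{lem-markov} the finite chain has a recurrent state $\delta_{2^n}^\mu$; the spanning in-tree gives $\delta_{2^n}^\mu\rightarrow x^\ast$, so part 3) of Lemma \ref{lem-markov} makes $x^\ast$ recurrent and in particular $R\neq\emptyset$. The set $R$ is closed under addition (concatenating two return excursions at $x^\ast$) and satisfies $\gcd R=d(x^\ast)=1$ by aperiodicity; the decisive number-theoretic fact is that an additively closed subset of the positive integers with greatest common divisor one is cofinite, so there exists $T_1$ with $[\mathscr{P}(t)]_{\alpha\alpha}>0$ for all $t\ge T_1$. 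Finally, for each state $i$ I would fix a path length $k_i$ with $[\mathscr{P}(k_i)]_{\alpha i}>0$ (available from the in-tree); the Chapman--Kolmogorov decomposition yields, for $t\ge k_i+T_1$,
\begin{equation*}
[\mathscr{P}(t)]_{\alpha i}\ge[\mathscr{P}(t-k_i)]_{\alpha\alpha}\,[\mathscr{P}(k_i)]_{\alpha i}>0,
\end{equation*}
and taking $T=\max_i k_i+T_1$ over the finite state space produces exactly \texttt{SP}.

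The main obstacle I anticipate lies in the sufficiency direction: the passage from $\gcd R=1$ to ``$R$ is cofinite'' is the numerical-semigroup (Chicken--McNugget-type) step where aperiodicity is genuinely consumed, and it must be preceded by certifying that $x^\ast$ is recurrent rather than merely reachable, since the period $d(x^\ast)$ and hence the aperiodicity hypothesis only become meaningful once $R$ is known to be nonempty. Both hinge on exploiting Lemma \ref{lem-markov} together with the additive closure of $R$, and on handling the column-stochastic indexing of $\mathscr{P}$ consistently.
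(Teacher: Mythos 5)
Your proposal is correct, and both directions hold up, but your sufficiency argument takes a genuinely different and more elementary route than the paper's. For necessity the two arguments essentially coincide: reachability of $x^\ast$ from every state gives the spanning in-tree, and periodicity is refuted by exhibiting times at which the return probability must vanish (the paper uses times of the form $kd(x^\ast)-1$, you use the gcd of two consecutive elements of the return set $R$; both are fine). For sufficiency the paper routes everything through the machinery of Theorem \ref{thm-equivalence}: it constructs the closed recurrent communicating class $S$ containing $x^\ast$, invokes part 4) of Lemma \ref{lem-markov} (ergodicity) to obtain a limiting distribution with positive mass at $x^\ast$, and reads off \texttt{SP} (indeed the stronger \texttt{SPD} conclusion) from that. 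You instead avoid the ergodic theorem and the block-triangular decomposition entirely: you only use parts 2) and 3) of Lemma \ref{lem-markov} to certify $R\neq\emptyset$, then the numerical-semigroup fact that an additively closed set of positive integers with gcd one is cofinite, and finally Chapman--Kolmogorov to propagate eventual positivity of $[\mathscr{P}(t)]_{\alpha\alpha}$ to $[\mathscr{P}(t)]_{\alpha i}$ for every $i$. What the paper's approach buys is the stronger convergence statement (\texttt{SPD} with $[{\bm \mu}]_\alpha>0$) as a byproduct, and consistency with the equivalence theorem it has already proved; what your approach buys is self-containment and economy --- eventual positivity is all that Definition \ref{def-SP} requires, and you isolate precisely where aperiodicity is consumed (the cofiniteness of $R$), whereas the paper's one-line "thus, this \texttt{PBN} is \texttt{SP} at $x^\ast$" silently leans on the full limit analysis of the Appendix. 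The one point worth tightening is the index $i=\alpha$ in your final step, where $k_\alpha$ should be taken from $R$ rather than from the in-tree, but this is cosmetic.
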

\begin{proof}
[Necessity] The proof of necessity is obvious. By Definition \ref{def-SP}, \texttt{STG} $G$ must contain an in-tree rooted at $x^\ast$. Besides, if we assume that $x^\ast$ is periodic, then one can conclude that
$P\{x(kd(x^\ast)-1)=x^\ast\mid x_0=x^\ast\}=0$. It contradicts with Definition \ref{def-SP}.

[Sufficiency] By the procedure in the proof of Theorem \ref{thm-equivalence}, if \texttt{STG} $G$ contains an in-tree rooted at $x^\ast$, then we can find the largest recurrent closed set $S$. Since $x^\ast$ is aperiodic,
set $S$ is an aperiodic and recurrent set. Thus, this \texttt{PBN} is \texttt{SP} at $x^\ast$. By Theorem \ref{thm-equivalence}, one can conclude that \texttt{PBN} (\ref{equ-pbn-assr}) is \texttt{SDP} w.r.t. ${\bm \mu}$ with $[{\bm \mu}]_{\alpha}>0$.
\end{proof}

\subsection{Stabilization in Probability}
To begin with, we first present a sufficient condition for the stability of \texttt{BNs} from the viewpoint of network structure, but the equilibrium point is not determined.
\begin{lemma}[See \cite{robert2012discrete}]\label{lemma-acylic}
Given \texttt{BN} $B({\bf G},{\bf F})$, it has a unique steady state if, its network structure ${\bf G}$ is acyclic.
\end{lemma}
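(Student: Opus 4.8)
The plan is to exploit the acyclicity of ${\bf G}$ to impose a topological layering on the nodes and then pin down the steady-state value of each node one layer at a time. First I would partition the vertex set into layers $L_0, L_1, \ldots, L_r$, where $L_0$ collects the nodes of in-degree zero (equivalently, those $k$ with ${\bf X}_k=\emptyset$) and $L_h$ collects the nodes all of whose functional variables lie in $L_0\cup\cdots\cup L_{h-1}$ with at least one in $L_{h-1}$. Acyclicity is exactly what guarantees that this layering is well defined: every vertex lands in a unique layer and $r\le n-1$, since a directed cycle would be the only obstruction to assigning each node a finite longest-path depth from the sources.

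The core of the argument is an induction along these layers establishing \emph{uniqueness} of the steady state. A steady state $x^\ast$ must satisfy $x^\ast_k={\bm f}_k([x^\ast_i]_{i\in{\bf X}_k})$ for every $k$. For $k\in L_0$ the function ${\bm f}_k$ has no arguments, so $x^\ast_k$ is forced to equal the constant value of ${\bm f}_k$. Assuming inductively that $x^\ast_i$ is uniquely determined for every $i$ in a layer below $h$, each functional variable of a node $k\in L_h$ already has a fixed value, whence $x^\ast_k={\bm f}_k([x^\ast_i]_{i\in{\bf X}_k})$ is uniquely determined as well. This shows there is at most one steady state; evaluating the ${\bm f}_k$ layer by layer in the same order simultaneously exhibits one, so existence comes for free from the same computation.

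To confirm that this fixed point is genuinely the steady state of the dynamics (i.e.\ globally attracting, not merely an equilibrium), I would rerun the induction on the time-indexed trajectory. From any initial condition, the nodes in $L_0$ are pinned to their constant values for $t\ge 1$; and if every node in a layer below $h$ is pinned to its steady value for $t\ge h$, then for $t\ge h$ each $k\in L_h$ sees only pinned inputs, so ${\bm x}_k(t)$ is pinned for $t\ge h+1$. Hence after at most $r+1\le n$ steps every trajectory coincides with $x^\ast$, independently of ${\bm x}(0)$, which is the strongest reading of ``unique steady state.''

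I do not anticipate a serious obstacle here; the only points requiring care are the precise reading of ``steady state'' and the role of self-dependence. A self-loop $(v_k,v_k)$ would place $k$ in its own functional-variable set and break the layering — precisely the configuration acyclicity forbids — and such self-dependence (or a longer cycle) is exactly what could otherwise produce multiple fixed points or sustained oscillation. I would therefore flag explicitly that acyclicity is invoked to rule out all directed cycles, self-loops included, which is what makes the layer-by-layer determination both well defined and forced.
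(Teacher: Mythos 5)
Your argument is correct. Note that the paper does not prove this lemma at all --- it is quoted from Robert's \emph{Discrete Iterations} --- so there is no in-paper proof to compare against; what you have written is essentially the standard proof of the cited result (acyclicity yields a topological layering, the fixed-point equations then determine $x^\ast_k$ uniquely layer by layer starting from the constant source nodes, and the same induction run on the trajectory shows every orbit reaches $x^\ast$ within $r+1\le n$ steps, i.e., the iteration map is eventually constant because the incidence structure is nilpotent). Your closing remark is also the right one to flag: the ``unique steady state'' in the paper is used in the strong sense of a globally attracting fixed point reached after $l$ iterations ($l$ the longest-path length), which is exactly the convergence half of your induction, and acyclicity must be read as excluding self-loops for the layering to be well defined.
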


Subsequently, given a state $x^\ast\in \Delta_{2^n}$, in order to realize the \texttt{SP} of \texttt{PBN} (\ref{equ-pbn}), we design the distributed pinning controller based on Theorem \ref{theorem-structuralcontrollability} by two steps.

\textbf{Step 1:} Globally Stabilizing A Mode. Assume that the network structure of every mode of \texttt{PBN} (\ref{equ-pbn-assr}) is cyclic. Otherwise, one directly goes to Step 2). We arbitrarily choose a mode $B(\breve{{\bf G}}^\iota,\breve{{\bf F}}^\iota)$, $\iota\in[1,s]$.

By resorting to Lemma \ref{lemma-acylic}, we would like to design the distributed pinning controller, which is imposed on the $\iota$-th mode $B(\breve{{\bf G}}^\iota,\breve{{\bf F}}^\iota)$, to make it stable. Observing Lemma \ref{lemma-acylic}, we can pick the pinning nodes for this mode into $\breve{\Gamma}^{\iota}$, in a similar manner as $\Gamma_1$. The pinning controller injected on the $\iota$-th mode can be designed as
\begin{equation}\label{equ-pinning-pbn-type1}
\left\{\begin{aligned}
{\bm x}_j(t+1)&={\bm u}_{\iota,j}(t) \oplus_{\iota,j} {\bm f}^\iota_j([{\bm x}_i(t)]_{i\in {\bf X}^\iota_j}),~j\in \breve{\Gamma}^\iota,\\
{\bm x}_j(t+1)&={\bm f}^\iota_j([{\bm x}_i(t)]_{i\in {\bf X}^\iota_j}),~j\not\in \breve{\Gamma}^\iota,
\end{aligned}\right.
\end{equation}
where ${\bm u}_{\iota,j}(t)={\bm g}^{\iota}_j([{\bm x}_i(t)]_{i\in {\bf X}^\iota_j})$ is the mode-based state feedback controller. Then, in order to design the logical operators $\oplus_{\iota,j}$ and feedback functions ${\bm g}^{\iota}_j$, we can solve the following equation to compute their corresponding structure matrices in a similar manner as those in Section \ref{sec-structuralcontrollability}:
\begin{equation}\label{equ-pbn-logicalequ}
L_{\oplus_{\iota,j}}L_{{\bm g}^{\iota}_j}(I_{2^{\mid \mathbf{X}^\iota_{j}\mid}}\otimes L_{{\bm f}^\iota_j})\Phi_{2^{\mid \mathbf{X}^{\iota}_{j}\mid }}=A^\iota_j (I_{2^{\mid \vec{{\bf N}}^{\text{in}}_j \mid}}\otimes {\bf 1}_{2^{\mid \mathbf{N}^{\text{in}}_k\backslash \vec{{\bf N}}^{\text{in}}_j \mid}}^\top) W^\top_{j,{\bf N}^{\text{in}}_j,\vec{{\bf N}}^{\text{in}}_j},j\in\breve{\Gamma}^{\iota},
\end{equation}
with logical matrices $A^\iota_k\in\mathscr{L}_{2\times 2^{\mid \vec{{\bf N}}^{\text{in}}_j \mid}}$. Since the network structure of the $\iota$-th mode, that is, \texttt{BN} (\ref{equ-pinning-pbn-type1}), is acyclic, by Lemma \ref{lemma-acylic}, one has a unique steady state $x^\iota_e$ by iterating $l^\iota$ times, where $l^\iota$ is equal to the length of the longest path in $\breve{{\bf G}}^\iota$.

\textbf{Step 2:} Forcing the Reachability From $x^\iota_e$ to $x^\ast$. In this step, we choose another mode $\vec{\iota}\neq\iota$. For mode $B(\breve{{\bf G}}^{\vec{\iota}},\breve{{\bf F}}^{\vec{\iota}})$, we can use the design approach of pinning controller, given in Section \ref{sec-pin-controllability}, to force the controllability. Thus, the last thing is to determine the inputs of its generators, the logical function ${\bm u}^{\vec{\iota}}_j(t)=\ddot{{\bm g}}^{\vec{\iota}}_j([{\bm x}_i(t)]_{i\in {\bf X}^{\vec{\iota}}_i})$.

Since the mode $B(\breve{{\bf G}}^{\vec{\iota}},\breve{{\bf F}}^{\vec{\iota}})$ is also structurally $\eta^{\vec{\iota}}$ fixed-time controllable, we can achieve the reachability from state ${\bm x}^\iota_e$ to state ${\bm x}^\ast$ within $\eta^{\vec{\iota}}$ time steps. Without loss of generality, we denote this process as ${\bm x}_{e}^{\iota} \stackrel{{\bm u}_1}{\rightharpoonup} {\bm x}_1 \stackrel{{\bm u}_2}{\rightharpoonup}\cdots \stackrel{{\bm u}_{\eta^{\vec{\iota}}}}{\rightharpoonup}{\bm x}^\ast$, and thus any feedback function $\ddot{{\bm g}}^{\vec{\iota}}_j$, $j\in[1,n]$, provided that this sequence is feasible.

\begin{theorem}
\texttt{BN} $B({\bf G},{\bf F})$ after \textbf{Step 1} and \textbf{Step 2} will be \texttt{SP} at $x^\ast$.
\end{theorem}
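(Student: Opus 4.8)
The plan is to verify Definition \ref{def-SP} directly: for every initial state $x_0 \in \Delta_{2^n}$ and every sufficiently large $t$, I would exhibit one realization of the switching sequence $\sigma$ that drives the controlled network to $x^\ast$ with strictly positive probability. The essential preliminary observation is that, after \textbf{Step 1} and \textbf{Step 2}, the only randomness left in the controlled \texttt{PBN} is the i.i.d. mode process $\sigma(t)$, since in both steps the injected inputs are \emph{state feedback} (the functions ${\bm g}^\iota_j$ and $\ddot{{\bm g}}^{\vec{\iota}}_j$); hence, once a mode is fixed, the one-step map is deterministic. Throughout I would assume that the two selected modes carry positive probability, i.e. $p^\iota>0$ and $p^{\vec{\iota}}>0$.

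First I would record the two structural facts supplied by the construction. By \textbf{Step 1} and Lemma \ref{lemma-acylic}, the controlled $\iota$-th mode is acyclic and therefore has a unique steady state $x_e^\iota$; moreover, iterating this mode for $l^\iota$ steps (the longest-path length in $\breve{{\bf G}}^\iota$) lands on $x_e^\iota$ from \emph{any} initial state, and since $x_e^\iota$ is a fixed point, every further application of mode $\iota$ keeps the state at $x_e^\iota$. By \textbf{Step 2} and Corollary \ref{cor-fixed-time}, the controlled $\vec{\iota}$-th mode is structurally $\eta^{\vec{\iota}}$ fixed-time controllable, so the feedback $\ddot{{\bm g}}^{\vec{\iota}}$ realizes a deterministic trajectory $x_e^\iota \rightharpoonup x_1 \rightharpoonup \cdots \rightharpoonup x^\ast$ of exactly $\eta^{\vec{\iota}}$ steps.

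The key step is to concatenate the two phases while exploiting the fixed-point property of $x_e^\iota$. Set $T := l^\iota + \eta^{\vec{\iota}}$. For any $t>T$ and any $x_0$, I would consider the event that $\sigma$ picks mode $\iota$ for the first $t-\eta^{\vec{\iota}}$ steps and mode $\vec{\iota}$ for the remaining $\eta^{\vec{\iota}}$ steps. Because $t-\eta^{\vec{\iota}} \geq l^\iota$, the first phase reaches $x_e^\iota$ and, by the fixed-point property, remains there through step $t-\eta^{\vec{\iota}}$ irrespective of how many surplus mode-$\iota$ steps are used; the second phase then steers $x_e^\iota$ to $x^\ast$. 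As $\sigma$ is i.i.d., this event has probability $(p^\iota)^{t-\eta^{\vec{\iota}}}(p^{\vec{\iota}})^{\eta^{\vec{\iota}}}>0$, so ${\bf P}\{x(t;x_0)=x^\ast\mid x(0)=x_0\}>0$. Since $t>T$ and $x_0$ are arbitrary, Definition \ref{def-SP} holds and the network is \texttt{SP} at $x^\ast$.

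I expect the main obstacle to be the bookkeeping behind the ``padding'' argument, namely making rigorous that the fixed-point property of $x_e^\iota$ yields $x^\ast$ at \emph{every} $t>T$ rather than only along an arithmetic progression; this is exactly what excludes periodicity and separates Definition \ref{def-SP} from plain reachability. Equivalently, one could route the argument through Theorem \ref{thm-corollary}: reachability of $x^\ast$ from every state furnishes the required in-tree in the \texttt{STG}, while the \emph{consecutive} return times $l^\iota+\eta^{\vec{\iota}}, l^\iota+\eta^{\vec{\iota}}+1,\ldots$ (again provided by the fixed point) have greatest common divisor one, so $x^\ast$ is aperiodic. A minor case to dispatch is the degenerate situation noted in \textbf{Step 1}, where some mode is already acyclic and the first-phase control is vacuous; the proof goes through verbatim with that naturally acyclic mode playing the role of $\iota$.
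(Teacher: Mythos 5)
Your proof is correct, but it takes a genuinely different route from the paper's. The paper does not verify Definition \ref{def-SP} directly; it assembles the two structural facts you also use (every state reaches the unique steady state $x_e^\iota$ of the acyclic controlled mode $\iota$, and $x_e^\iota$ reaches $x^\ast$ under the controlled mode $\vec{\iota}$), observes that this gives an in-tree rooted at $x^\ast$ in the \texttt{STG}, deduces aperiodicity of $x^\ast$ from the self-loop at $x_e^\iota$ together with $x_e^\iota \leftrightarrow x^\ast$, and then invokes Theorem \ref{thm-corollary}. Your primary argument instead constructs, for every $t>T=l^\iota+\eta^{\vec{\iota}}$, an explicit positive-probability realization of the i.i.d.\ mode sequence ($t-\eta^{\vec{\iota}}$ copies of $\iota$ followed by $\eta^{\vec{\iota}}$ copies of $\vec{\iota}$), using the fixed point $x_e^\iota$ as padding to absorb the surplus steps; this is more elementary and self-contained, avoids the Markov-chain lemmas entirely, and yields an explicit horizon $T$ and an explicit lower bound $(p^\iota)^{t-\eta^{\vec{\iota}}}(p^{\vec{\iota}})^{\eta^{\vec{\iota}}}$ on the probability. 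What the paper's route buys is brevity given the machinery already in place, plus the immediate upgrade of the conclusion to \texttt{SAPP}, \texttt{SSPP} and \texttt{SPD} via Theorem \ref{thm-equivalence}. Your closing remark that one could alternatively argue through Theorem \ref{thm-corollary}, with aperiodicity coming from consecutive return times, is essentially the paper's proof, so you have in effect supplied both arguments. The only hypothesis you use that the paper leaves implicit is $p^\iota>0$ and $p^{\vec{\iota}}>0$; flagging it explicitly is a point in your favour rather than a gap.
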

\begin{proof}
First of all, since there exists one mode $B(\breve{{\bf G}}^\iota,\breve{{\bf F}}^\iota)$ that has at least one unique steady state $x_e^\iota$, we have that $\delta_{2^n}^i \rightarrow x_e^\iota$, for any $\delta_{2^n}^i\in\Delta_{2^n}$. As for another mode $B(\breve{{\bf G}}^{\vec{\iota}},\breve{{\bf F}}^{\vec{\iota}})$, state $x^\ast$ is reachable from $x^\iota_{e}$. Thus, its \texttt{STG} contains an in-tree rooted at $x^\ast$ and it implies that $x^\ast$ must be recurrent. Again, because \texttt{STG} $\breve{{\bf G}}$ has a self loop $(x^\iota_e,x^\iota_e)$, state $x^\iota_e$ must be an aperiodic state. Due to $x^\iota_e \leftrightarrow x^\ast$, one has that $x^\ast$ is also an aperiodic state. According to Theorem \ref{thm-corollary}, one can conclude that this \texttt{PBN} is \texttt{SP} at $x^\ast$.
\end{proof}

\begin{remark}
Based on Theorem \ref{thm-equivalence}, the developed pinning control also suits the \texttt{SAPP}, \texttt{SSPP} as well as \texttt{SPD} w.r.t. ${\bm \mu}\in\nabla_{2^n}$. Therefore, it overcomes the drawback of control design in \cite{zhusy2019tac,huangc2020ins205,huangc2020tnnls}.
\end{remark}

\subsection{Discussions and Comparisons}
In Step 1), the process for searching all cycles can be implemented within time $\Theta(n^2)$. Again, the time to design the state feedback controller is bounded by $\Theta(n2^{3d^\ast_\iota})$. As for the determination process of equilibrium point $x^\iota_e$, we can search it with time $\Theta(l^\iota n 2^{d^\ast_\iota})$. Therefore, this step can be done within time $\Theta(n^2+n2^{3d^\ast_\iota}+l^\iota n 2^{d^\ast_\iota})$.

In Step 2), by the complexity analysis in Section \ref{sec-pin-controllability}, the complexity of controllability part is bounded by $\Theta(n2^{3d_{\vec{\iota}}^\ast}+2(n+m)^2)$, while the time to determine the inputs of generators is $\Theta(\eta^{\vec{\iota}}2^m2^{d^\ast_{\vec{\iota}}}+\eta^{\vec{\iota}}m2^{3d^\ast_{\vec{\iota}}})$. Thus, this step is limited by time $\Theta(n2^{3d_{\vec{\iota}}^\ast}+2(n+m)^2+\eta^{\vec{\iota}}2^m2^{d^\ast_{\vec{\iota}}}+\eta^{\vec{\iota}}m2^{3d^\ast_{\vec{\iota}}})$.

Besides, the pinning control nodes are designable in this paper rather than pre-assigned as in \cite{huangc2020ins205} and \cite{huangc2020tnnls}. Moreover, the designed pinning controller is still
effective even through it has reached the steady state $x^\ast$ with probability; but it cannot be well solved in the previous results in \cite{huangc2020ins205} and \cite{huangc2020tnnls}.

As a supplement, Theorem \ref{thm-equivalence} shows the equivalence between \texttt{SP} in \cite{huangc2020ins205}\cite{huangc2020tnnls} and \texttt{SDP} in \cite{zhusy2019tac}. We also show that if a \texttt{PBN} is \texttt{SP}, then the corresponding probability must approach to a positive constant, which is independent of initial states, with time tending to infinity.

\section{Simulations}\label{sec-simulations}
In this section, we would like to apply our theoretical results with the T-cell receptor kinetics with $37$ state nodes and $3$ control inputs \cite{klamt2006methodology} to illustrate the effectiveness. Its node dynamics can be described by equation (\ref{equation-exa-BN}), where three input nodes are $CD4$, $CD45$ and $TCRlig$, and the rest of nodes are all state nodes. Please refer to \cite{klamt2006methodology} for the detailed meaning of every abbreviation. Note that the out-degree of input ${\bm u}_1$ is more than one. Then, we add a virtual state variable as ${\bm x}_{38}(t+1)={\bm u}_1(t)$, and replace the node dynamics of ${\bm x}_{10}$ and ${\bm x}_{20}$ as ${\bm x}_{10}(\sharp)=({\bm x}_{20}(\ast) \wedge {\bm x}_{38}(\ast)) \vee ( {\bm x}_{35}(\ast) \wedge {\bm x}_{38}(\ast))$ and ${\bm x}_{20}(\sharp)=\overline{{\bm x}_{26}(\ast)}\wedge {\bm x}_{38}(\ast) \wedge {\bm u}_2(\ast)$. Correspondingly, one can draw its network structure ${\bf G}$ as Figure \ref{fig-exa-ns}.
\begin{figure*}[!ht]
\centering
\begin{equation}\label{equation-exa-BN}
{\scriptsize\begin{array}{lll}
\text{CD8}: {\bm u}_1, ~\text{CD45}: {\bm u}_2, ~\text{TCRlig}: {\bm u}_3, & \text{IKKbeta}: {\bm x}_{13}(\sharp)={\bm x}_{24}(\ast), & \text{PAGCsk}: {\bm x}_{26}(\sharp)={\bm x}_{10}(\ast) \vee \overline{{\bm x}_{35}(\ast)},\\
\text{AP1}: {\bm x}_1(\sharp)={\bm x}_9(\ast) \wedge {\bm x}_{18}(\ast), & \text{IP3}: {\bm x}_{14}(\sharp)={\bm x}_{25}(\ast), & \text{PLCg(bind)}: {\bm x}_{27}(\sharp)={\bm x}_{19}(\ast), \\
\text{Ca}: {\bm x}_2(\sharp)={\bm x}_{14}(\ast), & \text{Itk}: {\bm x}_{15}(\sharp)={\bm x}_{34}(\ast)\wedge {\bm x}_{37}(\ast), & \text{Raf}: {\bm x}_{28}(\sharp)={\bm x}_{29}(\ast), \\
\text{Calcin}: {\bm x}_3(\sharp)={\bm x}_2(\ast), & \text{IkB}: {\bm x}_{16}(\sharp)= \overline{{\bm x}_{13}(\ast)}, & \text{Ras}: {\bm x}_{29}(\sharp)={\bm x}_{12}(\ast) \vee {\bm x}_{30}(\ast), \\
\text{cCbl}: {\bm x}_4(\sharp)={\bm x}_{37}(\ast), & \text{JNK}: {\bm x}_{17}(\sharp)={\bm x}_{33}(\ast), & \text{RasGRP1}: {\bm x}_{30}(\sharp)={\bm x}_7(\ast) \wedge {\bm x}_{24}(\ast), \\
\text{CRE}: {\bm x}_5(\sharp)={\bm x}_6(\ast), & \text{Jun}: {\bm x}_{18}(\sharp)={\bm x}_{17}(\ast), & \text{Rlk}: {\bm x}_{31}(\sharp)={\bm x}_{20}(\ast), \\
\text{CREB}: {\bm x}_6(\sharp)={\bm x}_{32}(\ast), & \text{LAT}: {\bm x}_{19}(\sharp)={\bm x}_{37}(\ast), & \text{Rsk}: {\bm x}_{32}(\sharp) = {\bm x}_8(\ast), \\
\text{DAG}: {\bm x}_7(\sharp)={\bm x}_{25}(\ast), & \text{Lck}: {\bm x}_{20}(\sharp)=\overline{{\bm x}_{26}(\ast)}\wedge {\bm u}_1(\ast) \wedge {\bm u}_2(\ast), & \text{SEK}: {\bm x}_{33}(\sharp)={\bm x}_{24}(\ast),\\
\text{ERK}: {\bm x}_8(\sharp)={\bm x}_{21}(\ast), & \text{MEK}: {\bm x}_{21}(\sharp)={\bm x}_{28}(\ast),\text{NFAT}: {\bm x}_{22}(\sharp)={\bm x}_3(\ast), & \text{SLP76}: {\bm x}_{34}(\sharp)={\bm x}_{11}(\ast), \\
\text{Fos}: {\bm x}_9(\sharp)={\bm x}_8(\ast), & \text{NFkB}: {\bm x}_{23}(\sharp)=\overline{{\bm x}_{16}(\ast)}, ~\text{PKCth}: {\bm x}_{24}(\sharp)={\bm x}_7(\ast), & \text{TCRbind}: {\bm x}_{35}(\sharp)=\overline{{\bm x}_4(\ast)}\wedge {\bm x}_{35}(\ast),\\
\text{Fyn}: {\bm x}_{10}(\sharp) = ({\bm x}_{20}(\ast) \wedge {\bm u}_2(\ast)) \vee ({\bm x}_{35}(\ast) \wedge {\bm u}_2(\ast)), & \text{PLCg(act)}: {\bm x}_{25}(\sharp)=({\bm x}_{15}(\ast)\wedge {\bm x}_{27}(\ast) \wedge {\bm x}_{34}(\ast) & \text{TCRphos}: {\bm x}_{36}(\sharp)={\bm x}_{10}(\ast) \vee ({\bm x}_{20}(\ast) \wedge {\bm x}_{35}(\ast)), \\
\text{Gads}: {\bm x}_{11}(\sharp)={\bm x}_{19}(\ast),\text{Grb2Sos}: {\bm x}_{12}(\sharp)={\bm x}_{19}(\ast),& ~~~~\wedge {\bm x}_{37}(\ast))\vee ({\bm x}_{27}(\ast) \wedge {\bm x}_{31}(\ast) \wedge {\bm x}_{34}(\ast) \wedge {\bm x}_{37}(\ast)), & \text{ZAP70}: {\bm x}_{37}(\sharp)=\overline{{\bm x}_4(\ast)} \wedge {\bm x}_{20}(\ast) \wedge {\bm x}_{36}(\ast)
\end{array}}
\end{equation}
\hrulefill
\vspace*{2pt}
\end{figure*}

\begin{figure}[h!]
\centering
\begin{minipage}[c]{0.22\textwidth}
\subfigure[Network structure of \texttt{BCN} (\ref{equation-exa-BN}). ] {\label{fig-exa-ns} \includegraphics[height=2.2in]{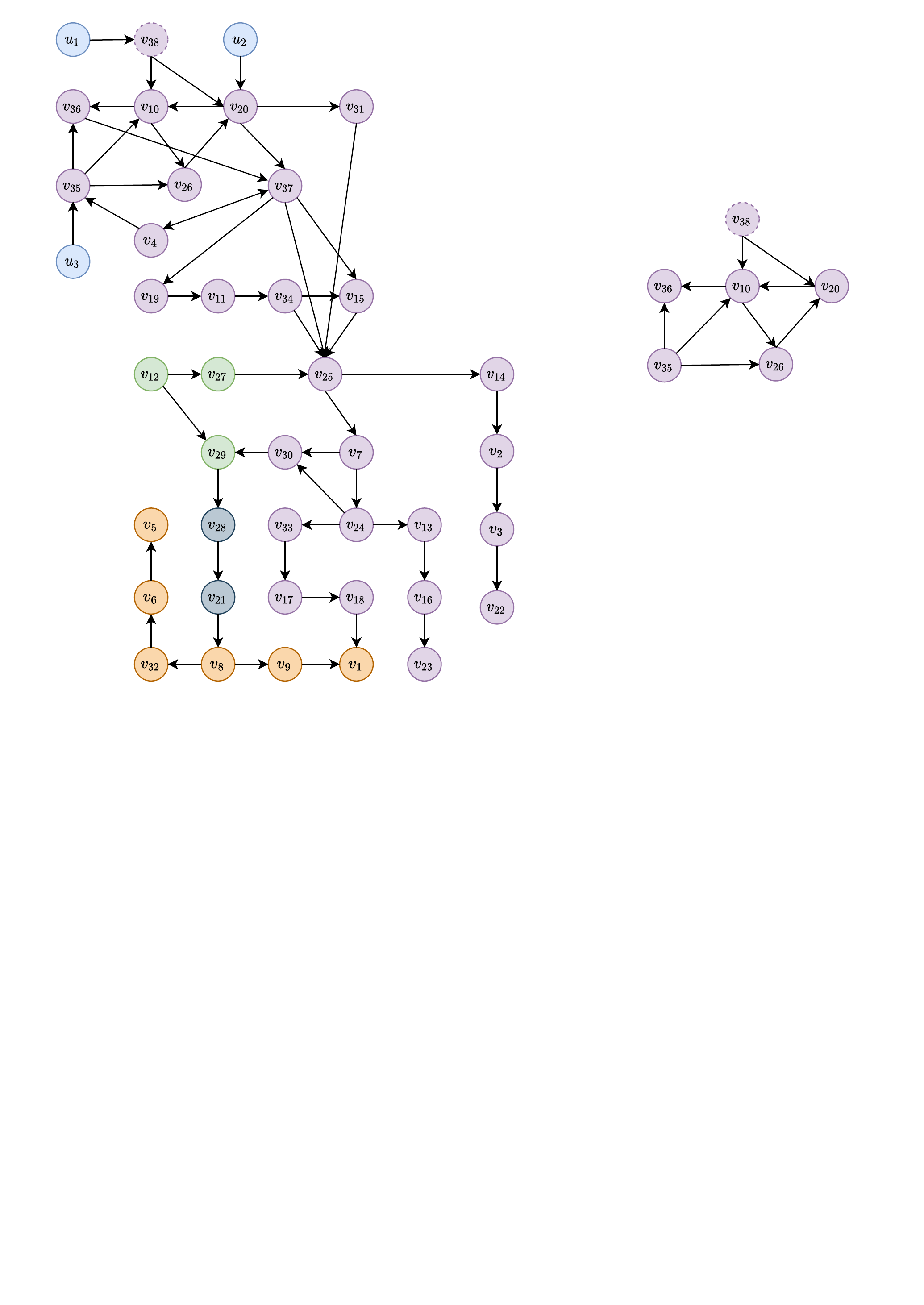} }
\end{minipage}
\hspace{0.02\textwidth}
\begin{minipage}[c]{0.22\textwidth}
\centering
\subfigure[The local network structure that need to be considered in the first aggregated subnetwork.] {\label{fig-exa-localbn} \includegraphics[height=1in]{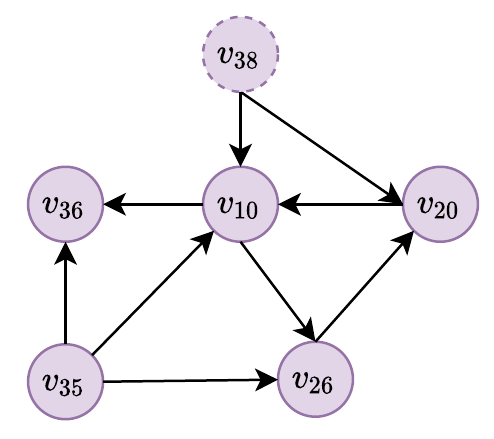} }
\end{minipage}
\end{figure}
\subsection{Minimum Control Node Problem}
Consider Problem \ref{problem-minimalnodecontrol} w.r.t. this \texttt{BCN}. We first construct the acyclic single-source-channel network aggregation in Definition \ref{def-aggregation}. Via the depth-first search algorithm, one can obtain all the strongly connected components, of which the unique non-trivial strongly connected component is only $\{v_4,v_{10},v_{20},v_{26},v_{35},v_{36},v_{37}\}$. Therefore, one can aggregate \texttt{BCN} (\ref{equation-exa-BN}) into four subnetworks as described in Figure \ref{fig-exa-ns}.

Consider subnetwork with the largest vertex set
$$\begin{aligned}{\bf N}_1=\{&u_1,u_2,u_3,x_{38},x_{36},x_{10},x_{20},x_{31},x_{35},x_{26},x_{37},x_4,x_{19},\\
&x_{11},x_{34},x_{15},x_{12},x_{27},x_{25},x_{14},x_{29},x_{30},x_{7},x_{2},x_{33},\\
&x_{24},x_{13},x_3,x_{22}\}.\end{aligned}$$
Then, we determine whether or not these are some vertices that have to be controlled, that is, set $W^1_1$ and $W^1_2$. According to 1) of Theorem \ref{thm-ctr-reduction}, we know that vertex $v_{12}$ must be controlled, i.e., $v_{12}\in W_1^1$. By 2) of Theorem \ref{thm-ctr-reduction}, one has that $\{v_2,v_3,v_{22},v_{11},v_{34}\}\subseteq W^1_2$ and $\{v_{35},v_{38}\}\subseteq W^1_2$. In terms of 3) of Theorem \ref{thm-ctr-reduction}, one can make $v_7\in W^1_1$, $v_{14}\in W^1_2$. With regard to 4) of Theorem \ref{thm-ctr-reduction}, one has that $v_{30}\in W^1_1$, $v_{24}\in W^1_2$, $v_{33}\in W^1_1$, $v_{13}\in W^1_2$, $v_{29}\in W^1_1$, $v_{27}\in W^1_2$, $v_{37}\in W_{1}^1$, $v_{31}\in W^1_2$, $v_{15}\in W^1_1$, $v_{25}\in W^1_1$, $v_{4}\in W^1_1$, $v_{19}\in W^1_2$. In the following, we only determine the control for the nodes $v_{36},v_{10},v_{20},v_{26}$. To this end, we only consider the logical network structure in Figure \ref{fig-exa-localbn}, and order its vertices as $\{v^1_1,v^1_2,v^1_3,v^1_4,v^1_5,v^1_6\}:=\{v_{10},v_{20},v_{26},v_{35},v_{36},v_{38}\}$. Consequently, we establish a $6 \times 6$-dimensional adjacency matrix as
\begin{equation}
A^1=(a^1_{ij})_{6 \times 6}=\left(\begin{array}{cccccc}
0&0&1&0&1&0\\
1&0&0&0&0&0\\
0&1&0&0&0&0\\
1&0&1&0&1&0\\
0&0&0&0&0&0\\
1&1&0&0&0&0
\end{array}
\right).
\end{equation}

To proceed, we define six binary variables $\varsigma^1_1,\varsigma^1_2,\varsigma^1_3,\varsigma^1_4,\varsigma^1_5,\varsigma^1_6$ to characterize the control for the above six vertices. Besides, since there is the unique cycle $\{v_{10},v_{26},v_{20}\}$, we have $\kappa=1$, $\gamma^1_{1,1}=\gamma^2_{1,1}=\gamma^3_{1,1}=1$ and $\gamma^4_{1,1}=\gamma^5_{1,1}=\gamma^6_{1,1}=0$ in (\ref{equ-optimal}). Given these parameters to solve (\ref{equ-optimal}), one has that
\begin{equation}
\begin{aligned}
J_1\ltimes\bar{M}^1=[&0,0,1,1,0,0,1,1,1,1,2,2,1,1,2,2,\\
&1,1,1,1,1,1,1,1,1,1,2,2,1,1,2,2,\\
&1,1,2,2,1,1,2,2,2,2,3,3,2,2,3,3,\\
&2,2,2,2,2,2,2,2,2,2,2,2,2,2,2,2]
\end{aligned}
\end{equation}
and
\begin{equation}
\begin{aligned}
J_1\ltimes\tilde{M}^1=[&1,1,1,1,1,1,1,1,1,1,1,1,1,1,1,1,\\
&1,1,1,1,1,1,1,1,1,1,1,1,1,1,1,1,\\
&1,1,1,1,1,1,1,1,1,1,1,1,1,1,1,1,\\
&1,1,1,1,1,1,1,1,0,0,0,0,0,0,0,0].
\end{aligned}
\end{equation}

Thus, by Theorem \ref{thm-solveoptimal}, we can obtain the feasible set $\zeta^1=\{\delta_{64}^i\mid i\in[1,10]\cup\{13,14,17,26,29,30,33,34,37,38\}\}$. We can check that $\delta_{64}^{30}$ corresponds to $[1,0,1,0,0,0]$ with the minimal number of $\Sigma_{i=1}^{6} \varsigma^1_i$. Totally, we can conclude that the minimum controlled node set for the first aggregation is $\Lambda_1^\ast=\{10,26,37,4,15,12,25,7,29,33\}$. By the same procedure, the minimum controlled node sets for the other two network aggregations are $\Lambda_2^\ast=\{32\}$ and $\Lambda_3^\ast=\emptyset$. Thus, the solution to Problem \ref{problem-minimalnodecontrol} for \texttt{BCN} (\ref{equation-exa-BN}) is $\Lambda^\ast=\{10,26,37,4,15,12,25,7,29,32,33\}$.

\subsection{Pinning Control Design for Controllability}
In this subsection, we would like to design a novel distributed pinning controller with lower time complexity so as to force the controllability of \texttt{BCN} (\ref{equation-exa-BN}).

Firstly, we pick the pinning node set $\Gamma$ via three steps. By the depth-first search algorithm, one can find $4$ cycles in total as $\{v_{37}, v_{4}\}$, $\{v_{37},v_4,v_{35},v_{26},v_{20}\}$, $\{v_{35},v_{10},v_{36},v_{37},v_4\}$ and $\{v_{10},v_{20},v_{26}\}$. Thus, in order to remove these cycles, we can select the first type of pinning nodes $\Gamma_1=\{v_{10},v_{37}\}$ and attempt to delete the directed edges $(v_4,v_{37})$, $(v_{20},v_{37})$, $(v_{36},v_{37})$ and $(v_{20},v_{10})$ to obtain the acyclic graph $\vec{{\bf G}}$.

As for the above acyclic graph $\vec{{\bf G}}$, the nodes with out-degree more than $1$ can be found as $v_{38},v_{35},v_{10},v_{37},v_{34},v_{7},v_8,v_{12},v_{24},v_{25}$. Correspondingly, their out-neighbor sets are respectively $\{v_{10},v_{20}\}$, $\{v_{10},v_{36}\}$, $\{v_{26},v_{36}\}$, $\{v_4,v_{19},v_{15},v_{25}\}$, $\{v_{15},v_{25}\}$, $\{v_{30},v_{37}\}$, $\{v_9,v_{32}\}$, $\{v_{27},v_{29}\}$, $\{v_{30},v_{33},v_{13}\}$, and $\{v_7,v_{14}\}$. Thus, one can define the function $\odot$ in a manner as (\ref{equ-odot}) as $\odot(v_{10})=\odot(v_{26})=\odot(v_{15})=\odot(v_{25})=\odot(v_{30})=2$, $\odot(v_{20})=\odot(v_{36})=\odot(v_4)=\odot(v_{19})=\odot(v_{37})=\odot(v_{9})=\odot(v_{32})=\odot(v_{27})=\odot(v_{29})=\odot(v_{33})=\odot(v_{13})=\odot(v_{7})=\odot(v_{14})=1$. Therefore, we can pick the second type of pinning nodes as $\Gamma_2=\{v_{10},v_{26},v_{15},v_{25},v_{30},v_{9},v_{27},v_{13},v_7\}$. Again, by deleting edges $(v_{38},v_{10})$, $(v_{35},v_{10})$, $(v_{35},v_{26})$, $(v_{10},v_{26})$, $(v_{37},v_{15})$, $(v_{37},v_{25})$, $(v_{37},v_4)$, $(v_{34},v_{15})$, $(v_7,v_{30})$, $(v_8,v_9)$, $(v_{12},v_{27})$, $(v_{24},v_{30})$, $(v_{24},v_{13})$ and $(v_{25},v_7)$. Then, the digraph $\ddot{{\bf G}}$ can be further obtained, where $\ddot{{\bf N}}^{\text{in}}_{10}=\ddot{{\bf N}}^{\text{in}}_{26}=\ddot{{\bf N}}^{\text{in}}_{15}=\ddot{{\bf N}}^{\text{in}}_{4}=\ddot{{\bf N}}^{\text{in}}_{30}=\ddot{{\bf N}}^{\text{in}}_{9}=\ddot{{\bf N}}^{\text{in}}_{27}=\ddot{{\bf N}}^{\text{in}}_{13}=\ddot{{\bf N}}^{\text{in}}_{7}=\ddot{{\bf N}}^{\text{in}}_{37}=\emptyset$, and $\ddot{{\bf N}}^{\text{in}}_{25}=\{v_{15},v_{27},v_{31},v_{34}\}$.

Subsequently, we collect the state nodes with out-degree zero in the digraph $\ddot{{\bf G}}$ as set $\Gamma_3=\{v_{10},v_{26},v_{15},v_{4},v_{30},v_{9},v_{27},v_{13},v_{7},v_{37}\}$. Consequently, one can obtain the pinning node set as $\Gamma=\Gamma_1\cup\Gamma_2\cup\Gamma_3=\{v_{10}, v_{15}, v_{25}, v_{37}, v_{30}, v_9, v_{27}, v_{13}, v_{7}, v_{26},v_{4}\}$, which is about $29.73\%$ of all state nodes.

Afterwards, we would like to design the state feedback control ${\bm g}_k$ and logical operator $\tilde{\oplus}_k$ for every node in $\Gamma_1\cup\Gamma_2$ to turn network structure ${\bf G}$ into $\ddot{{\bf G}}$. The design procedure for node $v_{25}$ is precisely introduced and those for other nodes in $\Gamma_1\cup\Gamma_2$ can be similarly obtained. By \texttt{STP} of matrices, one can establish the structure matrix of ${\bm f}_{25}$ as
$$\begin{aligned}L_{{\bm f}_{25}}=\delta_2[&1,2,2,2,1,2,1,2,2,2,2,2,2,2,2,2,\\
&1,2,2,2,2,2,2,2,2,2,2,2,2,2,2,2].\end{aligned}$$
To compute the logical matrix $F_{25}$ in equation (\ref{equation-equa}), let parameter $A_{25}=\delta_2[2,1,2,2,2,2,2,2,2,2,2,2,2,2,2,2]$, then there holds that
$$\begin{aligned}F_{25}=&A_{25}(I_{2^4}\otimes{\bf 1}_2^\top)W^\top_{[25,{\bf N}^{\text{in}}_{25},\ddot{{\bf N}}^{\text{in}}_{25}]}\\
=&A_{25}(I_{2^4}\otimes{\bf 1}_2^\top) (W^3_{[2,2^4]}W_{[2,2^3]})^\top\\
=&\delta_2[2,2,2,2,1,2,1,2,2,2,2,2,2,2,2,2,\\
&~~~~~~2,2,2,2,2,2,2,2,2,2,2,2,2,2,2,2].
\end{aligned}$$
In order to get ${\bm g}_k$ and $\tilde{\oplus}_k$, we solve the following equation
$$L_{\tilde{\oplus}_{25}}L_{{\bm g}_{25}}\left(I_{2^5}\otimes L_{{\bm f}_{25}}\right)\Phi_{2^{5}}=F_{25}.$$
Denote $L_{\tilde{\oplus}_{25}}=\left( \begin{array}{cccc}\alpha_1 &\alpha_2 &\alpha_3 &\alpha_4\\ 1-\alpha_1 &1-\alpha_2 &1-\alpha_3 &1-\alpha_4\end{array}\right)$ and $L_{{\bm g}_{25}}=\left( \begin{array}{cccc}\beta_1 &\beta_2 &\cdots &\beta_{25}\\ 1-\beta_1 &1-\beta_2 &\cdots &1-\beta_{25}\end{array}\right)$ and plug them into the above equation. We can calculate that one feasible solution is $\alpha_1=1$, $\alpha_2=\alpha_3=\alpha_4=0$, and $\beta_5=\beta_7=1$, $\beta_i=0$, $i\in[1,4]\cup\{6\}\cup[8,32]$. Correspondingly, we get $\tilde{\oplus}_{25}=\wedge$ and ${\bm g}_{25}(t)={\bm x}_{15}(t)\wedge{\bm x}_{27}(t)\wedge\overline{{\bm x}_{31}(t)}\wedge{\bm x}_{34}(t)$. In a similar manner, the logical operators and feedback functions for the other nodes in $\Gamma_1\cup\Gamma_2$ can be designed as
\begin{equation*}
\left\{\begin{aligned}
&\tilde{\oplus}_{4}=\vee, ~{\bm g}_4(\ast)=\overline{{\bm x}_{37}(\ast)}, \\
&\tilde{\oplus}_{7}=\vee, ~{\bm g}_{7}(\ast)=\overline{{\bm x}_{25}(\ast)},\\
&\tilde{\oplus}_{9}=\vee, ~{\bm g}_{9}(\ast)={\bm x}_8(\ast),~\tilde{\oplus}_{30}=\vee, \\
&{\bm g}_{30}(\ast)=\overline{{\bm x}_7(\ast) \wedge {\bm x}_{24}(\ast)},\\
&\tilde{\oplus}_{10}=\vee, ~{\bm g}_{10}(\ast)=\overline{({\bm x}_{20}(\ast) \wedge {\bm x}_{38}(\ast)) \vee ({\bm x}_{35}(\ast) \wedge {\bm x}_{38}(\ast))},\\
&\tilde{\oplus}_{13}=\vee, ~{\bm g}_{13}(\ast)=\overline{{\bm x}_{24}(\ast)},\\
&\tilde{\oplus}_{15}=\vee, ~{\bm g}_{15}(\ast)=\overline{{\bm x}_{34}(\ast)\wedge {\bm x}_{37}(\ast)},\\
&\tilde{\oplus}_{26}=\vee, ~{\bm g}_{26}(\ast)=\overline{{\bm x}_{10}(\ast) \vee \overline{{\bm x}_{35}(\ast)}},\\
&\tilde{\oplus}_{27}=\vee, ~{\bm g}_{27}(\ast)=\overline{{\bm x}_{19}(\ast)},\\
&\tilde{\oplus}_{37}=\vee, ~{\bm g}_{37}(\ast)=\overline{\overline{{\bm x}_4(\ast)} \wedge {\bm x}_{20}(\ast) \wedge {\bm x}_{36}(\ast)}.
\end{aligned}
\right.
\end{equation*}

Finally, consider the pinning nodes in the set $\Gamma_3$. One can directly inject the open-loop inputs $\tilde{{\bm u}}_k$ by logical operator $\wedge$ as in (\ref{equation-pinning-bn}).

By contrast, if utilizing the traditional \texttt{ASSR} approach, then we need to handle a $2^{37}\times 2^{40}$-dimensional network transition matrix in (\ref{equ-assr-bcn}). Under our framework, the dimension of the considered matrix is only $2\times 2^5\ll 2^{37}\times 2^{40}$.

\subsection{Stabilization in Probability}
As is well known, gene mutation is a usual phenomenon in gene regulatory networks. We assume that the possible mutation positions are $v_4$ and $v_{37}$ with possibility $0.5$ and $0.4$, respectively. Once the mutation happens, their node dynamics respectively turns to ${\bm x}_4(\sharp)={\bm x}_{13}(\ast)$ and ${\bm x}_{37}(\sharp)=0$. Thus, there are four distinct modes corresponding to the cases that genes $v_4$ and $v_{37}$ do not mutate, both $v_4$ mutates and $v_{37}$ does not mutate, $v_4$ does not mutate and $v_{37}$ mutates, $v_4$ and $v_{37}$ mutate. Afterwards, we design the pinning controller to stabilize this \texttt{PBN} at ${\bm x}^\ast:=(\delta_{37}^{13}+\delta_{37}^{15}+\delta_{37}^{16})^\top$ in probability.

Step 1: Globally stabilizing the second mode with ${\bm x}_4(\sharp)={\bm x}_{13}(\ast)$. Then, the unique cycle in its network structure is $\{v_{10},v_{20},v_{26}\}$. Thus, for this mode, one can choose the pinning node set as $\Gamma^1=\{v_{26}\}$. Again, the logical operator $\oplus_{1,26}$ and ${\bm g}^1_{26}$ in (\ref{equ-pinning-pbn-type1}) aim to remove the functional variable ${\bm x}_{10}$ from ${\bm f}_{26}$. By solving equation (\ref{equ-pbn-logicalequ}) as that in the above subsection, one can obtain that $L_{\oplus_{1,26}}=\left( \begin{array}{cccc} 1 &0 &0 &0\\ 0 &1 &1 &1 \end{array}\right)$ and $L_{{\bm g}^1_{26}}=\left( \begin{array}{cccc} 1 &1 &0 &0\\ 0 &0 &1 &1 \end{array}\right)$, which respectively correspond to $\oplus_{1,26}=\wedge$ and ${\bm g}^1_{26}=\overline{{\bm x}_{35}(\ast)}$. Let ${\bm u}_1={\bm u}_2={\bm u}_3=1$ in this mode. Then, one can compute its \texttt{STG} and the unique attractor ${\bm x}^2_e=(\delta_{37}^{16}+\delta_{37}^{26})^\top$ by R as Fig. \ref{fig-exa-ns}.

\begin{figure}[h!]
\centering
\begin{minipage}[c]{0.22\textwidth}
\subfigure[The original \texttt{STG} of mode $2$. ] {\label{fig-exa-ns} \includegraphics[height=1.6in]{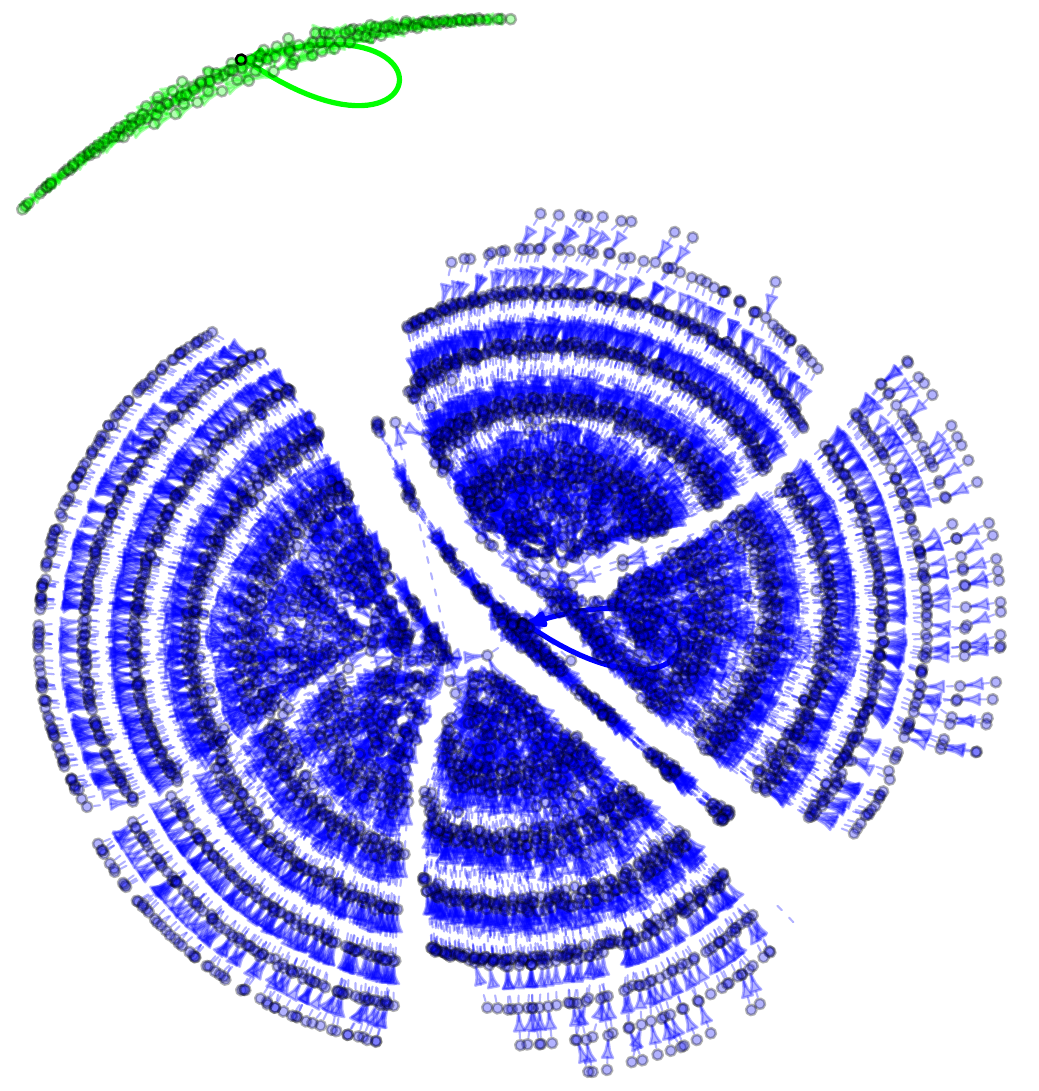} }
\end{minipage}
\hspace{0.01\textwidth}
\begin{minipage}[c]{0.22\textwidth}
\centering
\subfigure[The \texttt{STG} of mode $2$ after Step 1).] {\label{fig-exa-localbn} \includegraphics[height=1.6in]{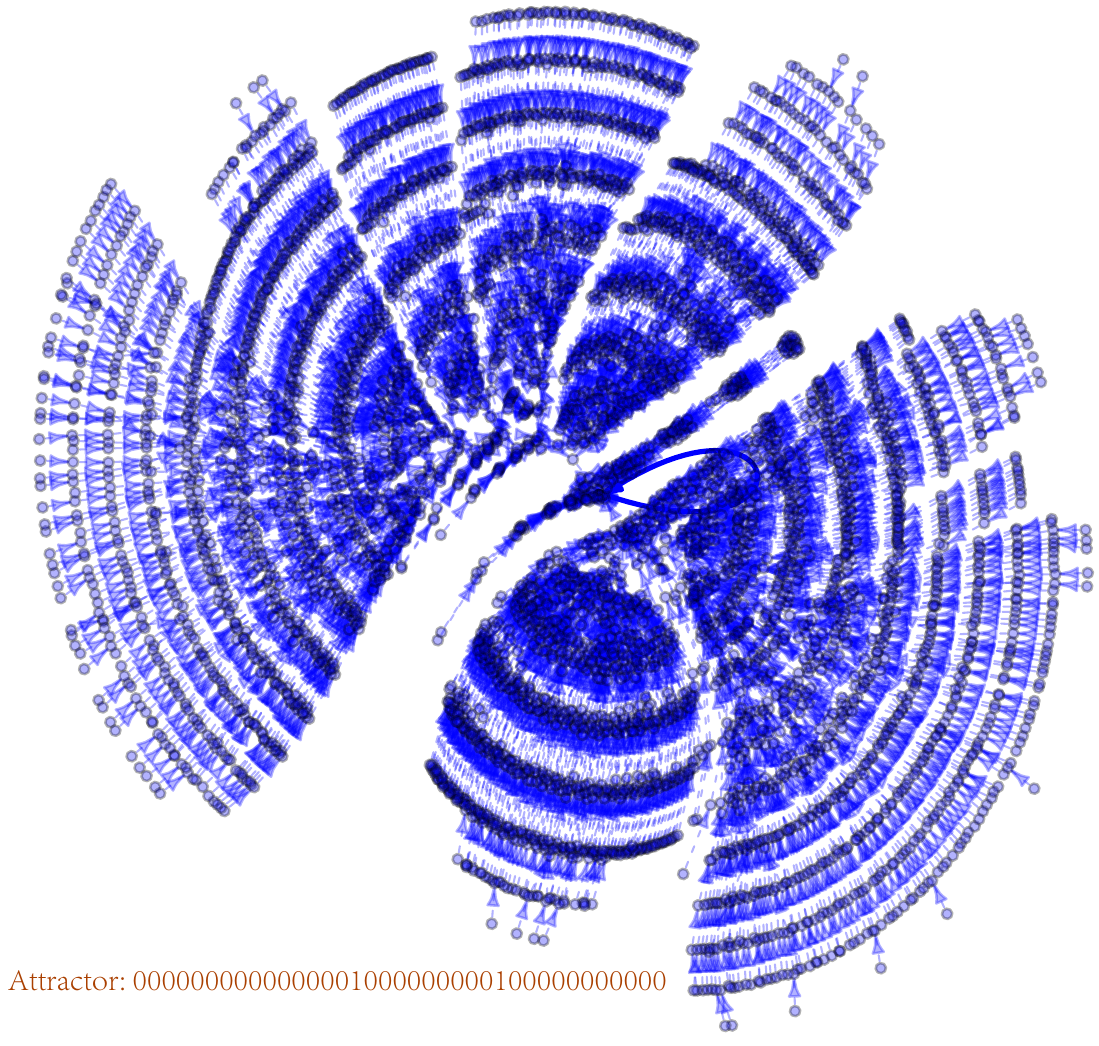} }
\end{minipage}
\end{figure}

Step 2: Forcing the reachability from ${\bm x}^2_e$ to ${\bm x}^\ast$. Consider the first mode, for which the pinning control designed in the above subsection for controllability is still available. Denote by $\tilde{{\bm u}}=({\bm u}_1,{\bm u}_2,{\bm u}_{3},\tilde{{\bm u}}_4,\tilde{{\bm u}}_7,\tilde{{\bm u}}_9,\tilde{{\bm u}}_{10},\tilde{{\bm u}}_{13},\tilde{{\bm u}}_{15},\tilde{{\bm u}}_{26},\tilde{{\bm u}}_{27},\tilde{{\bm u}}_{30},\tilde{{\bm u}}_{37})$. One can calculate the trajectory from ${\bm x}^2_e$ to ${\bm x}^\ast$ as ${\bm x}^2_e \stackrel{ (\Sigma_{i=1}^{4}\delta_{13}^{i})^\top }{\rightharpoonup} (\delta_{37}^4+\delta_{37}^{16})^\top \stackrel{ (\delta_{13}^{8}+\delta_{13}^{9})^\top }{\rightharpoonup} {\bm x}^\ast$. Thus, one of feasible feedback controllers for the input ${\bm u}$ can be given as
\begin{equation*}
\begin{aligned}
{\bm u}_1(\ast)&={\bm x}_{26}(\ast), ~{\bm u}_{2}(\ast)={\bm x}_{26}(\ast), ~{\bm u}_{3}(\ast)={\bm x}_{26}(\ast), \\
\tilde{{\bm u}}_{4}(\ast)&={\bm x}_{26}(\ast), ~\tilde{{\bm u}}_{13}(\ast)={\bm x}_4(\ast),~\tilde{{\bm u}}_{15}(\ast)={\bm x}_4(\ast),\\
\tilde{{\bm u}}_{7}(\ast)&=\tilde{{\bm u}}_{9}(\ast)=\tilde{{\bm u}}_{10}(\ast)=\tilde{{\bm u}}_{26}(\ast)\\
&=\tilde{{\bm u}}_{27}(\ast)=\tilde{{\bm u}}_{30}(\ast)=\tilde{{\bm u}}_{37}(\ast)=0.
\end{aligned}
\end{equation*}

\section{Conclusion}\label{sec-conclusion}
To overcome Problems ({\bf P1})-({\bf P6}) in the existing results, this paper has provided a novel and general control framework for \texttt{BNs}. Without utilizing the node dynamics, the structural controllability of \texttt{BCNs} has been formalized for the first time. Moreover, the minimum node control problem for \texttt{BNs} w.r.t. the structural controllability has been proved to be NP-hard. Furthermore, an efficient network aggregation has been proposed based on the structural controllability criterion; it also answered an open problem in \cite{zhaoy2015tnnls}. Based on this aggregation approach, all feasible solutions to the minimum node control problem has been found. This structural controllability condition has also been applied to the cases where the node dynamics is identifiable. For one thing, a feasible pinning strategy has been given to force the controllability of an arbitrary \texttt{BN}, where the time complexity is dramatically reduced to $\Theta(n2^{3d^\ast}+2(n+m)^2)$ and the pinning nodes can be easily selected. For the other thing, an interesting theorem has been presented to show the equivalence among several types of stability in probability for \texttt{PBNs}. By utilizing this condition, we have provided an efficient procedure to design the feasible control strategy.


\section*{Appendix\\Proof of Theorem \ref{thm-equivalence}}
\begin{proof}[Proof of Theorem \ref{thm-equivalence}]
The verification of 4) $\Rightarrow$ 3) $\Rightarrow$ 2) $\Rightarrow$ 1) can be followed by their definitions directly. Thus, if statement 1) can be proved to imply statement 4), then the proof of Theorem \ref{thm-equivalence} can be completed.

According to Definition \ref{def-SP}, for any $x_0\in \Delta_{2^n}$, we can find an integer $T_{x_0}\leq 2^n$ such that ${\bf P}\{x(t;x_0)=x^\ast\mid x(0)=x_0\}>0$, for any $t\geq T_{x_0}$. Then, we respectively discuss this problem by two cases.

Case I): State $x^\ast$ is a common fixed point of each mode, that is, ${[\mathscr{P}]}_{\alpha\alpha}=1$. For this case, it is equivalent to Theorem 1 in \cite{guo2018tacPBNstability}. Thus, one can imply that $\lim\limits_{t\rightarrow+\infty}{\bf P}\{x(t)=x^\ast\mid x(0)=x_0\}=1$, which indicates that Definition \ref{def-SSPP} is satisfied.

Case II): State $x^\ast$ is not a common fixed point of all modes, that is, $[\mathscr{P}]_{\alpha\alpha}<1$. In this case, there must exist another state $\delta_{2^n}^{g_1}\in\Delta_{2^n}$ satisfying $[\mathscr{P}]_{g_1\alpha}>0$. Besides, for state $\delta_{2^n}^{g_1}$, we can find an integer $\kappa_1$ with $(\delta_{2^n}^{\alpha})^\top \sum_{\tau=1}^{\kappa_1-1}[\mathscr{P}(\tau)] \delta_{2^n}^{g_1}=0$ and $(\delta_{2^n}^{\alpha})^\top \sum_{\tau=1}^{\kappa_1}[\mathscr{P}(\tau)] \delta_{2^n}^{g_1}>0$. Then, we can print the path $\delta_{2^n}^{g'_0}:=\delta_{2^n}^{g_1} \rightharpoonup \delta_{2^n}^{g'_2} \rightharpoonup \cdots \rightharpoonup \delta_{2^n}^{g'_{\kappa_1}}:=\delta_{2^n}^{\alpha}$. Let $S_1=\{ \delta_{2^n}^{g_0'},\delta_{2^n}^{g'_1},\cdots,\delta_{2^n}^{g'_{\kappa_1}}\}$. Secondly, if it holds that $J_{\Delta_{2^n} \backslash S_1} \times \mathscr{P} \times J_{S_1}>0$, then we can find states $\delta_{2^n}^{\bar{g}_2} \in S_1$ and $\delta_{2^n}^{g_2}\in\Delta_{2^n} \backslash S_1$ with $[\mathscr{P}]_{g_2\bar{g}_2}>0$. Since $\delta_{2^n}^{g_2}\rightarrow x^\ast$, there exists an integer $\kappa_2\in\mathbb{N}$ such that $J_{S_1} \ltimes \sum_{\tau=1}^{\kappa_2-1}[\mathscr{P}(\tau)] \delta_{2^n}^{g_2}=0$ and $J_{S_1} \ltimes \sum_{\tau=1}^{\kappa_2}[\mathscr{P}(\tau)] \delta_{2^n}^{g_2}>0$. Similarly, we can define a path $\delta_{2^n}^{\tilde{g}_0}:=\delta_{2^n}^{g_2} \rightharpoonup \delta_{2^n}^{\tilde{g}_1} \rightharpoonup \cdots \rightharpoonup \delta_{2^n}^{\tilde{g}_{\kappa_2}}\in S_1$, whose vertices are collected into set $S_2=\{\delta_{2^n}^{\tilde{g}_0},\delta_{2^n}^{\tilde{g}_1},\cdots,\delta_{2^n}^{\tilde{g}_{\kappa_2}} \}$. This procedure would be implemented until that set $S_k$ cannot reach the states in $\Delta_{2^n} \backslash S$, where $S=\bigcup_{i=1}^{k} S_i$.

For this set $S$, we prove that any state pair $\delta_{2^n}^\mu,\delta_{2^n}^\nu\in S$ is mutually reachable, that is, $\delta_{2^n}^\mu \leftrightarrow \delta_{2^n}^\nu$. For the construction of $S$, it holds that $\delta_{2^n}^\mu\rightarrow x^\ast$, $\delta_{2^n}^\nu\rightarrow x^\ast$, $x^\ast\rightarrow\delta_{2^n}^\mu$, and $x^\ast\rightarrow\delta_{2^n}^\nu$. Thus, one can conclude that $\delta_{2^n}^\mu \leftrightarrow \delta_{2^n}^\nu$, for any state pair $\delta_{2^n}^\mu,\delta_{2^n}^\nu\in S$. By conclusion 1) of Lemma \ref{lem-markov}, the period and recurrence of all states in set $S$ are the same with each other.

In the following, we prove that all the states in set $S$ are recurrent and aperiodic. By conclusion 2) of Lemma \ref{lem-markov}, there must exist a recurrent state, denoted by $\tilde{x}$. Since $\tilde{x}\rightarrow x^\ast$, by conclusion 3) of Lemma \ref{lem-markov}, one can imply that state $x^\ast$ is recurrent. Hence, all other states in set $S$ are also recurrent in set $S$. Subsequently, we show that the states in set $S$ are all aperiodic. Supposing that state $x^\ast$ is a periodic state, it means that the maximum common divisor of the integer set $\{k\mid [\mathscr{P}(k)]_{\alpha\alpha}>0\}$ satisfies $d(\alpha)>1$. Thereby, for any large number $M>0$, one can choose an arbitrary integer $\hbar_{\alpha}>\frac{M}{d(\alpha)}$ such that $[\mathscr{P}(\hbar_{\alpha}d(\alpha))]_{\alpha\alpha}=0$; it contradicts with Definition \ref{def-SP}. Therefore, one can conclude that all states in set $S$ are both recurrent and aperiodic.

Afterwards, we prove that the set $S$ contains all the recurrent and aperiodic states of Markov chain $M(\Delta_{2^n},\mathbb{N},\mathscr{P})$. If there is a recurrent state $\delta_{2^n}^\beta\in\Delta_{2^n} \backslash S$ with $\delta_{2^n}^\beta \rightarrow S$, then it implies that $P_{\gamma\rightarrow\beta}=1$ for any $\delta_{2^n}^\gamma\in S$ by conclusion 3) of Lemma \ref{lem-markov}. That is, we have $\delta_{2^n}^\gamma \rightarrow \delta_{2^n}^\beta$. It contradicts with the fact that the states in set $S$ cannot reach those in $\Delta_{2^n}\backslash S$ anymore.

Without loss of generality, we assume that set $S=\{\delta_{2^n}^1,\delta_{2^n}^2,\cdots,\delta_{2^n}^s\}$. Because set $S$ cannot reach any state in $\Delta_{2^n}\backslash S$, one can split the matrix $\mathscr{P}$ as $\mathscr{P}=\left( \begin{array}{cc} \mathscr{P}_1 & \mathscr{P}_2 \\ {\bf 0}_{(2^n-s) \times s} & \mathscr{P}_3 \end{array} \right)$. It means that matrix $\mathscr{P}_1$ is well defined as a column-stochastic one.

According to conclusion 4) of Lemma \ref{lem-markov}, Markov chain $M(\Delta_{2^n-s},\mathbb{N},\mathscr{P}_1)$ is ergodic, thus it has a unique steady distribution $\vec{\omega}$ satisfying $\vec{\omega}=\mathscr{P}_1\vec{\omega}$. Then, by extending the vector $\vec{\omega}$ to $\omega=[\vec{\omega}^\top,{\bf 0}_{2^n-s}^\top]^\top$, we prove that the Markov chain $M(\Delta_{2^n-s},\mathbb{N},\mathscr{P}_1)$ has the unique limiting distribution.

By conclusion 4) of Lemma \ref{lem-markov}, it holds that $\lim\limits_{t\rightarrow+\infty}\mathscr{P}_3(t)={\bf 0}_{(2^n-s)\times (2^n-s)}$. Let $\delta=\kappa+\eta$, it holds that
\begin{equation}
\begin{aligned}
\mathscr{P}(\delta)&=\mathscr{P}(\kappa)\mathscr{P}(\eta)\\
&=\left( \begin{array}{cc} \mathscr{P}_1(\kappa)\mathscr{P}_1(\eta) & \mathscr{P}_1(\kappa)\mathscr{P}_2(\eta)+\mathscr{P}_2(\kappa)\mathscr{P}_3(\eta) \\ {\bf 0}_{(2^n-s) \times s} & \mathscr{P}_3(\kappa+\eta) \end{array}\right)\\
&=\left( \begin{array}{cc} \mathscr{P}_1(\delta) & \mathscr{P}_2(\delta) \\ {\bf 0}_{(2^n-s) \times s} & \mathscr{P}_3(\delta) \end{array}\right).
\end{aligned}
\end{equation}

Since it has been proved that $\lim\limits_{\delta\rightarrow+\infty}\mathscr{P}_1(\delta)={\bf 1}_s^\top \otimes \vec{\omega}$ and $\lim\limits_{\delta\rightarrow+\infty}\mathscr{P}_3(\delta)={\bf 0}_{(2^n-2)\times(2^n-s)}$, we only need to verify that $\lim\limits_{\delta \rightarrow + \infty} \mathscr{P}_2(\delta)={\bf 1}^\top_{s}\otimes \vec{\omega} $, i.e., $\lim\limits_{\delta \rightarrow + \infty} \| \mathscr{P}_2(\delta)-{\bf 1}^\top_{s}\otimes \vec{\omega} \|=0$. Firstly, it holds that
\begin{equation*}
\begin{aligned}
&\| \mathscr{P}_2(\delta)-{\bf 1}^\top_{s}\otimes \vec{\omega} \|\\
&=\| \mathscr{P}_1(\kappa)\mathscr{P}_2(\eta)+\mathscr{P}_2(\kappa)\mathscr{P}_3(\eta)-{\bf 1}^\top_{s}\otimes \vec{\omega} \|\\
&=\|\mathscr{P}_1(\kappa)\mathscr{P}_2(\eta)-({\bf 1}^\top_{s}\otimes \vec{\omega}) \mathscr{P}_2(\eta) \\
&~~~~~~~~~~~~~ + ({\bf 1}^\top_{s}\otimes \vec{\omega}) \mathscr{P}_2(\eta) - {\bf 1}^\top_{s}\otimes \vec{\omega} + \mathscr{P}_2(\kappa)\mathscr{P}_3(\eta)\|\\
&\leq \|\mathscr{P}_1(\kappa)\mathscr{P}_2(\eta)-({\bf 1}^\top_{s}\otimes \vec{\omega}) \mathscr{P}_2(\eta)\| \\
&~~~~~~~~~~~~~ + \parallel({\bf 1}^\top_{s}\otimes \vec{\omega}) \mathscr{P}_2(\eta) - {\bf 1}^\top_{s}\otimes \vec{\omega} \parallel + \parallel\mathscr{P}_2(\kappa)\mathscr{P}_3(\eta)\|.
\end{aligned}
\end{equation*}

Subsequently, we define three items as
$$\chi_1(\delta)=\|\mathscr{P}_1(\kappa)\mathscr{P}_2(\eta)-({\bf 1}^\top_{s}\otimes \vec{\omega}) \mathscr{P}_2(\eta)\|,$$
$$\chi_2(\eta)=\parallel({\bf 1}^\top_{s}\otimes \vec{\omega}) \mathscr{P}_2(\eta) - {\bf 1}^\top_{s}\otimes \vec{\omega} \parallel,$$
and
$$\chi_3(\delta)=\parallel\mathscr{P}_2(\kappa)\mathscr{P}_3(\eta)\|.$$

Since $\lim\limits_{\delta\rightarrow+\infty}\mathscr{P}_1(\delta)={\bf 1}_s^\top \otimes \vec{\omega}$, for any $\hat{\varepsilon}=\frac{\varepsilon}{3}>0$, one can find an integer $N_1(\hat{\varepsilon})>0$ such that
$ \parallel\mathscr{P}_1(\kappa)-{\bf 1}_s^\top \otimes \omega \parallel \leq \hat{\varepsilon}$
holds for $\kappa \geq N_1(\hat{\varepsilon})$. With regard to $\chi_1(\delta)$, for such $\hat{\varepsilon}=\frac{\varepsilon}{3}>0$, let $\delta\geq N_1(\hat{\varepsilon})$, and then one has that
\begin{equation}
\begin{aligned}
&\|\mathscr{P}_1(\kappa)\mathscr{P}_2(\eta)-({\bf 1}^\top_{s}\otimes \vec{\omega}) \mathscr{P}_2(\eta)\|\\
& =\|[\mathscr{P}_1(\kappa)-({\bf 1}^\top_{s}\otimes \vec{\omega})]\mathscr{P}_2(\eta)\|\\
& \leq \|\mathscr{P}_1(\kappa)-({\bf 1}^\top_{s}\otimes \vec{\omega})\parallel \times \parallel\mathscr{P}_2(\eta)\| \leq \hat{\varepsilon}=\frac{\varepsilon}{3}.
\end{aligned}
\end{equation}

As for item $\chi_2(\eta)$, we rewrite the matrix $\mathscr{P}_2(t)$ as
\begin{equation}
\mathscr{P}_2(t)=\left(\mathscr{P}_2^{i,j}(t)\right)_{s \times (2^n-s)}.
\end{equation}
Because $\lim\limits_{\delta\rightarrow+\infty}\mathscr{P}_3(\delta)={\bf 0}_{(2^n-2)\times(2^n-s)}$ and $\parallel\text{Col}_j(\mathscr{P}(\delta))\parallel=1$, one can imply that for any $\bar{\varepsilon}=\frac{\varepsilon}{3(2^n-s)}>0$, there exists an integer $N_2(\bar{\varepsilon})>0$ satisfying that for any $\delta\geq N_2(\bar{\varepsilon})$ and any $j\in[1,2^n-s]$, it holds that
$\mid \parallel\text{Col}_j(\mathscr{P}_2(\delta))\parallel-1 \mid \leq \bar{\varepsilon}.$
It is equivalent to $\sum_{j=1}^{2^n-s}\mid\sum_{i=1}^{s}\mathscr{P}_2^{i,j}(\delta)-1\mid \leq (2^n-s)\bar{\varepsilon}=\frac{\varepsilon}{3}$. Therefore, for such $\bar{\varepsilon}$ and $\eta \geq N_2(\bar{\varepsilon})$, one has that
\begin{equation}
\begin{aligned}
\chi_2&=\parallel ({\bf 1}^\top_{s} \otimes \vec{\omega}) \mathscr{P}_2(\eta) - ({\bf 1}^\top_{s} \otimes \vec{\omega}) \parallel\\
&\leq \parallel ({\bf 1}^\top_{s} \otimes \vec{\omega}) \sum_{j=1}^{2^n-s}\sum_{i=1}^{s}(\mathscr{P}_2^{i,j}(\eta)-1) \parallel \leq (2^n-s)\bar{\varepsilon}=\frac{\varepsilon}{3}.
\end{aligned}
\end{equation}

For item $\chi_3(\delta)$. Since $\lim\limits_{\delta\rightarrow+\infty}\mathscr{P}_3(\delta)={\bf 0}_{(2^n-2)\times(2^n-s)}$, for any $\vec{\varepsilon}=\frac{\varepsilon}{3}>0$, there exists an integer $N_3(\vec{\varepsilon})>0$ such that $\|\mathscr{P}_3(\delta)\|<\frac{\varepsilon}{3}$ holds for all $\delta \geq N_3$. Therefore, one derives that
\begin{equation}
\begin{aligned}
\chi_3(\delta)&=\parallel \mathscr{P}_2(\kappa)\mathscr{P}_3(\eta) \parallel \\
&\leq \parallel \mathscr{P}_2(\kappa)\parallel \times \parallel\mathscr{P}_3(\eta) \parallel \leq \parallel\mathscr{P}_3(\eta)\parallel\leq\frac{\varepsilon}{3}.
\end{aligned}
\end{equation}

To sum up, for any $\varepsilon>0$, we can define an integer $N=N_1+\max\{N_2,N_3\}$ such that
\begin{equation}
\begin{aligned}
&\| \mathscr{P}_2(\delta)-{\bf 1}^\top_{s}\otimes \vec{\omega} \|\\
&\leq \|\mathscr{P}_1(N_1)\mathscr{P}_2(\eta-N_1)-({\bf 1}^\top_{s}\otimes \vec{\omega}) \mathscr{P}_2(\eta-N_1)\| \\
&~ + \parallel({\bf 1}^\top_{s}\otimes \vec{\omega}) \mathscr{P}_2(\eta-N_1) - {\bf 1}^\top_{s}\otimes \vec{\omega} \parallel + \parallel\mathscr{P}_2(N_1)\mathscr{P}_3(\eta-N_1)\|\\
&\leq \frac{\varepsilon}{3}+\frac{\varepsilon}{3}+\frac{\varepsilon}{3}=\varepsilon
\end{aligned}
\end{equation}
holds for any $\delta\geq N$.

Accordingly, one can conclude that $\lim\limits_{\delta\rightarrow+\infty}\| \mathscr{P}_2(\delta)-{\bf 1}^\top_{s}\otimes \vec{\omega} \|=0$; it implies that $\lim\limits_{\delta\rightarrow+\infty}\| \mathscr{P}_2(\delta)\parallel={\bf 1}^\top_{s}\otimes \vec{\omega}$. Therefore, one has that
\begin{equation}
\begin{aligned}
\lim_{\delta\rightarrow+\infty}&\mathscr{P}(\delta)=\lim_{\delta\rightarrow+\infty}\left( \begin{array}{cc} \mathscr{P}_1(\delta) & \mathscr{P}_2(\delta) \\ {\bf 0}_{(2^n-s) \times s} & \mathscr{P}_3(\delta) \end{array}\right)\\
&=\left( \begin{array}{cc} {\bf 1}^\top_{s}\otimes \vec{\omega} & {\bf 1}^\top_{s}\otimes \vec{\omega} \\ {\bf 0}_{(2^n-s) \times s} & {\bf 0}_{(2^n-s) \times s} \end{array}\right)={\bf 1}_s^\top \otimes \omega.
\end{aligned}
\end{equation}
It indicates that this \texttt{PBN} (\ref{equ-pbn-assr}) is globally \texttt{SPD} w.r.t. $\omega\in \nabla_{2^n}$, and the verification of Theorem \ref{thm-equivalence} is established.
\end{proof}

\section*{Acknowledgments}
The authors are sincerely grateful to Dr. Eyal Weiss for his constructive suggestions to the previous version of this paper.



\begin{thebibliography}{10}

\bibitem{kauffman1969jtb437}
S.~A. Kauffman, ``Metabolic stability and epigenesis in randomly constructed
  genetic nets,'' {\em Journal of Theoretical Biology}, vol.~22, no.~3,
  pp.~437--467, 1969.

\bibitem{davidson2002science1669}
E.~H. Davidson, J.~P. Rast, P.~Oliveri, A.~Ransick, C.~Calestani, C.-H. Yuh,
  T.~Minokawa, G.~Amore, V.~Hinman, C.~Arenas-Mena, {\em et~al.}, ``A genomic
  regulatory network for development,'' {\em Science}, vol.~295, no.~5560,
  pp.~1669--1678, 2002.

\bibitem{gaozg2017tcns770}
Z.~Gao, X.~Chen, and T.~Ba{\c{s}}ar, ``Controllability of conjunctive {Boolean}
  networks with application to gene regulation,'' {\em IEEE Transactions on
  Control of Network Systems}, vol.~5, no.~2, pp.~770--781, 2017.

\bibitem{chengdz2018auto51}
D.~Cheng and T.~Liu, ``From {B}oolean game to potential game,'' {\em
  Automatica}, vol.~96, pp.~51--60, 2018.

\bibitem{fagiolini2013auto2339}
A.~Fagiolini and A.~Bicchi, ``On the robust synthesis of logical consensus
  algorithms for distributed intrusion detection,'' {\em Automatica}, vol.~49,
  no.~8, pp.~2339--2350, 2013.

\bibitem{chengdz2011springer}
D.~Cheng, H.~Qi, and Z.~Li, {\em Analysis and Control of Boolean Networks: A
  Semi-Tensor Product Approach}.
\newblock London, U.K.: Springer-Verlag, 2011.

\bibitem{chengdz2009tac1659}
D.~Cheng and H.~Qi, ``Controllability and observability of {B}oolean control
  networks,'' {\em Automatica}, vol.~45, no.~7, pp.~1659--1667, 2009.

\bibitem{margaliot2012aut1218}
D.~Laschov and M.~Margaliot, ``Controllability of {B}oolean control networks
  via the {P}erron-{F}robenius theory,'' {\em Automatica}, vol.~48, no.~6,
  pp.~1218--1223, 2012.

\bibitem{zhaoy2010scl767}
Y.~Zhao, H.~Qi, and D.~Cheng, ``Input-state incidence matrix of {B}oolean
  control networks and its applications,'' {\em Systems \& Control Letters},
  vol.~59, no.~12, pp.~767--774, 2010.

\bibitem{lujq2016ieeetac1658}
J.~Lu, J.~Zhong, C.~Huang, and J.~Cao, ``On pinning controllability of
  {B}oolean control networks,'' {\em IEEE Transactions on Automatic Control},
  vol.~61, no.~6, pp.~1658--1663, 2016.

\bibitem{Valcher2012TAC1390}
E.~Fornasini and M.~E. Valcher, ``Observability, reconstructibility and state
  observers of {B}oolean control networks,'' {\em IEEE Transactions on
  Automatic Control}, vol.~58, no.~6, pp.~1390--1401, 2012.

\bibitem{zhangkz2020tac}
K.~Zhang and K.~H. Johansson, ``Efficient verification of observability and
  reconstructibility for large {B}oolean control networks with special
  structures,'' {\em IEEE Transactions on Automatic Control}, vol.~65, no.~12,
  pp.~5144--5158, 2020.

\bibitem{guoyq2015auto106}
Y.~Guo, P.~Wang, W.~Gui, and C.~Yang, ``Set stability and set stabilization of
  {B}oolean control networks based on invariant subsets,'' {\em Automatica},
  vol.~61, pp.~106--112, 2015.

\bibitem{lir2013ieeetac1853}
R.~Li, M.~Yang, and T.~Chu, ``State feedback stabilization for {B}oolean
  control networks,'' {\em IEEE Transactions on Automatic Control}, vol.~58,
  no.~7, pp.~1853--1857, 2013.

\bibitem{valcher2013tac1258}
E.~Fornasini and M.~E. Valcher, ``Optimal control of {B}oolean control
  networks,'' {\em IEEE Transactions on Automatic Control}, vol.~59, no.~5,
  pp.~1258--1270, 2013.

\bibitem{wuyh2019auto378}
Y.~Wu, X.-M. Sun, X.~Zhao, and T.~Shen, ``Optimal control of {B}oolean control
  networks with average cost: {A} policy iteration approach,'' {\em
  Automatica}, vol.~100, pp.~378--387, 2019.

\bibitem{Yuyongyuan2019TAC}
Y.~Yu, J.-e. Feng, J.~Pan, and D.~Cheng, ``Block decoupling of {B}oolean
  control networks,'' {\em IEEE Transactions on Automatic Control}, vol.~64,
  no.~8, pp.~3129--3140, 2019.

\bibitem{zhujd2020TAC}
Y.~Li, J.~Zhu, B.~Li, Y.~Liu, and J.~Lu, ``A necessary and sufficient graphic
  condition for the original disturbance decoupling of {B}oolean networks,''
  {\em IEEE Transactions on Automatic Control}, to be published,doi:
  10.1109/TAC.2020.3025507.

\bibitem{sontag2013mathematical}
E.~D. Sontag, {\em Mathematical control theory: deterministic finite
  dimensional systems}, vol.~6.
\newblock Springer Science \& Business Media, 2013.

\bibitem{akutsu2007jtb670}
T.~Akutsu, M.~Hayashida, W.~K. Ching, and M.~K. Ng, ``Control of {B}oolean
  networks: Hardness results and algorithms for tree structured networks,''
  {\em Journal of Theoretical Biology}, vol.~244, no.~4, pp.~670--679, 2007.

\bibitem{liangjl2017tac6012}
J.~Liang, H.~Chen, and J.~Lam, ``An improved criterion for controllability of
  {B}oolean control networks,'' {\em IEEE Transactions on Automatic Control},
  vol.~62, no.~11, pp.~6012--6018, 2017.

\bibitem{zhuqx2018tac}
Q.~Zhu, Y.~Liu, J.~Lu, and J.~Cao, ``Further results on the controllability of
  {B}oolean control networks,'' {\em IEEE Transactions on Automatic Control},
  vol.~64, no.~1, pp.~440--442, 2019.

\bibitem{Azuma2019TCNS464}
S.-i. Azuma, T.~Yoshida, and T.~Sugie, ``Structural oscillatority analysis of
  {B}oolean networks,'' {\em IEEE Transactions on Control of Network Systems},
  vol.~6, no.~2, pp.~464--473, 2019.

\bibitem{Azuma2015TCNS179}
S.-i. Azuma, T.~Yoshida, and T.~Sugie, ``Structural monostability of
  activation-inhibition {B}oolean networks,'' {\em IEEE Transactions on Control
  of Network Systems}, vol.~4, no.~2, pp.~179--190, 2015.

\bibitem{linct1974tac-structural}
C.-T. Lin, ``Structural controllability,'' {\em IEEE Transactions on Automatic
  Control}, vol.~19, no.~3, pp.~201--208, 1974.

\bibitem{mayeda1979siam-strongcontrollability}
H.~Mayeda and T.~Yamada, ``Strong structural controllability,'' {\em SIAM
  Journal on Control and Optimization}, vol.~17, no.~1, pp.~123--138, 1979.

\bibitem{liuyy2011nature167}
Y.-Y. Liu, J.-J. Slotine, and A.-L. Barab{\'a}si, ``Controllability of complex
  networks,'' {\em Nature}, vol.~473, no.~7346, pp.~167--173, 2011.

\bibitem{Margaliot2019TAC2727}
E.~Weiss and M.~Margaliot, ``A polynomial-time algorithm for solving the
  minimal observability problem in conjunctive {B}oolean networks,'' {\em IEEE
  Transactions on Automatic Control}, vol.~64, no.~7, pp.~2727--2736, 2019.

\bibitem{cho2011scientific}
A.~Cho, ``Scientific link-up yields `control panel' for networks,'' {\em
  Science}, vol.~332, no.~6063.

\bibitem{chenhw2016scis}
H.~Chen, J.~Liang, and Z.~Wang, ``Pinning controllability of autonomous
  {Boolean} control networks,'' {\em Science China--Information Sciences},
  vol.~59, no.~7, Article ID: 070107, 2016.

\bibitem{liuzq2020tcns}
Z.~Liu, D.~Cheng, and J.~Liu, ``Pinning control of {B}oolean networks via
  injection mode,'' {\em IEEE Transactions on Control of Network Systems}, to
  be published, doi: 10.1109/TCNS.2020.3037455.

\bibitem{zhaoy2015tnnls}
Y.~Zhao, B.~K. Ghosh, and D.~Cheng, ``Control of large-scale {Boolean} networks
  via network aggregation,'' {\em IEEE Transactions on Neural Networks and
  Learning Systems}, vol.~27, no.~7, pp.~1527--1536, 2015.

\bibitem{Shmulevich2002PBN}
I.~Shmulevich, E.~R. Dougherty, S.~Kim, and W.~Zhang, ``Probabilistic {Boolean}
  networks: a rule-based uncertainty model for gene regulatory networks.,''
  {\em Bioinformatics}, vol.~18, no.~2, pp.~261--274, 2002.

\bibitem{huangc2020ins205}
C.~Huang, J.~Lu, D.~W.~C. Ho, G.~Zhai, and J.~Cao, ``Stabilization of
  probabilistic {B}oolean networks via pinning control strategy,'' {\em
  Information Sciences}, vol.~510, pp.~205--217, 2020.

\bibitem{huangc2020tnnls}
C.~Huang, J.~Lu, G.~Zhai, J.~Cao, G.~Lu, and M.~Perc, ``Stability and
  stabilization in probability of probabilistic {B}oolean networks,'' {\em IEEE
  Transactions on Neural Networks and Learning Systems}, vol.~32, no.~1,
  pp.~241--251, 2021.

\bibitem{margaliot2018auto56}
E.~Weiss, M.~Margaliot, and G.~Even, ``Minimal controllability of conjunctive
  boolean networks is {NP}-complete,'' {\em Automatica}, vol.~92, pp.~56--62,
  2018.

\bibitem{liff2020tcns1523}
F.~Li, D.~Wang, and Y.~Tang, ``Pinning controllability of $k$-valued logical
  systems,'' {\em IEEE Transactions on Control of Network Systems}, vol.~7,
  no.~3, pp.~1523--1533, 2020.

\bibitem{jeong2000nature}
H.~Jeong, B.~Tombor, R.~Albert, Z.~N. Oltvai, and A.-L. Barab{\'a}si, ``The
  large-scale organization of metabolic networks,'' {\em Nature}, vol.~407,
  no.~6804, pp.~651--654, 2000.

\bibitem{jeong2001nature}
H.~Jeong, S.~P. Mason, A.-L. Barab{\'a}si, and Z.~N. Oltvai, ``Lethality and
  centrality in protein networks,'' {\em Nature}, vol.~411, no.~6833,
  pp.~41--42, 2001.

\bibitem{liht2020scis}
S.~Liang, H.~Li, and S.~Wang, ``Structural controllability of {Boolean} control
  networks with an unknown function structure,'' {\em Science
  China--Information Sciences}, vol.~63, no.~11, Article ID: 219203, 2020.

\bibitem{minimumvertexcover}
R.~M. Karp, ``Reducibility among combinatorial problems,'' in {\em Complexity
  of {C}omputer {C}omputations}, R. E. Miller and J. W. Thatcher, Eds. New
  York: Plenum, 1972, pp. 85--104.

\bibitem{zhaoyin2013tac1976}
Y.~Zhao, J.~Kim, and M.~Filippone, ``Aggregation algorithm towards large-scale
  {Boolean} network analysis,'' {\em IEEE Transactions on Automatic Control},
  vol.~58, no.~8, pp.~1976--1985, 2013.

\bibitem{liff2020cyber}
F.~Li and Y.~Tang, ``Pinning controllability for a {B}oolean network with
  arbitrary disturbance inputs,'' {\em IEEE Transactions on Cybernetics}, to be
  published, doi: 10.1109/TCYB.2019.2930734.

\bibitem{zhongjie2019new}
J.~Zhong, D.~W. Ho, and J.~Lu, ``A new framework for pinning control of
  {Boolean} networks,'' arXiv preprint arXiv:1912.01411, 2019.

\bibitem{Liff2016TNNLS1585}
F.~Li, ``Pinning control design for the stabilization of {B}oolean networks,''
  {\em IEEE Transactions on Neural Networks and Learning Systems}, vol.~27,
  no.~7, pp.~1585--1590, 2016.

\bibitem{zhusy2019tac}
S.~Zhu, J.~Lu, and Y.~Liu, ``Asymptotical stability of probabilistic {B}oolean
  networks with state delays,'' {\em IEEE Transactions on Automatic Control},
  vol.~65, no.~4, pp.~1779--1784, 2020.

\bibitem{markovchains}
S.~P. Meyn and R.~L. Tweedie, {\em Markov {C}hains and {S}tochastic
  {S}tability}.
\newblock Springer Science \& Business Media, 2012.

\bibitem{robert2012discrete}
F.~Robert, {\em Discrete {I}terations: {A} {M}etric {S}tudy}.
\newblock Springer Science \& Business Media, 2012.

\bibitem{klamt2006methodology}
S.~Klamt, J.~Saez-Rodriguez, J.~A. Lindquist, L.~Simeoni, and E.~D. Gilles, ``A
  methodology for the structural and functional analysis of signaling and
  regulatory networks,'' {\em BMC Bioinformatics}, vol.~7, no.~1, Article ID:
  56, 2006.

\bibitem{guo2018tacPBNstability}
Y.~Guo, R.~Zhou, Y.~Wu, W.~Gui, and C.~Yang, ``Stability and set stability in
  distribution of probabilistic {Boolean} networks,'' {\em IEEE Transactions on
  Automatic Control}, vol.~64, no.~2, pp.~736--742, 2018.

\end{thebibliography}

\end{document}